\newcommand{\ud}{\mathrm{d}}
\newcommand{\ui}{\mathrm{i}}
\newcommand{\ue}{\mathrm{e}}
\newcommand{\eins}{\mathds{1}}
\newcommand{\vl}{\boldsymbol{l}}
\newcommand{\vy}{\boldsymbol{y}}
\newcommand{\M}{\mathrm{M}}
\newcommand{\cD}{{\mathcal D}}
\newcommand{\cH}{{\mathcal H}}
\newcommand{\kz}{{\mathbb C}}
\DeclareMathOperator{\tr}{Tr}
\DeclareMathOperator{\sgn}{sgn}
\newtheorem{theorem}{Theorem}[section]
\newtheorem{lemma}[theorem]{Lemma}
\newtheorem{prop}[theorem]{Proposition}
\theoremstyle{definition}
\newtheorem{defn}[theorem]{Definition}
\numberwithin{equation}{section}
\begin{document}

\thispagestyle{empty}

\vspace*{1cm}

\begin{center}

{\LARGE\bf Many-particle quantum graphs: A review} \\

\vspace*{2cm}

{\large Jens Bolte}%
\footnote{E-mail address: {\tt jens.bolte@rhul.ac.uk}}

\vspace*{0.5cm}

Department of Mathematics\\
Royal Holloway, University of London\\
Egham, TW20 0EX\\
United Kingdom\\

\vspace*{1cm}

{\large and}

\vspace*{1cm}

{\large Joachim Kerner}%
\footnote{E-mail address: {\tt Joachim.Kerner@fernuni-hagen.de}} 

\vspace*{0.5cm}

Department of Mathematics and Computer Science\\
FernUniversit\"{a}t in Hagen\\
58084 Hagen\\
Germany\\

\end{center}

\vfill

\begin{abstract} 
In this paper we review recent work that has been done on quantum many-particle 
systems on metric graphs. Topics include the implementation of singular interactions, 
Bose-Einstein condensation, sovable models and spectral properties of some simple 
models in connection with superconductivity in wires.
\end{abstract}

\newpage

\section{Introduction}
Quantum graph models describe the motion of particles along the edges of a 
metric graph. They have become popular models in various areas of physics and
mathematics as they combine the simplicity of one-dimensional models with the
potential complexity of graphs. One-particle quantum graphs and their
applications are describe in detail in 
\cite{KotSmi99,GNUSMY06,Exnetal08,Berkolaiko:2013}. 

Many-particle quantum systems are of fundamental importance in condensed
matter as well as in statistical physics, see \cite{MR04,SchwablSM,cazalilla2011one}. 
In particular, phenomena like Bose-Einstein condensation, Anderson localisation and 
superconductivity have attracted much attention both in a phenomenological and 
a mathematical context. However, those phenomena are notoriously difficult to 
address, so that models that are promising to yield interesting results while still being
sufficiently accessible are in demand. This was a major reason to develop and 
study quantum many-particle models on graphs. Another reason lies in the growing 
importance of one-dimensional, nano-technological devices~\cite{QuantumWellsWiresDots,GG08}. 

Among early quantum graph examples are models of two particles with singular 
interactions on simple graphs \cite{MP95,CauCra07,Ha07,Ha08}, where some 
basic spectral properties were studied. Other approaches involve quantum field 
theory on graphs (see, e.g., \cite{BelMin06,Sch09}) where, due to the presence 
of vertices and the finite lengths of edges, translation invariance is broken. This 
leads to the presence of symmetry algebras that are of interest in their own 
right \cite{MinRag04}. In the context of quantum integrability, these symmetry 
algebras play a role in the construction of many-particle quantum models on 
graphs in which one can represent eigenfunctions in terms of a Bethe-ansatz 
\cite{CauCra07,BolGar17,BolGar17a}. In this context also extensive studies 
of non-linear Schr\"odinger equations (see, e.g., \cite{Noj14,Cau15}) are of 
particular interest.

The phenomenon of Anderson localisation, which is known to occur in a large
class of systems governed by random Schr\"odinger operators
\cite{SimonBookSchrödinger,stollmann2001caught}, has been investigated for 
interacting particles on graphs in \cite{Sab14}.

When particles are indistinguishable, the particle exchange symmetry has to be 
implemented. In three or more dimensions this leads to the well-known Fermi-Bose 
alternative. However, in lower dimensions more options may become available 
including, e.g., the possibility of anyons in two dimensions \cite{LeiMyr77}. 
In models of discrete quantum graphs the possible exchange symmetry representations 
were identified in \cite{HarKeaRob11,HarKeaRobSaw14}, and a whole range of exotic 
options were found.

In this paper we mainly review our own contributions to many-particle quantum
graphs. This includes the construction of two types of singular pair interactions.
The first one \cite{BKSingular} is closely related to vertices and can be seen
as a model of interactions between particles that is mediated by the presence
of an impurity (thought of as being located in a vertex); this type of interactions
is similar to the one introduced in \cite{MP95}. The second type of singular
interactions \cite{BKContact} are the more familiar $\delta$-pair interactions.
They are models of very short-range, or contact interactions. When implemented
for bosons, these interactions lead to a Lieb-Liniger gas \cite{LL63} on a graph,
and in the limit of hardcore interaction they lead to a Tonks-Giradeau gas 
\cite{G60}. For all of these models it has been shown that they can be rigorously 
implemented with self-adjoint Hamiltonians and it has been proven that their
spectra are discrete and the eigenvalues follow a Weyl law.

Due to a well-known theorem of Hohenberg~\cite{Hohenberg}, free Bose gases in one 
dimension are often said to not display Bose-Einstein condensation (BEC). This 
statement, however, is only true if in finite volume one imposes Dirichlet or 
other standard boundary conditions. It has been known though that non-standard 
boundary conditions may lead to cases where a finite number of eigenvalues are 
negative and remain so in the thermodynamic limit, such that in this limit a 
spectral gap below the continuum develops. Such a scenario then leads to BEC into 
the negative-energy ground state \cite{LanWil79,VerbeureBook}. A similar behaviour 
occurs for free bosons on graphs, and it is possible to fully characterise all 
boundary conditions where this is the case~\cite{BolteKernerBEC}. For a gas of 
bosons with pairwise repulsive hardcore interactions, a suitable Fermi-Bose mapping, 
however, shows that no condensation can be expected. Furthermore, it can be shown 
that arbitrarily small repulsive pair interactions prohibit a Bose gas on a graph
to condense into the free ground state~\cite{BolKer16}. 

In statistical mechanics solvable models play a significant role. In this context 
solvability refers to the fact that eigenfunctions of the Hamiltonan have a simple
representation in terms of a so-called Bethe ansatz \cite{Bet31,Gau14}. This form
of the eigenfunctions also leads to a characterisation of eigenvalues through 
finitely many secular equations. The Lieb-Liniger gas \cite{LL63} of $N$ bosons
with $\delta$-interactions on a circle is an example of a solvable model and its
version on an interval \cite{Gau71} can be seen as a first example where vertices
play a role. Vertices of degree two (and higher) present obstacles to the solvability
of models with $\delta$-interactions. A modification of the interactions that
preserves solvability when one vertex of degree two is present was found in
\cite{CauCra07}. The basic idea behind this construction can be extended to 
arbitrarily (finitely) many vertices of any (finite) degree \cite{BolGar17}, as
well as to any (finite) number of particles \cite{BolGar17a}.

In the final section we are concerned with a two-particle model on a simple 
non-compact quantum graph, namely the half-line $\mathbb{R}_+$, which can be thought 
of as a quantum wire. Besides singular interactions localised on the vertex at 
zero~\cite{KM16,EggerKerner} and contact interactions of the Lieb-Liniger type, 
we introduce a binding potential that leads to a pairing of the two particles 
\cite{KMBound, KernerElectronPairs,KernerInteractingPairs}. We also provide 
generalisations of this model by considering singular two-particle interactions  
whose locations are randomly distributed along the half-line~\cite{KernerRandomI}, 
and by taking into account surface defects in coupling the continuous half-line to 
a discrete graph~\cite{KernerSurfaceDefects}. In all of these cases we are mainly 
interested in describing spectral properties of the associated Hamiltonians. Using 
the acquired knowledge about the spectrum we are able to investigate Bose-Einstein 
condensation of pairs. These results can be seen as statements about superconductivity 
in quantum wires.

\section{Preliminaries}
\label{Sec1}
\subsection{One-particle quantum graphs}
A quantum graph is a metric graph $\Gamma$ with a differential operator 
that serves as Hamiltonian operator describing the motion of a particle 
along the edges of the graph, see \cite{GNUSMY06,Berkolaiko:2013}. 
A metric graph is a (finite) combinatorial graph with a metric structure 
that arises from assigning lengths to edges. Let $\mathcal{V}$ be the 
set of vertices and 
$\mathcal{E}=\mathcal{E}_\mathrm{int}\cup\mathcal{E}_\mathrm{ext}$ 
be the set of edges. Then every $e\in\mathcal{E}_\mathrm{int}$, an internal 
edge, is adjacent to two distinct vertices, and every 
$e\in\mathcal{E}_\mathrm{ext}$, an external edge, is adjacent to a single 
vertex. A metric structure is introduced by assigning finite lengths to 
internal edges; external edges are considered to be of infinite length. 
In this way every $e\in\mathcal{E}_\mathrm{int}$ is identified with an 
interval $[0,l_e]$ whereas every $e\in\mathcal{E}_\mathrm{ext}$ is 
identified with a copy of the real semi-axis $[0,\infty)$. Graphs without 
external edges, $\mathcal{E}=\mathcal{E}_\mathrm{int}$, are compact.

One now introduces functions on $\Gamma$,
\begin{equation}
\psi = (\psi_1,\dots,\psi_E)\ ,
\end{equation}
where $E=|\mathcal{E}|$ and $\psi_e :[0,l_e]\to\mathbb{C}$ for internal 
edges and $\psi_e :[0,\infty)\to\mathbb{C}$ for external edges. In this
way one defines the Hilbert space
\begin{equation}
L^2(\Gamma) := \bigoplus_{e\in\mathcal{E}_\mathrm{int}} L^2(0,l_e)
\bigoplus_{e\in\mathcal{E}_\mathrm{ext}} L^2(0,\infty)\ ,
\end{equation}
as well as the Sobolev spaces 
\begin{equation}
H^m(\Gamma) := \bigoplus_{e\in\mathcal{E}_\mathrm{int}} H^m(0,l_e)
\bigoplus_{e\in\mathcal{E}_\mathrm{ext}} H^m(0,\infty)\ .
\end{equation}
A Hamiltonian operator $H=-\Delta+V$ is a self-adjoint operator (in many cases with 
domain $\mathcal{D}\subset H^2(\Gamma)$) that acts on functions on an 
edge as
\begin{equation}
(H\psi)_e = -\psi_e'' + V_e\psi_e \ , 
\end{equation}
where $V=(V_1,\dots,V_E)$ is a potential function.
In many quantum graph models, however, one considers the case $V=0$.

In the following we shall restrict our attention to compact graphs, although
the examples in Section~\ref{Sec3} will be non-compact; the necessary
modifications are more or less obvious.
 
In order to characterise domains $\mathcal{D}$ of self-adjointness 
one has to impose boundary conditions at the vertices on functions 
in the domain. We denote boundary values of functions as
\begin{equation}
\psi_{bv} = \bigl(\psi_1(0),\dots\psi_E(0),\psi_1(l_1),\dots,
           \psi_E(l_E)\bigr) \ ,
\end{equation}
and of inward derivatives as 
\begin{equation}
\psi'_{bv} = \bigl(\psi'_1(0),\dots\psi'_E(0),-\psi'_1(l_1),\dots,
            -\psi'_E(l_E)\bigr) \ .
\end{equation}
Self-adjoint realisations of $H$ can be obtained as maximal symmetric 
extensions of the operator with minimal domain $C_0^\infty(\Gamma)$
(see, e.g., \cite{KosSch99}). Their domains can be uniquely parametrised 
in terms of an orthogonal projector $P$ and a self-adjoint map $L$, such 
that $P^\perp LP^\perp=L$, on the space $\mathbb{C}^{2E}$ of boundary 
values, as \cite{Kuc04}
\begin{equation}
\label{1Pdomain}
\mathcal{D}(P,L) = \bigl\{ \psi\in H^2(\Gamma):\ (P+L)\psi_{bv}+
  P^\perp\psi'_{bv}=0 \bigr\} \ .
\end{equation}
It is often useful to work with quadratic forms instead of self-adjoint 
operators, making use of the fact that a semi-bounded (from below) 
self-adjoint operator defines a unique, semi-bounded and closed quadratic
form, and vice versa~\cite{BEH08}. The form associated with a quantum graph 
Laplacian $-\Delta$ on the domain \eqref{1Pdomain} is \cite{Kuc04}
\begin{equation}
\mathcal{Q}[\psi] = \int_\Gamma |\nabla\psi|^2\ \ud x 
-\langle\psi_{bv},L\psi_{bv}\rangle_{\mathbb{C}^{2E}} \ ,
\end{equation}
with form domain
\begin{equation}
\label{1Pformdomain}
\mathcal{D}_{\mathcal{Q}} = \bigl\{ \psi\in H^1(\Gamma):
\ P\psi_{bv}=0 \bigr\} \ .
\end{equation}
The boundary conditions prescribed in \eqref{1Pdomain} and 
\eqref{1Pformdomain} do not necessarily respect the connectivity of 
the combinatorial graph. The latter will, however, be the case for 
local boundary conditions, where
\begin{equation}
P=\bigoplus_{v\in\mathcal{V}}P_v \quad\text{and}\quad
L=\bigoplus_{v\in\mathcal{V}}L_v \ ,
\end{equation}
and $P_v$, $L_v$ act on the subspace $\mathbb{C}^{d_v}$ of boundary
values on the edge ends adjacent to the vertex $v\in\mathcal{V}$. Here
$d_v$ is the degree of the vertex $v$.

A quantum graph Hamiltonian $H=-\Delta+V$ defined on a domain 
\eqref{1Pdomain} is self-adjoint, bounded from below, and has compact 
resolvent (note that the latter fails to hold for non-compact graphs). 
Hence its spectrum is real, bounded from below, discrete
and eigenvalues accumulate only at infinity. In the most relevant case
of $V=0$, one can characterise eigenvalues through a secular determinant.
One first defines a (vertex) scattering matrix 
\begin{equation}
S(k) : = -P-(L+\ui kP^\perp)^{-1}(L-\ui kP^\perp)\ ,
\end{equation}
where $k\in\mathbb{C}$ is such that $k^2$ is a spectral parameter for
$-\Delta$, and then a matrix
\begin{equation}
T(k) : = \begin{pmatrix} 0 & \ue^{\ui k\boldsymbol{l}} \\ 
\ue^{\ui k\boldsymbol{l}} & 0 \end{pmatrix}
\end{equation}
encoding the metric information about $\Gamma$. Here $\ue^{\ui k\boldsymbol{l}}$
is a diagonal $E\times E$ matrix with diagonal entries $\ue^{\ui kl_e}$, 
$e=1,\dots,E$. Defining $U(k):=S(k)T(k)$, one can show \cite{KS06} that $k^2$ is
a non-zero eigenvalues of $-\Delta$ of multiplicity $m(k)$, iff $k$
is a zero of
\begin{equation}
\label{1Psecular}
\det\bigl(\eins-U(k)\bigr)
\end{equation}
of order $m(k)$. An eigenvalue zero has to be treated separately, see
\cite{KS06,BolEnd09,BolEggSte15}. A similar, slightly more complicated condition 
can be obtained for operators of the form $H=-\Delta+V$, see \cite{BolEggRue15}.

The secular equation \eqref{1Psecular} can be used to derive a trace formula 
\cite{Rot83,KotSmi99,BolEnd09} that expresses spectral functions in terms of 
sums over periodic orbits on the graph.

\subsection{Many-particle kinematics}
Following the general construction of systems of several (distinguishable)
particles from given one-particle systems in quantum mechanics, the
Hilbert space of $N$ distinguishable particles on a metric graph $\Gamma$
is
\begin{equation}
\label{NHilbert}
\mathcal{H}_N = L^2(\Gamma) \otimes\cdots\otimes L^2(\Gamma) \ .
\end{equation}
Vectors in the tensor product are collections of functions 
$\psi_{e_1\dots e_N}\in L^2([0,l_{e_1}]\times\dots\times [0,l_{e_N}])$. These are
functions of $N$ variables describing the positions of the particles
on the $N$ edges $e_1,\dots,e_N$, which do not need to be all different.
In a slight abuse of notation we shall view these collections of functions
as functions on the domain
\begin{equation}
\label{Ndomain}
D^{(N)}_\Gamma := \bigcup_{e_1\dots e_N}
(0,l_{e_1})\times\dots\times (0,l_{e_N}) \ ,
\end{equation}
such that we shall also use the notation $\mathcal{H}_N = L^2(D^{(N)}_\Gamma)$.

$N$-particle observables are self-adjoint operators on $\mathcal{H}_N$.
An operator $O$ that respects the tensor product structure \eqref{NHilbert}
of the Hilbert space,
\begin{equation}
\label{freeop}
O = \sum_{i=1}^N \eins\otimes\dots\otimes\eins\otimes O_i 
\otimes\eins\dots\otimes\eins\ ,
\end{equation}
does not detect any correlations or interactions between particles.
A Hamiltonian describing particle interactions, therefore, cannot be
of this product form. In other words, particle interactions will be 
implemented by choosing a Hamiltonian that does not have the product 
structure. This can be achieved either in the form of, say, a potential
\begin{equation}
\label{2Ppotential}
V(x_1,\dots,x_N) = \sum_{i,j=1}^{N} V_p (x_i ,x_j)
\end{equation}
with explicit pair interactions. However, one can also implement
interactions by choosing an operator domain for an $N$-particle Laplacian
$-\Delta_N$, acting as
\begin{equation}
(-\Delta_N \psi)_{e_1\dots e_N} =
-\frac{\partial^2\psi_{e_1\dots e_N}}{\partial x_{e_1}^2}-\dots 
-\frac{\partial^2\psi_{e_1\dots e_N}}{\partial x_{e_N}^2}\ ,
\end{equation}
that does not respect the tensor product structure for the operator.

When the $N$ particles are indistinguishable, the exchange symmetry has 
to be implemented in the kinematic set up of the quantum system. 
If one adopts the Bose-Fermi alternative, the only relevant representations
of the symmetric group will be the totally symmetric one (for bosons) and 
the totally anti-symmetric one (for fermions). The quantum state
spaces then are the totally symmetric and the totally anti-symmetric 
subspaces $\mathcal{H}_{N,B}$ and $\mathcal{H}_{N,F}$, respectively, 
of \eqref{NHilbert}

\section{Singular pair interactions}
\label{Sec2}
A possible way of introducing interactions is to violate the tensor product structure 
\eqref{freeop} with boundary conditions, either at the boundaries of the domains
$[0,l_{e_1}]\times\dots\times [0,l_{e_N}]$, or at additional boundaries introduced for the
purpose of generating other types of interactions. Typically, such boundary conditions
will lead to singular interactions that can formally be expressed in terms of 
$\delta$-functions, see e.g.~\cite{BrascheExnerKuperinSeba92,Behrndt2013}.
\subsection{Vertex-induced singular interactions}
\label{Sec2.1}
Boundary conditions imposed at the boundaries of $[0,l_{e_1}]\times\dots\times [0,l_{e_N}]$
alone correspond to interactions that act when at least one particle sits in a vertex 
(corresponding to $x_{e_j}=0$ or $x_{e_j}=l_{e_j}$). Hence we say that such interactions 
are vertex induced. An example for a pair of particles on the same edge (of length $l$) 
would be the two-dimensional Laplacian plus a formal potential of the form
\begin{equation}
v(x_1,x_2)\bigl[ \delta(x_1)  + \delta(x_1-l) +  \delta(x_2)  + \delta(x_2-l) \bigr] \ .
\end{equation}
A version of such an interaction on a $Y$-shaped graph can be found in
\cite{MP95}.

Constructing $N$-particle Laplacians with boundary conditions is not as straight
forward as for one-particle Laplacians. The reason for this is that the minimal 
symmetric operator, which is an $N$-particle Laplacian with domain 
$C_0^\infty(D^{(N)}_\Gamma)$, does not have finite deficiency indices. For that reason 
it is more appropriate to construct self-adjoint realisations of the $N$-particle 
Laplacian via their associated sesqui-linear forms.

In the following we restrict our attention to $N=2$, noting that this case contains
all the essential steps in order to construct $N$-particle Hamiltonians with pair
interactions. As a first step we simplify the notation in that we define
\begin{equation}
\label{lscale}
 \psi_{e_1 e_2}(x_{e_1},y_{e_2}) = \psi_{e_1 e_2}(l_{e_1}x,l_{e_2}y)
\end{equation}
with $x,y\in (0,1)$. The $4E^2$ boundary values of functions 
$\psi\in H^1(D^{(2)}_\Gamma)$ and derivatives of functions
$\psi\in H^2(D^{(2)}_\Gamma)$ then are
\begin{equation}
\label{graphbv}
 \psi_{bv}(y) =
 \begin{pmatrix}\sqrt{l_{e_2}}\psi_{e_1 e_2}(0,l_{e_2}y) \\
 \sqrt{l_{e_2}}\psi_{e_1 e_2}(l_{e_1},l_{e_2}y) \\
 \sqrt{l_{e_1}}\psi_{e_1 e_2}(l_{e_1}y,0) \\
 \sqrt{l_{e_1}}\psi_{e_1 e_2}(l_{e_1}y,l_{e_2})
 \end{pmatrix} \qquad\text{and}\qquad
 \psi'_{bv}(y) =
 \begin{pmatrix}\sqrt{l_{e_2}}\psi_{e_1 e_2,x}(0,l_{e_2}y) \\
 -\sqrt{l_{e_2}}\psi_{e_1 e_2,x}(l_{e_1},l_{e_2}y)\\
 \sqrt{l_{e_1}}\psi_{e_1 e_2,y}(l_{e_1}y,0) \\
 -\sqrt{l_{e_1}}\psi_{e_1 e_2,y}(l_{e_1}y,l_{e_2})\end{pmatrix} \ .
\end{equation}
Here $y\in [0,1]$ and the indices $e_1 e_2$ run over all $E^2$ possible pairs
with $e_1 ,e_2 =1,\dots,E$. 

With the one-particle form domain \eqref{1Pformdomain} in mind we now
introduce bounded and measurable maps $P,L: [0,1] \to \M(4E^2,\kz)$ such 
that for a.e. $y \in [0,1]$,
\begin{enumerate}
\item $P(y)$ is an orthogonal projector,
\item $L(y)$ is a self-adjoint endomorphism on $\ker P(y)$.
\end{enumerate}
With these maps we can define the quadratic form,
\begin{equation}
\label{Qform2graph}
\begin{split}
 Q^{(2)}_{P,L}[\psi] 
   &:= \langle  \nabla\psi,\nabla\psi \rangle_{L^2 (D_\Gamma)} - 
         \langle \psi_{bv},L(\cdot)\psi_{bv} \rangle_{L^2(0,1)\otimes\kz^{4E^2}} \\
   &=\sum_{e_1 ,e_2 =1}^E \int_{0}^{l_{e_2}}\int_0^{l_{e_1}}\Bigl( \bigl|
      \psi_{e_1 e_2,x}(x,y) \bigr|^2 + \bigl| \psi_{e_1 e_2,y}
      (x,y) \bigr|^2 \Bigr)\ \ud x\,\ud y \\
   &\qquad\qquad -\int_0^1 \langle\psi_{bv}(y),L(y)\psi_{bv}(y)
       \rangle_{\kz^{4E^2}} \ \ud y \ ,
\end{split}
\end{equation}
and prove the following result \cite{BKSingular}.
\begin{theorem}
\label{2quadformgraph}
Given maps $P,L:[0,1]\to \M(4E^2,\kz)$ as above that are bounded and measurable,
the quadratic form \eqref{Qform2graph} with domain
\begin{equation}
\label{Defquadgraph}
 \cD_{Q^{(2)}} = \{ \psi \in H^1(D_\Gamma):\ P(y)\psi_{bv}(y)=0\ \text{for a.e.}\
 y\in [0,1] \}
\end{equation}
is closed and semi-bounded (from below).
\end{theorem}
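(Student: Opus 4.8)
The plan is to establish the two required properties—semi-boundedness from below and closedness—separately, with semi-boundedness serving as the technically easier preliminary that also underpins the closedness argument.

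First I would prove semi-boundedness. The kinetic term $\langle\nabla\psi,\nabla\psi\rangle$ is manifestly non-negative, so the only obstacle to a lower bound is the boundary term $\int_0^1\langle\psi_{bv}(y),L(y)\psi_{bv}(y)\rangle\,\ud y$. Since $L$ is bounded and measurable, $|\langle\psi_{bv}(y),L(y)\psi_{bv}(y)\rangle|\le \|L\|_\infty\,|\psi_{bv}(y)|^2$ pointwise, so it suffices to control $\int_0^1|\psi_{bv}(y)|^2\,\ud y=\|\psi_{bv}\|^2_{L^2(0,1)\otimes\kz^{4E^2}}$ by the $H^1$-norm of $\psi$. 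The key analytic input here is a trace-type estimate: for a one-variable $H^1$ function the boundary values at the endpoints are controlled by the function's $L^2$ and $H^1$ norms, and crucially this can be made with an arbitrarily small coefficient in front of the derivative norm at the cost of a large coefficient on the $L^2$ norm. Concretely, one uses an inequality of the form $|f(0)|^2+|f(l)|^2\le \varepsilon\|f'\|^2_{L^2}+C(\varepsilon)\|f\|^2_{L^2}$ applied to the traces appearing in \eqref{graphbv}, integrated over the remaining variable $y$. Choosing $\varepsilon$ small enough that $\varepsilon\|L\|_\infty<1$ absorbs a fraction of the kinetic term and leaves a bound $Q^{(2)}_{P,L}[\psi]\ge -C\|\psi\|^2_{L^2(D_\Gamma)}$, which is the desired semi-boundedness.

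For closedness, the natural strategy is to show that the form norm $\|\psi\|^2_{Q}:=Q^{(2)}_{P,L}[\psi]+(C+1)\|\psi\|^2_{L^2(D_\Gamma)}$ is equivalent to the ambient $H^1(D_\Gamma)$-norm on the domain $\cD_{Q^{(2)}}$, and then verify that $\cD_{Q^{(2)}}$ is closed in $H^1(D_\Gamma)$. The norm equivalence follows from the same trace estimate: the upper bound $\|\psi\|^2_Q\le C'\|\psi\|^2_{H^1}$ is immediate from boundedness of $L$ and the trace inequality without the small-$\varepsilon$ refinement, and the lower bound $\|\psi\|^2_Q\ge c\|\psi\|^2_{H^1}$ comes from the absorption argument already used for semi-boundedness, which leaves a positive multiple of the kinetic term intact. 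Given the equivalence, closedness of the form reduces to closedness of the domain as a subspace of $H^1(D_\Gamma)$.

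The main obstacle, and the step I would treat most carefully, is verifying that $\cD_{Q^{(2)}}$ is closed in $H^1(D_\Gamma)$; this rests on the continuity of the boundary-value map $\psi\mapsto\psi_{bv}(\cdot)$ as an operator from $H^1(D_\Gamma)$ into $L^2(0,1)\otimes\kz^{4E^2}$, which is precisely the trace theorem applied uniformly. If $\psi_n\to\psi$ in $H^1(D_\Gamma)$ with $P(y)\psi_{n,bv}(y)=0$ for a.e.\ $y$, then $\psi_{n,bv}\to\psi_{bv}$ in $L^2(0,1)\otimes\kz^{4E^2}$, and since $P(\cdot)$ is a bounded measurable multiplication operator the constraint $P(y)\psi_{bv}(y)=0$ passes to the limit (after extracting a subsequence converging pointwise a.e.). The delicate points are that $P$ and $L$ are only measurable in $y$ rather than continuous, so all manipulations must be phrased as $L^2$-in-$y$ estimates with pointwise a.e.\ interpretations, and that the trace estimate must be applied on the scaled unit-square coordinates of \eqref{lscale}–\eqref{graphbv} with the correct $\sqrt{l_e}$ weights; I would keep track of these length factors explicitly to ensure the constants remain uniform over the finitely many edge pairs $e_1,e_2$.
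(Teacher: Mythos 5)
Your proposal is correct and follows essentially the same route as the proof in the cited source \cite{BKSingular}: a trace estimate of the form $\|\psi_{bv}\|^2_{L^2(0,1)\otimes\kz^{4E^2}}\le\varepsilon\|\nabla\psi\|^2_{L^2}+C(\varepsilon)\|\psi\|^2_{L^2}$ with $\varepsilon$ chosen so that the boundary term is absorbed into the kinetic term, giving semi-boundedness and the equivalence of the form norm with the $H^1(D_\Gamma)$-norm, after which closedness reduces to $\cD_{Q^{(2)}}$ being a closed subspace of $H^1(D_\Gamma)$ via continuity of the trace map and the fact that the measurable constraint $P(y)\psi_{bv}(y)=0$ survives $L^2$-limits. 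Your attention to the $\sqrt{l_e}$ scaling factors and the merely measurable dependence of $P$, $L$ on $y$ addresses precisely the points where care is needed, so nothing essential is missing.
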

The semi-bounded, self-adjoint operator associated with this form via the 
representation theorem for quadratic forms~\cite{BEH08} can be identified 
as a self-adjoint realisation of the Laplacian when its domain is contained in 
$H^2(D^{(2)}_\Gamma)$; in this case the form is said to be regular.

In order to identify regular forms we need to impose further restrictions on the
maps $P$ and $L$. The first one is that they are block-diagonal, in the form
\begin{equation}
\label{2block}
 M(y) = \begin{pmatrix} \tilde M(y) & 0 \\ 0 & \tilde M(y) \end{pmatrix} \ ,
\end{equation}
with respect to an arrangement of the components of \eqref{graphbv} where
the upper two components for all pairs $e_1,e_2$ are separated from the lower
two components. Then we obtain the following result  \cite{BKSingular}.
\begin{theorem}
\label{TheoremPL(y)2}
Let $L$ be Lipschitz continuous on $[0,1]$ and let $P$ be of the block-diagonal
form \eqref{2block}. Assume that the matrix entries of $\tilde P$ are in 
$C^3(0,1)$ and possess extensions of class $C^3$ to some interval
$(-\eta,1+\eta)$, $\eta>0$. Moreover, when 
$y\in [0,\varepsilon_{1}]\cup [l-\varepsilon_{2},l]$, with some 
$\varepsilon_{1},\varepsilon_{2} > 0$, suppose that $L(y)=0$ and that 
$\tilde P(y)$ is diagonal with diagonal entries that are either zero or one. 
Then the quadratic form $Q^{(2)}_{P,L}$ is regular.
The associated semi-bounded, self-adjoint operator is a Laplacian with domain
\begin{equation}
\label{bcgraph}
\cD_2 (P,L) := \{ \psi\in H^2(D^{(2)}_\Gamma):\ (P(y)+L(y))\psi_{bv}(y)+
P^\perp(y)\psi'_{bv}(y)=0\ \text{for a.e.}\ y\in [0,1] \}\ .
\end{equation}
\end{theorem}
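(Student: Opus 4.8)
The plan is to show that under the stated regularity hypotheses every function in the form domain $\cD_{Q^{(2)}}$ that lies in the operator domain actually belongs to $H^2(D^{(2)}_\Gamma)$, and then to identify the boundary conditions \eqref{bcgraph} satisfied by such functions. The starting point is Theorem~\ref{2quadformgraph}, which already gives a closed, semi-bounded form; the representation theorem for quadratic forms then furnishes a self-adjoint operator $A$ with $\cD(A)\subset\cD_{Q^{(2)}}\subset H^1(D^{(2)}_\Gamma)$. The content of the present theorem is therefore the \emph{elliptic regularity} statement that $\cD(A)\subset H^2(D^{(2)}_\Gamma)$ together with the explicit form of the boundary conditions. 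I would first recall that $u\in\cD(A)$ means there is $f\in L^2$ with $Q^{(2)}_{P,L}[u,\varphi]=\langle f,\varphi\rangle$ for all $\varphi\in\cD_{Q^{(2)}}$; testing against $\varphi\in C_0^\infty(D^{(2)}_\Gamma)$ gives $-\Delta u=f$ in the distributional sense on the open faces, so the only issue is regularity up to the boundary faces $x=0,1$ and $y=0,1$.

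The core of the argument is a localisation and reduction to the one-particle theory. First I would treat the regions near the endpoints $y\in[0,\varepsilon_1]\cup[l-\varepsilon_2,l]$ separately: there the hypotheses force $L(y)=0$ and $\tilde P(y)$ diagonal with entries in $\{0,1\}$, so the boundary conditions in the $x$-variable decouple into ordinary Dirichlet/Neumann (Kirchhoff-type) conditions that are constant in $y$. On such a slab the problem factorises as a one-dimensional boundary-value problem in $x$ (covered by the one-particle domain \eqref{1Pdomain}) tensored against a $y$-variable that sees no boundary condition, so standard $H^2$-regularity for the one-dimensional Laplacian with local boundary conditions applies fibrewise and then integrates in $y$. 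On the interior region $y\in(\varepsilon_1,\ell-\varepsilon_2)$ the block-diagonal form \eqref{2block} is essential: it lets me split the boundary conditions so that, for the $x$-boundary faces $\{x=0\}$ and $\{x=1\}$, I again obtain a one-parameter family (parametrised by $y$) of one-particle-type boundary conditions, but now with $y$-dependent projector $\tilde P(y)$ and coupling $L(y)$. The $C^3$-extendability of the entries of $\tilde P$ and the Lipschitz continuity of $L$ are exactly what is needed to differentiate the boundary relation in the tangential variable $y$ and thereby control the tangential derivatives $\partial_y u$ up to the boundary.

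Concretely, I would run a \emph{difference-quotient} argument in the tangential direction. To bound $\partial_y\nabla u$ near a face $\{x=0\}$, one inserts the test function $\varphi=\partial_y^{-h}(\chi^2\partial_y^h u)$ (with $\chi$ a cutoff) into the form identity, using that the tangential difference quotient of a function satisfying $P(y)\psi_{bv}(y)=0$ again satisfies an approximate constraint up to an error controlled by the derivative of $P$. Here the smoothness of $\tilde P$ enters to show that $\partial_y^h u$ (suitably corrected) remains in the form domain, and the Lipschitz bound on $L$ keeps the boundary term in $Q^{(2)}_{P,L}$ under control after integration by parts. This yields a uniform bound on $\|\partial_y u\|_{H^1}$ near the face, i.e. control of the tangential derivatives $\partial_y^2 u$ and $\partial_y\partial_x u$. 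Once tangential regularity is in hand, the normal second derivative $\partial_x^2 u$ is recovered algebraically from the equation $-\partial_x^2 u-\partial_y^2 u=f$, since the right-hand side and $\partial_y^2 u$ are now in $L^2$; this closes the $H^2$ estimate near each face. A symmetric argument handles the faces $\{y=0\}$ and $\{y=1\}$ (where the roles of $x$ and $y$ are interchanged and one uses the constancy of the boundary data in $x$), and near the corners one combines the two cutoffs.

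The main obstacle is the tangential regularity step near the $x$-faces in the interior region, where the boundary data genuinely vary with $y$: here the difference quotient of a constrained function is not itself exactly constrained, and one must absorb the mismatch using the $C^3$-regularity of $\tilde P(y)$ while simultaneously handling the indefinite boundary term $\langle\psi_{bv},L(\cdot)\psi_{bv}\rangle$, whose lower-order but non-local character must be dominated by the gradient term via the semi-boundedness already established. Once $\cD(A)\subset H^2(D^{(2)}_\Gamma)$ is secured, integrating by parts in the form identity produces the boundary term $\langle\psi'_{bv}(y),\varphi_{bv}(y)\rangle$, and comparing with the coupling term shows that the stated relation $(P(y)+L(y))\psi_{bv}(y)+P^\perp(y)\psi'_{bv}(y)=0$ must hold for a.e.\ $y$, identifying $A$ with the Laplacian on the domain \eqref{bcgraph} and thereby establishing regularity of the form.
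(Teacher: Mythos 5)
First, a point of reference: this review does not actually prove Theorem~\ref{TheoremPL(y)2} --- the statement is quoted from \cite{BKSingular}, so your proposal has to be measured against the proof given there. That proof, roughly speaking, does not run a local elliptic-regularity argument of the kind you outline; it exploits the smoothness of $\tilde P$ to reduce the $y$-dependent boundary conditions to $y$-independent ones (via an intertwining transformation built from the projector family, in the spirit of Kato's construction for smooth families of projections --- this is where the $C^3$ hypothesis and the extendability to $(-\eta,1+\eta)$ enter), establishes regularity in that constant-coefficient situation, and uses the endpoint hypotheses ($L=0$ and $\tilde P$ diagonal with entries $0$ or $1$ near the ends) to control the corners of the squares. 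Your difference-quotient strategy is a genuine alternative, and you assign the hypotheses essentially the correct roles; note, however, that making your ``approximate constraint'' step rigorous --- producing from $\partial_y^h u$ a test function lying exactly in $\cD_{Q^{(2)}}$ --- amounts to constructing precisely such an intertwining correction locally, so the two routes are closer than they may appear.

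The genuine gap is at the corners. You dispose of them with ``near the corners one combines the two cutoffs'', but this is exactly the step where $H^2$-regularity can fail: at a right-angle corner where two boundary conditions meet, regularity is a compatibility statement, not a consequence of gluing two single-face estimates, and for general Robin-type or coupled conditions it is simply false. The whole purpose of the hypothesis that $L(y)=0$ and $\tilde P(y)$ is diagonal with entries $0$ or $1$ near the ends is that in a neighbourhood of every corner the boundary conditions become decoupled, tangentially constant Dirichlet or Neumann conditions on each face; one can then extend the solution by odd/even reflection across the faces and invoke interior elliptic regularity, which is what yields $H^2$ up to the corner. Your ``fibrewise'' one-dimensional argument on the slabs $y\in[0,\varepsilon_1]$ does not deliver this: solving a boundary-value problem in $x$ for a.e.\ fixed $y$ and then integrating in $y$ controls neither $\partial_y^2 u$ nor $\partial_x\partial_y u$ up to the faces $\{y=0\}$, $\{y=l\}$, and it gives no joint regularity at the corner points. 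Until a genuine corner argument (reflection, or an appeal to regularity theory in polygonal domains for the specific Dirichlet/Neumann configurations) is supplied, the proposal does not prove the theorem; with that step added, and with the constraint-correction in the difference-quotient argument carried out in detail, it would constitute a valid alternative proof.
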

Note the similarity of \eqref{bcgraph} with \eqref{1Pdomain}.

As one would expect from a quantum systems with a configuration space
of finite volume, the spectrum of a two-particle Laplacian with domain
\eqref{bcgraph} is discrete. Moreover, a Weyl law for the eigenvalue count
holds: Let $\lambda_n$, $n\in\mathbb{N}$, denote the eigenvalues of the 
operator, then
\begin{equation}
\label{Weylsing}
N(\lambda) := \# \{n\in\mathbb{N}:\ \lambda_n\leq\lambda\} \sim
\frac{\mathcal{L}^2}{4\pi}\ ,\quad \lambda\to\infty\ ,
\end{equation}
where $\mathcal{L}=l_{e_1}+\cdots+l_{e_E}$ is the total length of the
metric graph, see \cite{BKSingular}.

The constructions above can be carried over to bosonic or fermionic systems
in a straight forward manner; for details see \cite{BKSingular}.

\subsection{Contact interactions}
\label{sec2.2}
Realistic two-particle interactions are often of the form \eqref{2Ppotential}.
When the range of the interaction is small one can model the pair potential
with a Dirac-$\delta$, so that the formal $N$-particle Hamiltonian is
\begin{equation}
\label{contactint}
H_N = -\Delta_N +\alpha\sum_{i<j}\delta(x_i-x_j) \ .
\end{equation}
Here $\Delta_N$ denotes a non-interacting, self-adjoint realisation of the $N$-particle 
Laplacian and $\alpha\in\mathbb{R}$ is a constant determining the interaction
strength. In this way an interaction takes
place when (at least) two particles are at the same position and, therefore, one speaks 
of a contact interaction. In order to implement contact interactions in a self-adjoint
operator one has to impose boundary conditions along hyperplanes in the 
configuration space of $N$ particles that are characterised by equations $x_i=x_j$.
Contact interactions for bosons on a circle have, e.g., been studied in much detail in
the form of the Lieb-Liniger model \cite{LL63}, and for distinguishable particles on
infinite star graphs in \cite{Ha07,Ha08}.

A self-adjoint operator representing the formal expression \eqref{contactint} can
be defined as an extension of the $N$-particle Laplacian with domain 
$C_0^\infty(D_\Gamma^{(N)})$. This can be done much in the same way as 
above for the vertex-induced singular interactions after additional boundaries have 
been introduced to the domain \eqref{Ndomain}. As contact interactions require 
two particles to be on the same edge, components in \eqref{Ndomain} where 
$e_1,\dots,e_N$ are $N$ distinct edges do not contribute. Taking the example 
of $N=2$ as for the vertex-induced singular interactions above, one introduces 
the subdivision
\begin{equation}
\label{diagsquare}
D_{ee} := [0,l_e]\times [0,l_e] = D_{ee}^+\cup D_{ee}^-\ ,
\end{equation}
of diagonal domains, where 
\begin{equation}
\label{squarecut}
D_{ee}^+ := \{(x,y)\in D_{ee}:\ x\geq y\}\quad\text{and}\quad
D_{ee}^- := \{(x,y)\in D_{ee}:\ x\leq y\}\ .
\end{equation}
These subdivisions modify the total domain $D^{(2)}_\Gamma$, see 
\eqref{Ndomain}. The resulting domain with
the additional boundaries is denoted as $D^{\ast(2)}_\Gamma$.

Boundary values of components $\psi_{e_1e_2}$ of functions 
$\psi\in H^2(D^{\ast(2)}_\Gamma)$ and their derivatives are as in \eqref{graphbv} 
when $e_1\neq e_2$. For the remaining components, however, the additional 
boundaries lead to the boundary values
\begin{equation}
\label{bvdiag}
 \psi_{ee,bv}(y) := \begin{pmatrix}  \sqrt{l_e}\psi^{-}_{ee}(0,l_e y) \\ 
 \sqrt{l_e}\psi^{+}_{ee}(l_e,l_e y) \\ \sqrt{l_e}\psi^{+}_{ee}(l_e y,0) \\ 
 \sqrt{l_e}\psi^{-}_{ee}(l_e y,l_e) \\ \sqrt{l_e} \psi^{+}_{ee}(l_e y,l_e y) \\ 
 \sqrt{l_e}\psi^{-}_{ee}(l_e y,l_e y) \end{pmatrix}
 \qquad\text{and}\qquad
 \psi'_{ee,bv}(y) := \begin{pmatrix} \sqrt{l_{e}}\psi^{-}_{ee,x}(0,l_{e}y) \\ 
 -\sqrt{l_e}\psi^{+}_{ee,x}(l_e,l_e y) \\ \sqrt{l_e}\psi^{+}_{ee,y}(l_e y,0) \\ 
 -\sqrt{l_e}\psi^{-}_{ee,y}(l_e y,l_e) \\ \sqrt{2l_e}\psi^{+}_{ee,n}(l_e y,l_e y)\\ 
 \sqrt{2l_e}\psi^{-}_{ee,n}(l_e y,l_e y) \end{pmatrix}\ ,
\end{equation}
for $y \in [0,1]$. Here $\psi_{ee}^\pm:D_{ee}^\pm\to\mathbb{C}$ and
\begin{equation}
\label{nderdiag}
 \psi^\pm_{ee,n} := 
 \frac{\pm 1}{\sqrt{2}}\bigl(\psi^\pm_{ee,x}-\psi^\pm_{ee,y}\bigr) 
\end{equation}
is the normal derivative along the diagonal part of the boundary.

The space $\mathbb{C}^{n(E)}$, $n(E)=4E^2+2E$, of boundary values decomposes
into a $4E^2$-dimensional subspace $W_{vert}$ of vertex-induced boundary
values as in Section~\ref{Sec2.1}, and a $2E$-dimensional subspace
$W_{cont}$ of boundary values on diagonals associated with contact interactions. 
Introducing maps $P$ and $L$ on $[0,1]$ that take values in the orthogonal
projectors and self adjoint maps on $W_{vert}\oplus W_{cont}$, respectively, in 
the same way as in Section~\ref{Sec2.1}, their restrictions to $W_{vert}$ should
satisfy the same properties as above. The restrictions to the edge-$e$ subspace
of $W_{cont}$ should take the form 
\begin{equation}
P_{cont,e}=\frac{1}{2}\begin{pmatrix} 1 & -1 \\ -1 & 1 \end{pmatrix}
\quad\text{and}\quad
L_{cont,e}=-\frac{1}{2}\alpha(y)\eins_2\ ,
\end{equation}
where $\alpha:[0,1]\to\mathbb{R}$ is a possibly varying, Lipschitz-continuous 
interaction strength. With boundary conditions as described in \eqref{bcgraph} 
this choice implies continuity of functions across diagonals,
\begin{equation}
\label{fctcont}
\psi^{+}_{ee}(l_e y,l_e y) =  \psi^{-}_{ee}(l_e y,l_e y) \ ,
\end{equation}
and satisfies jump conditions for the normal derivatives,
\begin{equation}
\label{derjump}
\psi^+_{ee,n}(l_e y,l_e y) + \psi^-_{ee,n}(l_e y,l_e y) = 
\frac{1}{\sqrt{2}}\alpha(y)\psi^\pm_{ee}(l_e y,l_e y) \ .
\end{equation}
These conditions ensure a rigorous, self-adjoint realisation of the 
$\delta$-type contact interactions \eqref{contactint}. The operator
is a two-particle Laplacian with domain \eqref{bcgraph}, where
now
\begin{equation}
\label{bc4contact}
P = P_{vert}\oplus P_{cont} \quad\text{and}\quad
L = L_{vert}\oplus L_{cont} \ .
\end{equation}
Hardcore contact interactions correspond to Dirichlet conditions along 
all diagonal boundaries. These conditions follow from $\delta$-type 
interactions by taking the limit $\alpha\to\infty$.
For more detail see~\cite{BKContact}.

As in the case of vertex-induced singular interactions, the spectrum
of the two-particle Laplacian with domain \eqref{bcgraph} and 
\eqref{bc4contact} is discrete and the Weyl law \eqref{Weylsing} 
holds \cite{BKContact}.

\subsection{A Lieb-Linger model on graphs}
The contact interactions of Section~\ref{sec2.2} offer an opportunity to
extend the Lieb-Linger model of $N$ bosons with $\delta$-interactions on 
a circle to arbitrary metric graphs. Implementing bosonic symmetry first
requires to restrict the $N$-particle Hilbert space $\mathcal{H}_N$, see
\eqref{NHilbert}, to its totally symmetric subspace $\mathcal{H}_{N,B}$. The
projector $\Pi_B$ to that subspace acts on vector $\psi\in\mathcal{H}_N$ as
\begin{equation}
\label{Boseproject}
(\Pi_B\psi)_{e_1\dots e_N} = \frac{1}{N!}\sum_{\pi\in S_N}
\psi_{e_{\pi(1)}\dots e_{\pi(N)}}\ ,
\end{equation}
where $S_N$ denotes the symmetric group. In order to implement $\delta$-type
contact interactions one has to dissect the hyper-rectangles
$(0,l_{e_1})\times\dots\times (0,l_{e_N})$ with at least two coinciding edges,
$e_i=e_j$ with $i\neq j$, along the hyperplanes $x_{e_i}=x_{e_j}$. The resulting
configuration space is $D_\Gamma^{\ast(N)}$. On the hyperplanes we impose 
boundary conditions that are equivalent to \eqref{fctcont}--\eqref{derjump}. 

The remaining, vertex related boundary values can be simplified by making 
use of the particle exchange symmetry. For functions 
$\Psi\in H^1_B(D^{\ast(N)}_\Gamma)$ they are
\begin{equation}
\label{BVI}
\psi_{bv,vert}(\vy) = \begin{pmatrix} \sqrt{l_{e_{2}}\dots l_{e_{N}}} 
\psi_{e_{1}\dots e_{N}}(0,l_{e_{2}}y_{1},\dots,l_{e_{N}}y_{N-1}) \\ 
\sqrt{l_{e_{2}}\dots l_{e_{N}}} 
\psi_{e_{1}\dots e_{N}}(l_{e_{1}},l_{e_{2}}y_{1},\dots,l_{e_{N}}y_{N-1}) \end{pmatrix} \ ,
\end{equation}
and for derivatives,
\begin{equation}
\label{BVII}
\psi^{'}_{bv,vert}(\vy) = \begin{pmatrix} \sqrt{l_{e_{2}}\dots l_{e_{N}}}  
\psi_{e_{1}\dots e_{N},x^{1}_{e_{1}}}(0,l_{e_{2}}y_{1},\dots,l_{e_{N}}y_{N-1}) \\ 
-\sqrt{l_{e_{2}}\dots l_{e_{N}}} 
\psi_{e_{1}\dots e_{N},x^{1}_{e_{1}}}(l_{e_{1}},l_{e_{2}}y_{1},\dots,l_{e_{N}}y_{N-1})  
\end{pmatrix} \ ,
\end{equation}
where $\vy=(y_{1},\dots,y_{N-1})\in [0,1]^{N-1}$. 

Introducing maps $L_{vert},P_{vert}:[0,1]^{N-1}\to M(2E^N,\mathbb{C})$ in analogy to
\eqref{bc4contact}, we are now in a position to introduce the quadratic form
\begin{equation}
\label{QuadFormContactI}
\begin{split}
Q^{(N)}_{B}[\psi] 
  &= N \sum_{e_{1}\dots e_{N}}\int_{0}^{l_{e_{1}}}\dots\int_{0}^{l_{e_{N}}} 
    |\psi_{e_{1}\dots e_{N},x_{e_1}}(x_{e_{1}},\dots,x_{e_{N}})|^{2}\ \ud x_{e_N}\dots
    \ud x_{e_1} \\
  &\quad -N\int_{[0,1]^{N-1}}\langle\psi_{bv,vert},L_{vert}(\vy)\psi_{bv,vert} 
    \rangle_{\kz^{2E^{N}}} \ud\vy \\
  &\quad +\frac{N(N-1)}{2}\sum_{e_{2}...e_{N}}\int_{[0,1]^{N-1}} \alpha(y_1)\ 
    |\sqrt{l_{e_{2}}\dots l_{e_{N}}}\psi_{e_{2}e_{2}\dots e_{N}}
    (l_{e_2}y_1,\vl\vy)|^{2}\ \ud\vy \ ,
\end{split}
\end{equation}
where $\vl\vy=(l_{e_2}y_1,l_{e_3}y_2,\dots,l_{e_N}y_{N-1})$, with form domain
\begin{equation}
\begin{split}
\label{QNformdomain}
\cD_{Q^{(N)}_{B}} = \{\psi \in H^{1}_{B}(D^{\ast(N)}_\Gamma);\ P_{vert}(\vy)
\psi_{bv,vert}(\vy)=0\ \text{for a.e.}\ \vy\in [0,1]^{N-1}\}\ .
\end{split}
\end{equation}
The first two lines in \eqref{QuadFormContactI} define a bosonic $N$-particle
Laplacian with vertex-related boundary conditions, whereas the last line
yields pairwise, $\delta$-type contact interactions.

The hardcore limit, $\alpha\to\infty$ (see above), of the Lieb-Liniger gas is 
the so-called Tonks-Girardeau gas \cite{G60}.

\section{Bose-Einstein condensation}
\label{secBEC}
One of the most interesting questions arising for bosonic many-particle system is 
whether they show the phenomenon of Bose-Einstein condensation (BEC). This occurs
when below a critical temperature the particles condense into the same one-particle
state~\cite{PO56}. The original version of BEC \cite{EinsteinBEC} was found for free, i.e., 
non-interacting bosons in three dimensions that are confined to box of finite volume 
and whose wave functions satisfy standard conditions at the boundary of the box; it
occurs in the thermodynamic limit of increasing the particle number and the volume 
of the box while keeping the particle density fixed. One can readily show that this form 
of BEC does not occur in one dimension as long as standard boundary conditions are 
imposed. However, it has long been known that BEC 
for free bosons does occur in one dimension when the boundary conditions are changed 
in such a way that the free, one-particle Hamiltonian has a negative eigenvalue 
and in the thermodynamic limit a gap remains in the spectrum between
the ground state and the continuum above zero \cite{LanWil79,VerbeureBook}.

\subsection{Free bosons}
In a many-particle system of $N$ free bosons, the Hamiltonian is a symmetrised version
of an operator with the tensor product structure \eqref{freeop}. Its eigenvalues are 
of the form $k_{n_1}^2+\cdots +k_{n_N}^2$, where $k_n^2$ is an eigenvalue of the
one-particle Hamiltonian, which we assume to be a Laplacian with domain 
\eqref{1Pdomain}. The number of negative eigenvalues is controlled by the 
self-adjoint map $L$ in the characterisation of the domain \cite{BL10}, and this determines 
whether or not BEC is found in the thermodynamic limit. In this limit the volume
growth is achieved by stretching all edge lengths with the same factor, 
$l_e\mapsto\eta l_e$, $\eta > 0$. Hence, the thermodynamic limit can be performed by sending
the total length $\mathcal{L}=\sum_e l_e$ to infinity.

The first result required in order to prove BEC establishes a gap in the spectrum
\cite{BolteKernerBEC}.
\begin{prop}
Let $-\Delta$ be a one-particle Laplacian on a compact metric graph with domain
\eqref{1Pdomain}. Assume that $L$ has at least one positive eigenvalue and let
$L_{max}$ be the largest eigenvalue. Then the ground state eigenvalue 
$k^2_0(\mathcal{L})$ of the Laplacian at total length $\mathcal{L}$ converges 
to $-L_{max}$ in the thermodynamic limit $\mathcal{L}\to\infty$.
\end{prop}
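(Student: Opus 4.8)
The plan is to characterise the ground state energy variationally through the Rayleigh quotient of the closed, semi-bounded form $\mathcal{Q}$,
\[
k_0^2(\mathcal{L}) = \inf_{0\neq\psi\in\mathcal{D}_{\mathcal{Q}}} \frac{\mathcal{Q}[\psi]}{\|\psi\|_{L^2(\Gamma)}^2}\ ,
\]
and to sandwich this quotient between a lower bound valid for every admissible $\psi$ and an upper bound furnished by an explicit trial state; both will converge to the value claimed in the proposition as the edges are stretched, $l_e\mapsto\eta l_e$ with $\eta\to\infty$. I first note that the hypothesis $L=P^\perp L P^\perp$ forces $\operatorname{ran}L\subseteq\ker P$, so the normalised top eigenvector $v$ of $L$, with $Lv=L_{\max}v$ and $L_{\max}>0$, automatically satisfies $Pv=0$ and is therefore an admissible vector of boundary values.

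For the lower bound I would combine the spectral estimate $-\langle\psi_{bv},L\psi_{bv}\rangle\geq -L_{\max}\|\psi_{bv}\|_{\kz^{2E}}^2$ with the one-dimensional trace inequality
\[
|\psi_e(a)|^2 \leq \frac{1}{\kappa}\int_I|\psi_e'|^2\,\mathrm{d}x + \Bigl(\kappa+\tfrac{C}{|I|}\Bigr)\int_I|\psi_e|^2\,\mathrm{d}x\ ,
\]
valid for any $\kappa>0$ and either endpoint $a$ of a subinterval $I$. Applying this separately on the two halves of each edge—so that the bulk integrals partition $\Gamma$ and are not double-counted—bounds $\|\psi_{bv}\|^2$ by $\kappa^{-1}\|\psi'\|^2+(\kappa+o(1))\|\psi\|^2$, where the $o(1)$ is controlled by $1/\min_e l_e$ and hence vanishes in the thermodynamic limit. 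Inserting this into $\mathcal{Q}$ gives
\[
\mathcal{Q}[\psi] \geq \Bigl(1-\tfrac{L_{\max}}{\kappa}\Bigr)\|\psi'\|^2 - L_{\max}\bigl(\kappa+o(1)\bigr)\|\psi\|^2\ ,
\]
and choosing $\kappa$ so as to balance the kinetic and boundary contributions yields a lower bound on the Rayleigh quotient that tends to the asserted limit.

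For the upper bound I would extend the boundary data $v$ into the edges so that it decays. On each edge $e$, parametrised as $[0,l_e]$, I take $\psi_e$ to be the unique linear combination of $\mathrm{e}^{-\kappa x}$ and $\mathrm{e}^{-\kappa(l_e-x)}$ whose endpoint values reproduce the two components of $v$ attached to that edge \emph{exactly}; this forces $\psi_{bv}=v$, hence $P\psi_{bv}=Pv=0$, so $\psi\in\mathcal{D}_{\mathcal{Q}}$. Evaluating $\|\psi'\|^2$, $\|\psi\|^2$ and $\langle\psi_{bv},L\psi_{bv}\rangle=L_{\max}$ on this state, all the coupling between the two ends of an edge enters only through terms of order $\mathrm{e}^{-\kappa l_e}$, which disappear as every $l_e\to\infty$; optimising the decay rate $\kappa$ then drives the Rayleigh quotient down to precisely the limiting value, matching the lower bound.

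The step I expect to be the genuine obstacle is the upper bound. One must realise the prescribed boundary values exactly rather than approximately—otherwise the trial state leaves the form domain—while keeping the remaining exponentially small cross terms under fully quantitative control, so that their contribution to the Rayleigh quotient is uniform in the number and the lengths of the edges. One also has to confirm that the optimised energy is strictly negative, which is exactly where the assumption $L_{\max}>0$ is used to guarantee a genuine eigenvalue below zero while the remaining spectrum accumulates at $+\infty$. The lower bound, by contrast, is a routine combination of the sharp trace inequality with the operator bound for $L$ and should present no serious difficulty.
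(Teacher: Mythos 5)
The paper itself states this proposition without proof, citing \cite{BolteKernerBEC}; the argument there proceeds quite differently from yours, by writing the negative-energy eigenfunctions explicitly as $a_e\,\ue^{-\kappa x}+b_e\,\ue^{-\kappa(l_e-x)}$ on each edge and tracking the asymptotics of the resulting secular equation as all lengths are scaled by $\eta\to\infty$, which yields the limiting behaviour of \emph{all} negative eigenvalues at once. Your variational sandwich is a genuinely different and more self-contained route that isolates exactly the ground state, which is all the proposition needs. Both halves are implementable: the half-edge trace inequality combined with the spectral bound for $L$ gives the lower bound, and your exponential trial state with exactly matched boundary data does lie in the form domain \eqref{1Pformdomain} since $P\psi_{bv}=Pv=0$. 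The upper bound is also not the obstacle you fear: the cross terms are of order $l_e\,\ue^{-\kappa l_e}$ per edge, and since the number of edges is \emph{fixed} in the thermodynamic limit (only the lengths scale), uniformity is automatic.

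The genuine gap is that you never evaluate the limit your two bounds actually converge to; you simply assert they ``tend to the asserted limit.'' Carrying your own scheme to the end: in the lower bound, the choice $\kappa=L_{max}$ annihilates the gradient term and leaves
\[
\mathcal{Q}[\psi]\;\geq\; -L_{max}\bigl(L_{max}+o(1)\bigr)\,\|\psi\|^2_{L^2(\Gamma)}\ ,
\]
while your trial state yields a Rayleigh quotient $\kappa^2-2\kappa L_{max}+o(1)$, minimised at $\kappa=L_{max}$ with value $-L_{max}^2+o(1)$. Your sandwich therefore proves $k_0^2(\mathcal{L})\to -L_{max}^{2}$, not $-L_{max}$. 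With the conventions of this paper (the form $\mathcal{Q}[\psi]=\int_\Gamma|\nabla\psi|^2\,\ud x-\langle\psi_{bv},L\psi_{bv}\rangle$ and the boundary conditions \eqref{1Pdomain}), $-L_{max}^{2}$ is in fact the correct and dimensionally consistent value: $L$ has the dimension of an inverse length while $k^2$ has that of an inverse length squared, and on a single interval with Robin conditions the secular equation $\kappa\tanh(\kappa l/2)=L_{max}$ gives $\kappa\to L_{max}$, i.e.\ $k_0^2\to-L_{max}^2$. So the ``$-L_{max}$'' in the statement is a slip in the review's normalisation, and your method, once the computation is actually done, establishes the corrected statement. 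As written, however, your claim that both bounds converge to the value stated in the proposition is unverified and, taken literally, false; the proof is incomplete precisely at the point where this discrepancy would have surfaced.
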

In the grand canonical ensemble of statistical mechanics (see, e.g., \cite{SchwablSM}
for details), the density of particles $\rho_{n}(\beta,\mu_{\mathcal{L}})$ in an eigenstate with 
eigenvalue $k^2_n(\mathcal{L})$ is
\begin{equation}
\rho_{n}(\beta,\mu_{\mathcal{L}})=\frac{1}{\mathcal{L}}
\frac{1}{\ue^{\beta(k^2_n(\mathcal{L})-\mu_{\mathcal{L}})}-1}\ ,
\end{equation}
where $\beta=1/k_BT$ is the inverse temperature and $\mu_{\mathcal{L}}\leq k^2_0(\mathcal{L})$ 
is the so-called chemical potential which itself depends on $\mathcal{L}$. More explicitly, 
$\mu_{\mathcal{L}}$ is chosen such that 
\begin{equation}
\rho=\frac{1}{\mathcal{L}}\sum_{n=0}^{\infty}
\frac{1}{\ue^{\beta(k^2_n(\mathcal{L})-\mu_{\mathcal{L}})}-1}
\end{equation}
is the fixed density of particles on the graph for all values of $\mathcal{L}$. 
\begin{defn} 
\label{BECdef}
We say that an eigenstate with eigenvalue $k^2_n(\mathcal{L})$ is macroscopically 
occupied in the thermodynamic limit if
\begin{equation}
\limsup_{\mathcal{L}\to\infty} \rho_{n}(\beta,\mu_{\mathcal{L}}) > 0\ .
\end{equation}
If such an eigenstate exists we say that there is BEC into this eigenstate. 
\end{defn}
With these observations one is able to obtain a complete characterisation of free 
Bose gases on compact graphs in terms of BEC  \cite{BolteKernerBEC}.
\begin{theorem}
\label{freeBEC}
Let $\Gamma$ be a compact metric graph with one-particle Laplacian defined on the
domain \eqref{1Pdomain}. If $L$ is negative semi-definite, no BEC occurs at finite
temperature in the thermodynamic limit. 

If, however, $L$ has at least one positive eigenvalue, there exists a critical 
temperature $T_c >0$ such that BEC occurs below $T_c$ in the thermodynamic limit.
\end{theorem}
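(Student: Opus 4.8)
The plan is to analyse the two cases separately, in both cases working in the grand canonical ensemble and controlling the total density
\[
\rho=\frac{1}{\mathcal{L}}\sum_{n=0}^\infty \frac{1}{\ue^{\beta(k^2_n(\mathcal{L})-\mu_{\mathcal{L}})}-1}
\]
as $\mathcal{L}\to\infty$. The guiding principle is that BEC is possible precisely when the chemical potential $\mu_{\mathcal{L}}$ can approach the ground-state energy $k^2_0(\mathcal{L})$ from below while the remaining (excited-state) sum stays bounded; the macroscopic occupation then accumulates in the ground state. The essential input from the geometry is that stretching all edges as $l_e\mapsto\eta l_e$ rescales eigenvalues like $k^2\mapsto k^2/\eta^2$, so that away from the bottom of the spectrum the eigenvalues crowd together and the spectral density scales with $\mathcal{L}$. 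This is what makes the excited-state contribution behave like an integral (a Weyl-type density of states) in the limit.

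First I would treat the case where $L$ is negative semi-definite. Here the Proposition preceding the theorem does not apply, and in fact one expects $k^2_0(\mathcal{L})\to 0^+$ (or $\ge 0$) with the ground state merging into the continuum at zero as $\mathcal{L}\to\infty$. The goal is to show the excited-state sum alone can absorb any fixed density $\rho$ for every $\beta$, so that no single state need be macroscopically occupied. Concretely I would bound $\frac{1}{\mathcal{L}}\sum_{n\ge 1}(\ue^{\beta(k^2_n-\mu_{\mathcal{L}})}-1)^{-1}$ from below by replacing the sum with a Riemann sum governed by the rescaled eigenvalue density; using the Weyl asymptotics the leading eigenvalue count grows like $\mathcal{L}\sqrt{\lambda}/\pi$, so $\frac{1}{\mathcal{L}}\sum_{n}$ converges to $\frac{1}{\pi}\int_0^\infty (\ue^{\beta(\lambda-\mu)}-1)^{-1}\,\frac{\ud\lambda}{2\sqrt{\lambda}}$. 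Since this integral diverges as $\mu\uparrow 0$ (the $1/\sqrt{\lambda}$ singularity at the band edge is non-integrable against the Bose factor in one dimension), the excited states can carry arbitrarily large density, forcing $\mu_{\mathcal{L}}$ to stay bounded away from the ground state and $\rho_0\to 0$. Hence no BEC.

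Next, for the case where $L$ has a positive eigenvalue, the Proposition gives $k^2_0(\mathcal{L})\to -L_{max}<0$ and, crucially, a persistent gap: the ground state is separated from the continuum at zero by a fixed amount in the limit. I would exploit this gap to show the excited-state sum is \emph{bounded} uniformly in $\mathcal{L}$ for $\mu_{\mathcal{L}}$ up to $k^2_0(\mathcal{L})$, because the band edge now sits at $0$, strictly above $\mu_{\mathcal{L}}\le k^2_0\to -L_{max}$, so the integrand $(\ue^{\beta(\lambda-\mu)}-1)^{-1}$ is regular there and the integral $\frac{1}{\pi}\int_0^\infty(\ue^{\beta(\lambda-\mu)}-1)^{-1}\frac{\ud\lambda}{2\sqrt{\lambda}}$ converges. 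Call its supremum over admissible $\mu$ the critical density $\rho_c(\beta)$; it is finite and decreases in $\beta$. Choosing $T_c$ so that $\rho_c(\beta_c)=\rho$, for $T<T_c$ the excited states cannot accommodate the full density, the surplus must reside in the ground state, and $\limsup_{\mathcal{L}\to\infty}\rho_0(\beta,\mu_{\mathcal{L}})=\rho-\rho_c(\beta)>0$. This is BEC into the negative-energy ground state.

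The main obstacle is the uniform control of the excited-state sum by its limiting integral: one must show that $\frac{1}{\mathcal{L}}\sum_{n\ge 1}(\ue^{\beta(k^2_n-\mu_{\mathcal{L}})}-1)^{-1}$ converges to the density-of-states integral \emph{uniformly} in $\mu_{\mathcal{L}}$ as the latter approaches its limiting value, and that the error terms in the Weyl law do not spoil the monotone-convergence argument. In the negative-semidefinite case the delicacy is the band-edge singularity at $\lambda=0$, where both the sum and the integral diverge and one needs a careful lower bound (e.g. isolating the lowest few rescaled eigenvalues) to conclude that $\mu_{\mathcal{L}}$ is repelled from the ground state fast enough to kill $\rho_0$. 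In the positive-eigenvalue case the gap removes this singularity, so the analysis is more routine once the convergence of the sum to the integral is established, and the existence of $T_c$ follows from continuity and monotonicity of $\rho_c(\beta)$ in $\beta$.
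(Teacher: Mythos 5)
Your proposal is correct and takes essentially the same route as the paper, which itself defers the details to \cite{BolteKernerBEC}: combine the spectral-gap Proposition with a grand-canonical analysis, comparing the rescaled eigenvalue sum to the one-dimensional density-of-states integral, which diverges at the band edge when $L$ is negative semi-definite (so the excited states absorb any density and no BEC occurs) and remains finite below the gap when $L$ has a positive eigenvalue (so the surplus density condenses into the negative-energy ground state below a critical temperature). The technical points you flag (uniform convergence of the sum to the integral, Weyl-law error control, and the finitely many negative eigenvalues) are exactly the ones handled in the cited reference.
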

The one-particle ground state eigenfunction into which all particles condense 
below the critical temperature is peaked around the vertices and hence is not
homogeneous, as it would be in the classical case of particles in a box with
Dirichlet boundary conditions, see also \cite{LanWil79}.

\subsection{Interacting bosons}
For interacting bosons it is much harder to prove that BEC either holds or is 
absent~\cite{LiebSeiringerProof,LVZ03}. In the case of the Tonks-Girardeau 
gas \cite{G60} of particles with hardcore interactions on a graph, however, 
one can use a Fermi-Bose mapping in order to prove the absence 
of phase transitions which then indicates an absence of BEC. The Fermi-Bose 
mapping is a bijection between the set of bosonic many-particle 
Laplacians with hardcore interactions and the set of free fermionic Laplacians 
on the same compact, metric graph. The fermionic $N$-particle Hilbert space 
$\mathcal{H}_{N,F}$ is the totally antisymmetric subspace of $\mathcal{H}_N$, 
i.e., the image of the projector
\begin{equation}
\label{Fermiproject}
(\Pi_F\psi)_{e_1\dots e_N} = \frac{1}{N!}\sum_{\pi\in S_N}(-1)^{\sgn \pi}
\psi_{e_{\pi(1)}\dots e_{\pi(N)}}\ ,
\end{equation}
compare \eqref{Boseproject}. One notices that the antisymmetry implies that 
continuous fermionic functions vanish along diagonal hyperplanes 
$x_{e_i}=x_{e_j}$, where $e_i=e_j$ but $i\neq j$, as do functions in the 
domain of a bosonic Laplacian with hardcore interactions. Using appropriate 
permutations of edges one can construct a bijection between bosonic and 
fermionic functions in such a way that the latter are in the domain of a fermionic
quadratic form that is associated with a free fermionic Laplacians. As the forms
coincide, the Fermi-Bose mapping is isospectral. For details of the construction
we refer to \cite{BolteKernerBEC}.  In fermionic systems BEC is well known to be 
absent. In the present case one calculates the free-energy density
of free fermions (with Dirichlet boundary conditions in the vertices) in the 
thermodynamic limit,
\begin{equation}
f_{D,F}(\beta,\mu) = \limsup_{\mathcal{L}\to\infty}\frac{1}{\beta\mathcal{L}}
\tr\ue^{-\beta H_N}
= -\frac{1}{\beta}\int_0^\infty 
\log(1+\ue^{-\beta(k^2-\mu)})\,\ud k,
\end{equation}
see \cite{BolteKernerBEC}. This energy density is smooth and has no singularities
in $\beta$ which shows that there is no phase transition, consequently indicating 
an absence of BEC.

Other forms of (repulsive) interactions can be modelled by pair potentials of 
the type \eqref{2Ppotential}. On a metric graph this takes the form
\begin{equation}
\label{reppairint}
(V_{N,\mathcal{L}}\psi)_{e_1\dots e_N}(x_{e_1},\dots,x_{e_N}) =
\sum_{i<j} V_{p,\mathcal{L}}(x_{e_i}-x_{e_j})
\psi_{e_1\dots e_N}(x_{e_1},\dots,x_{e_N})\ ,
\end{equation}
and gives rise to the (bosonic) $N$-particle Hamiltonian
\begin{equation}
H_N = -\Delta_N + V_{N,\mathcal{L}}\ .
\end{equation}
The pair potentials are repulsive when the functions $V_{p,\mathcal{L}}$ are
non-negative, and for technical reasons we assume that for all $\mathcal{L}>0$
there exist $A_{\mathcal{L}}>0$ and $\epsilon_{\mathcal{L}}>0$ such that 
$V_{p,\mathcal{L}}(x)\geq\epsilon_{\mathcal{L}}$ for all 
$x\in[-A_{\mathcal{L}},A_{\mathcal{L}}]$. Moreover, the $L^1$-norm of
$V_{p,\mathcal{L}}$ is assumed to be independent of $\mathcal{L}$. These 
assumptions are consistent with choosing functions $V_{p,\mathcal{L}}$ that are
a $\delta$-series in the thermodynamic limit $\mathcal{L}\to\infty$. One can, 
e.g., take $V_{p,\mathcal{L}}(x)=\mathcal{L}v(\mathcal{L}x)$ with 
$v\in C_0^\infty(\mathbb{R})$, $v\geq 0$ and $\|v\|_1=\alpha$ so that
$\lim_{\mathcal{L}\to\infty}V_{p,\mathcal{L}}(x)=\alpha\delta(x)$. With this
choice the Lieb-Liniger model will be recovered in the thermodynamic limit.

A Gibbs state at inverse temperature $\beta>0$ is defined via
\begin{equation}
\omega_\beta (O) := \frac{\tr \left(O\,\ue^{-\beta H_N}\right)}{\tr\ue^{-\beta H_N}}\ ,
\end{equation}
where $O$ is a (bounded) observable, i.e., a (bounded and) self-adjoint operator.
If now $\psi_0$ is the ground state of the free bosonic system, i.e., composed of
the ground state eigenfunction $\phi_0$ of the one-particle Laplacian and $N(\phi_0)$ 
is the particle number operator in this ground state one can infer from
Theorem~\ref{freeBEC} whether or not the non-interacting system shows BEC.
Assuming this to be the case, one can ask what the effect of adding a repulsive 
interaction \eqref{reppairint} is. It can be shown \cite{BolKer16} that in the
theromodynamic limit the occupation of this ground state vanishes,
\begin{equation}
\limsup_{\mathcal{L}\to\infty}\frac{\omega_\beta(N(\phi_0))}{\mathcal{L}} =0\ .
\end{equation}
According to a direct analogue of Definition~\ref{BECdef}, this means that
there is no BEC into the free ground state. Hence, although BEC into the
ground state was present in the free bosonic system, even the smallest perturbation 
by repulsive pair interactions of the type \eqref{2Ppotential} make this 
condensation disappear.

Summarising, although free bosons on a compact metric graph may display
BEC, an addition of repulsive interactions is likely to destroy the condensate.
The BEC that can occur is caused by $\delta$-type, attractive, one-particle 
potentials in the vertices and the associated condensate is not homogeneous,
but concentrated around the vertices.

\section{Exactly solvable many-particle quantum graphs}
Much of the success of one-particle quantum graph models relies on the fact 
that eigenvalues possess a simple characterisation in terms of a secular
equation based on the finite-dimensional determinant \eqref{1Psecular}. On
the one hand this enables one to compute eigenvalues by searching for
zeros of a low-dimensional determinant, and on the other hand it leads to
a trace formula that is an identity \cite{Rot83,KotSmi99,BolEnd09} rather 
than a semiclassical approximation as in other, typical models of quantum
systems (see, e.g., \cite{Gut90}). 

The secular equation rests on the fact that the edge-$e$ component of an 
eigenfunction must be of the form
\begin{equation}
\psi_e (x_e) = a_e\,\ue^{\ui kx_e} + b_e\,\ue^{-\ui kx_e}\ ,
\end{equation}
with some coefficients $a_e,b_e\in\mathbb{C}$. It provides a sufficient
condition that the $2E$ coefficients must satisfy in order to yield an 
eigenfunction. Components of $N$-particle eigenfunctions with eigenvalue 
$\lambda$ are functions of $N$ variables,
$x_{e_1},\dots,x_{e_N}$, so that, in general, they are of the form
\begin{equation}
\label{Neigenfct}
\psi_{e_1\dots e_n}(x_{e_1},\dots,x_{e_N}) = \int_{\mathbb{R}^N}
a_{e_1\dots e_n} (k_1,\dots,k_N)\,\delta(k_1^2+\cdots +k_N^2 -\lambda)\,
\ue^{\ui(k_1x_{e_1}+\cdots +k_N x_{e_N})}\,\ud^N k \ .
\end{equation}
Hence, instead of the need to determine constants, in generic cases
with $N\geq 2$ a replacement for the secular equation needs to determine 
coefficient functions $a_{e_1\dots e_n} (\cdot)$. This would therefore
be a condition imposed on elements of an infinite dimensional space.

However, under certain circumstances such conditions may collapse
to a finite dimensional subspace. This, indeed, will be the case if certain
integrability conditions are satisfied which imply that eigenfunctions can 
be represented by a so-called Bethe-ansatz \cite{Gau14}. In essence,
a Bethe-ansatz is a finite sum of plane waves,
\begin{equation}
\label{Bethean}
 \psi_{\mathrm{Bethe}}(x_1,\dots,x_N) = \sum_{\alpha\in J}A_\alpha\,
 \ue^{\ui(k^\alpha_1 x_1+\cdots+k^\alpha_N x_N)}\ ,
\end{equation}
where $J$ is a finite index set, such that the vectors 
$(k^\alpha_1,\dots,k^\alpha_N)$ with
$(k^\alpha_1)^2+\cdots+(k^\alpha_N)^2=\lambda$ are drawn from a finite
subset of $\mathbb{R}^N$. Contrasting this with the general form 
\eqref{Neigenfct} of an $N$-particle eigenfunction suggests that a
Bethe-ansatz will only be possible under some strict conditions. These
integrability conditions (see, e.g., \cite{Gau14,CauCra07}) are also behind 
the Lieb-Liniger model, for which it has long been known that eigenfunctions 
can be characterised in terms of a finite number of coefficients \cite{LL63} 
and take a Bethe-ansatz form \eqref{Bethean}. The first example
of a quantum graph with a non-trivial vertex where a Bethe-ansatz was
shown to work is a particle on a line or ring with one vertex, where
non-Kirchhoff conditions are imposed \cite{CauCra07}. Since $N$ particles
on a graph have a configuration space that is composed of subsets of 
$\mathbb{R}^N$, a further class of example in this spirit where a 
Bethe-ansatz for the eigenfunctions is known to exist is given by the
Dirichlet- or Neumann Laplacian on a fundamental domain for the action
of a Weyl group \cite{Ber80}. Indeed, the mechanism behind these examples 
can be carried over to a class of quantum graph models, generalising
the approach of \cite{CauCra07}. This has been done in 
\cite{BolGar17,BolGar17a}, and in the following we will review those results.

The simplest example is that of two bosons on an interval $[0,l]$ with 
Dirichlet boundary conditions at the interval ends and a $\delta$-interaction 
\eqref{contactint} between the particles. This is a modification of the
Lieb-Liniger model first studied by Gaudin \cite{Gau71}. The two-particle
Hilbert space
\begin{equation}
\mathcal{H}_2 = L^2(0,l)\otimes L^2(0,l) \cong L(D)\ ,
\end{equation}
where $D$ is the square \eqref{diagsquare} that will be dissected as in
\eqref{squarecut}. Accordingly, $\psi^\pm\in L^2(D^\pm)$, for which the
Bethe-ansatz
\begin{equation}
\label{BetheGaudin}
 \psi^\pm(x_1,x_2) = \sum_{P\in\mathcal{W}_2}A_P^\pm\,
 \ue^{\ui(k_{P(1)} x_1 + k_{P(2)}x_2)}\ ,
\end{equation}
can be shown to lead to eigenfunctions. Here $\mathcal{W}_2$ is a 
Weyl group, which is a finite group with eight elements. The fact that
the ansatz \eqref{BetheGaudin} is consistent comes from the conditions 
an eigenfunction has to satisfy:
\begin{itemize}
\item[(i)] $-\Delta\psi=\lambda\psi$;
\item[(ii)] $\psi(x_1,x_2)=\psi(x_2,x_1)$;
\item[(iii)] $\left(\frac{\partial}{\partial x_1}-\frac{\partial}{\partial x_2}
\right)\psi(x,x)=\alpha\psi(x,x)$;
\item[(iv)] $\psi(0,x)=\psi(l,x)=0$.
\end{itemize}
These conditions are compatible with the plane-wave form 
$A\,\ue^{\ui(k_1 x_1 + k_2 x_2)}$ and only require substitutions of the wave vectors
$(k_1,k_2)$ with either $(k_2,k_1)$, $(-k_1,k_2)$, or combinations thereof. 
These operations, seen as an action of a group on $\mathbb{R}^2$, generate 
the action of the Weyl group $\mathcal{W}_2=\mathbb{Z}/2\mathbb{Z}\rtimes S_2$.
An interesting interpretation of this in terms of reflected rays can be found 
in \cite{McG64}. The conditions (i)--(iv) also yield a restriction on the 
allowed wave vectors,
\begin{equation}
\ue^{-2\ui k_n l}=\frac{k_n+k_m-\ui\alpha}{k_n+k_m+\ui\alpha}\,
\frac{k_n-k_m-\ui\alpha}{k_n-k_m+\ui\alpha}\ ,
\end{equation}
for all $n\neq m\in\{1,2\}$. Solutions $(k_1,k_2)\neq(0,0)$ with 
$0\leq k_1\leq k_2$ then give eigenvalues $\lambda=k_1^2+k_2^2$.

The above model, for $N$ bosons, was first studied by Gaudin \cite{Gau71,Gau14}.
The original Lieb-Liniger model \cite{LL63}, however, was formulated for particles
on a circle. Instead of the Dirichlet conditions (iv) one then has to require
periodic boundary conditions, which renders the reflection  
$(k_1,k_2)\mapsto (-k_1,k_2)$ expendable. The Bethe ansatz for the Lieb-Liniger
model, therefore, only requires a summation over the symmetric group $S_2$,
rather than over the Weyl group $\mathcal{W}_2=\mathbb{Z}/2\mathbb{Z}\rtimes S_2$
as in \eqref{BetheGaudin}. Hence, one concludes that the boundaries of the 
interval are responsible the additional reflections necessary in the Bethe 
ansatz. In a graph language, the interval ends are vertices of degree one.
Adding a vertex of degree two in the context of a Bethe ansatz was first done
in \cite{CauCra07}, where is was found that this is incompatible with 
$\delta$-pair interactions. Instead, the interactions were modified to include
another contribution that formally looks like $\delta(x_1+x_2)$. This means that
the particles do not only interact when they touch, but also when they are the 
same distance away from the vertex on either of the edges connected by the vertex.
If then this interaction is provided with a variable strength that is supported 
in a neighbourhood of the vertex, this will still be a localised interaction.

An extension of this method to arbitrary metric graphs with generalisations of 
the interactions introduced in \cite{CauCra07} has been done in \cite{BolGar17},
and an extension to $N$ particles can be found in \cite{BolGar17a}. The first step 
is to define the singular pair interactions, and this is most clearly done on
a star graph of $d$ half-lines. Then a given metric graph is first converted 
into its star representation, consisting of $|\mathcal{V}|$ star graphs, i.e., 
one for each $v\in\mathcal{V}$ of degree $d_v$. A Bethe ansatz is
made for each star graph, and then the boundary conditions that represent the
pair interactions on each star, as well as the matching conditions that allow
to recover the original graph from its star representation imply conditions
that characterise eigenvalues of the Laplacian with the singular pair
interactions.  

If now $\Gamma_v$ is the star graph with $d_v$ half-lines that is associated 
with the vertex $v\in\mathcal{V}$, the Hilbert space is $\oplus_{ee'}L^2(D_{ee'})$; 
here $e$ and $e'$ are edge labels and $D_{ee'}=\mathbb{R}_+^2$ is the two-particle    
configuration space when one particle is on edge $e$ and the other one on $e'$.
These configuration spaces are dissected into $D_{ee'}^+$ and $D_{ee'}^-$, which are
defined in analogy to \eqref{squarecut}, and the restrictions of functions 
$\psi_{ee'}$ to $D_{ee'}^\pm$ are denoted as $\psi_{ee'}^\pm$. One then requires that
\begin{equation}
\label{deltatildecond}
\begin{split}
\psi^+_{ee'}(x,x) &=\psi^-_{e'e}(x,x)\ ,\\
\left(\frac{\partial}{\partial x_1}-\frac{\partial}{\partial x_2}
-2\alpha\right)\psi^+_{ee'}(x,x)
                &=\left(\frac{\partial}{\partial x_1}-
                                 \frac{\partial}{\partial x_2}\right)
                                 \psi^-_{e'e}(x,x)\ .
\end{split}
\end{equation}
These conditions are similar to those generating $\delta$-interactions. However,
they apply to all pairs of edges, not only the diagonal ones. Hence there is a
singular interaction, also across edges, whenever two particles are the same distance 
away from the 
vertex. A Bethe ansatz \eqref{BetheGaudin} is then introduced for the functions
$\psi_{ee'}^\pm$, with the yet to be determined coefficients $A^\pm_{P,ee'}$. In a 
next step one has to cut the edges of the stars to the finite lengths that are 
required and then glue the stars to finally yield the original compact graph. In
this glueing process it has to be ensured that the interactions only take place
when two edges are connected in the same vertex, and not arbitrarily across the
graph. In addition to \eqref{deltatildecond}, this yields conditions to be imposed 
on the coefficients $A^\pm_{P,ee'}$. These conditions can be formulated in terms
of secular equations involving determinants
\begin{equation}
\label{sectildedelta}
Z(k_1,k_2) = \det\bigl(\eins-U(k_1,k_2)\bigr),
\end{equation}
where
\begin{equation}
U(k_1,k_2) = E(k_2)Y(k_2-k_1)(\eins_2\otimes S(k_2)\otimes\eins_{2E})Y(k_1+k_2),
\end{equation}
and
\begin{equation}
\begin{split}
Y(k)&=\frac{1}{k+i\alpha}\begin{pmatrix}-i\alpha&k\\k&-i\alpha\end{pmatrix}\otimes
            \boldsymbol{\alpha}+\begin{pmatrix}0&1\\1&0\end{pmatrix}
            \otimes(\eins_{E^2}-\boldsymbol{\alpha})\mathbb{T}_{E^2} \\
E(k)&=\eins_{4E}\otimes\begin{pmatrix}0&1\\1&0\end{pmatrix}\otimes
            \ue^{\ui k\boldsymbol{l}};
\end{split}
\end{equation}
Here $\mathbb{T}_{E^2}$ is a permutation matrix, and $\boldsymbol{\alpha}$ is
a diagonal matrix with the interaction strengths (which could, in principal, be 
different for each pair of edges) $\alpha_{ee'}$ on the diagonal. More details
can be found in \cite{BolGar17}. The final result is the following statement.
\begin{theorem}
\label{deltatildesecularThm}
Let $-\Delta_2$ be a two-particle Laplacian on a compact metric graph with
pair interactions as decribed above. Then the zeros $(k_1,k_2)$, where 
$0\leq k_1\leq k_2$, of $Z(k_i,k_j)$ for $i\neq j\in\{1,2\}$ of order $m$
correspond to eigenvalues $k_1^1+k_2^2$ of $-\Delta_2$.
\end{theorem}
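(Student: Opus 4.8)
The plan is to establish a correspondence between eigenfunctions of the two-particle Laplacian $-\Delta_2$ with the pair interactions described above and zeros of the secular determinant $Z(k_1,k_2)$. The approach mirrors the one-particle case, where eigenvalues are characterised through the determinant condition \eqref{1Psecular}, but now the role of the scattering matrix $S(k)$ and the metric matrix $T(k)$ is played by the more elaborate objects $Y(k)$ and $E(k)$, which encode both the particle-exchange physics at the diagonal and the multi-edge structure of the star representation. First I would fix a spectral parameter $\lambda$ and write $\lambda=k_1^2+k_2^2$ with $0\leq k_1\leq k_2$, and make the Bethe ansatz \eqref{BetheGaudin} for each restricted component $\psi^\pm_{ee'}$ on every star graph $\Gamma_v$, collecting the unknown amplitudes $A^\pm_{P,ee'}$ into a single coefficient vector. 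The goal is then to reformulate the full set of conditions (the interaction conditions \eqref{deltatildecond} together with the glueing and matching conditions) as a single linear system $(\eins-U(k_1,k_2))A=0$ on this coefficient vector.

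The key steps, in order, are as follows. Step one: impose the continuity and derivative-jump conditions \eqref{deltatildecond} along each diagonal $x_1=x_2$. Since both sides are finite sums of plane waves, equating coefficients of like exponentials forces relations among the $A^\pm_{P,ee'}$; I expect these to assemble precisely into the action of the matrix $Y$, whose $2\times 2$ block $\tfrac{1}{k+\ui\alpha}\bigl(\begin{smallmatrix}-\ui\alpha&k\\k&-\ui\alpha\end{smallmatrix}\bigr)$ is the familiar Lieb-Liniger two-body scattering factor and whose complementary block involving $\mathbb{T}_{E^2}$ implements the exchange across distinct edges. Here the two arguments $k_1+k_2$ and $k_2-k_1$ arise because the relevant quantities are the sum and difference of the wave vectors, matching the two factors $Y(k_1+k_2)$ and $Y(k_2-k_1)$ in $U$. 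Step two: after cutting the star half-lines to finite edge lengths $l_e$, impose at each vertex the single-particle matching encoded by the vertex scattering matrix $S(k_2)$ (tensored appropriately as $\eins_2\otimes S(k_2)\otimes\eins_{2E}$), together with the propagation of phases $\ue^{\ui k\boldsymbol{l}}$ along edges; this is the origin of the factor $E(k_2)$. Step three: combine these to obtain $U(k_1,k_2)=E(k_2)Y(k_2-k_1)(\eins_2\otimes S(k_2)\otimes\eins_{2E})Y(k_1+k_2)$ and conclude that a nontrivial coefficient vector exists if and only if $\det(\eins-U(k_1,k_2))=0$, i.e. $Z(k_1,k_2)=0$.

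The main obstacle I anticipate is the bookkeeping in Step one and the glueing in Step two, specifically verifying that the conditions genuinely close on the finite-dimensional space of Bethe amplitudes rather than spilling into the infinite-dimensional space of general coefficient functions \eqref{Neigenfct}. This is exactly the integrability content: one must check that reflecting and exchanging the wave vectors $(k_1,k_2)$ under the Weyl group $\mathcal{W}_2=\mathbb{Z}/2\mathbb{Z}\rtimes S_2$ maps the finite plane-wave set to itself, so that no new frequencies are generated by the boundary and interaction conditions. The subtle point is that the interaction \eqref{deltatildecond} acts across all pairs of edges, not just the diagonal ones, and so the consistency of the ansatz requires a genuine compatibility (a Yang-Baxter-type condition) between the two-body factor $Y$ and the vertex scattering $S$; without it the successive applications of these matrices around the graph would be ambiguous. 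I would therefore devote the most care to showing that $U(k_1,k_2)$ is well-defined and that its kernel is in bijection with the actual eigenfunctions.

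Finally, to account for the multiplicity statement I would argue that the order $m$ of the zero of $Z$ equals the dimension of $\ker(\eins-U(k_1,k_2))$, which in turn equals the number of linearly independent Bethe eigenfunctions at that spectral value, hence the eigenvalue multiplicity; the regularity needed to make this a genuine self-adjoint eigenvalue problem is supplied by the construction of the interacting Laplacian via Theorem~\ref{TheoremPL(y)2} and the contact-interaction conditions \eqref{fctcont}--\eqref{derjump}, so that $-\Delta_2$ is self-adjoint and the Bethe functions lie in its domain.
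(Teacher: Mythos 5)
Your overall route is the same as the one the paper sketches (and attributes to the cited work): pass to the star representation, make the Bethe ansatz \eqref{BetheGaudin} on each star, let the diagonal conditions \eqref{deltatildecond} produce the Lieb--Liniger factors in $Y$, let the cutting/glueing at vertices produce $S$ and the phase matrix $E$, and reduce everything to a linear system on the finite vector of Bethe amplitudes. But there is a genuine gap in your Step three: you claim that a nontrivial amplitude vector exists \emph{if and only if} $\det\bigl(\eins-U(k_1,k_2)\bigr)=0$, i.e.\ that the single equation $Z(k_1,k_2)=0$ characterises eigenvalues. This cannot be correct as stated. The equation $Z(k_1,k_2)=0$ is one real-analytic condition on two unknowns $(k_1,k_2)$, so its solution set is generically a one-parameter family of curves in the $(k_1,k_2)$-plane; if every point on such a curve gave an eigenvalue $k_1^2+k_2^2$, the spectrum of $-\Delta_2$ on a compact graph would contain continua, contradicting discreteness. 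This is precisely why the theorem requires $(k_1,k_2)$ to be a zero of $Z(k_i,k_j)$ \emph{for both orderings} $i\neq j\in\{1,2\}$: the matrix $U(k_1,k_2)$ treats the two momenta asymmetrically (only $k_2$ enters $S$ and $E$), so one determinant condition quantises one momentum given the other, and the full Bethe-ansatz consistency --- the amplitudes must close up under either assignment of which particle propagates through the vertices --- yields the second, coupled condition $Z(k_2,k_1)=0$. This is the exact analogue of the pair of Gaudin-type equations displayed earlier in the section, which hold ``for all $n\neq m\in\{1,2\}$''; intersecting the two solution curves is what produces a discrete set of eigenvalue parameters.

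A secondary, smaller gap: your multiplicity argument asserts that the order $m$ of the zero equals $\dim\ker\bigl(\eins-U\bigr)$ without justification. Even in the one-particle case \eqref{1Psecular} this equality is a nontrivial lemma that relies on unitarity of $U(k)$ for real $k$; here you would need to verify that $U(k_1,k_2)$ (a product of the unitary Lieb--Liniger blocks in $Y$, the unitary vertex scattering matrix $S$, permutations, and phases) is unitary on the relevant momentum set, and moreover reconcile the multiplicity count with the fact that two determinants, not one, are vanishing simultaneously. Both points need to be addressed before your plan matches the statement being proved.
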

We note that, with \eqref{sectildedelta} in mind, the secular equations are
reminiscent of the one-particle case \eqref{1Psecular}. Some special cases and 
numerical results in some example can be found in \cite{BolGar17}. The 
generalisation to $N$ particles follows the same lines and is contained in
\cite{BolGar17a}.

\section{Many-particle models on a simple non-compact graph}
\label{Sec3}
In this section we are concerned with interacting two-particle systems on a simple non-compact 
quantum graph, namely the positive half-line $\mathbb{R}_+=[0,\infty)$. More specifically, the Hamiltonian 
has several contributions: a hard-wall binding potential and two singular contributions, one of which
is of the vertex-induced type defined in Section~\ref{Sec2.1} and the other one representing the contact 
interactions introduced in Section~\ref{sec2.2}. The Hamiltonian is formally given by
\begin{equation}
\label{FormalHamiltonian}
	H=-\frac{\partial^2}{\partial x^2}-\frac{\partial^2}{\partial y^2}+v_{b}(|x-y|)+v(x,y)\left[\delta(x)+\delta(y)\right]\ + \alpha(y) \delta(x-y)\ ,
\end{equation}
where $v_{b}:\mathbb{R}_+ \rightarrow \overline{\mathbb{R}_+}$ is a (hard-wall) binding 
potential that is (formally) defined via
\begin{equation}
	v_{b}(x):=\begin{cases} 0 \quad \text{for} \quad x\leq d \ , \\
		\infty \quad \text{otherwise}\ ,
	\end{cases}
\end{equation}
where $d > 0$ characterises the size of the pair. We realise this formal potential by requiring
Dirichlet boundary conditions at $|x-y|=d$. Furthermore, $v:\mathbb{R}^2_+ \rightarrow \mathbb{R}$ 
is supposed to be a real-valued, symmetric and bounded potential, $v \in L^{\infty}(\mathbb{R}^2_+)$.
Note that setting $d=\infty$ corresponds to the case where no binding potential 
in~\eqref{FormalHamiltonian} is added. Also note that we always assume 
$\alpha(\cdot) \in L^{\infty}(\mathbb{R}_+)$.

It is important to note that interactions of the form~\eqref{FormalHamiltonian} generically break translation invariance, even with potentials $v(x,y)=v(|x-y|)$, and consequently lead to non-separable many-body problems. Although only rarely discussed in the literature, they have important applications in various areas of physics~\cite{glasser1993solvable,glasser2005solvable}. 

Other important situations in which singular interactions of the above form are expected can be found
in solid-state physics. 
For example, similar to the Cooper pairing mechanism of superconductivity~\cite{CooperBoundElectron}, 
two electrons in a metal can effectively interact with each other through the interaction of each individual 
particle with the lattice via electron-phonon-electron interactions. Hence, if a metal exhibits spatially
localised defects, there will be effective, spatially localised two-particle interactions. 

Furthermore, the idea to consider a binding potential in~\eqref{FormalHamiltonian}, which effectively 
leads to a `molecule', or a pair of particles, also originated from Cooper's 
work \cite{CooperBoundElectron}; another example can be found in \cite{QUnruh}, where 
the scattering of a bound pair of particles at mirrors is investigated. As a matter of fact, it was Cooper 
who realised that electrons in a metal will form pairs (Cooper pairs), if the metal is in the superconducting
phase, i.e., is cooled below some critical temperature~\cite{BCSI,MR04}. Hence, the 
Hamiltonian~\eqref{FormalHamiltonian}, or versions thereof, provide toy models to investigate bound 
pairs of particles in a quantum wire with defects~\cite{KMBound,KernerRandomI}. Most importantly, 
in this model one can derive rigorous results related to the superconducting behaviour of quantum 
wires~\cite{KernerElectronPairs,KernerSurfaceDefects,KernerInteractingPairs}. 

\subsection{The model without hard-wall binding potential}
In this subsection we present results regarding the Hamiltonian~\eqref{FormalHamiltonian} without 
binding potential, i.e., $v_{b}\equiv 0$. For more detail, we refer 
to~\cite{BKSingular,BKContact,KM16,EggerKerner} from which most of the results are taken. 

In a first step one has to give a rigorous meaning to the Hamiltonian~\eqref{FormalHamiltonian}, 
which is only formally defined due the $\delta$-distributions. This requires a suitable variant of
Theorem~\ref{2quadformgraph} and Section~\ref{sec2.2}. (Note that here the two-particle configuration
space without binding potential is $\mathbb{R}^2_+$). In order to do this one constructs a quadratic 
form on $L^2(\mathbb{R}^2_+)$, 
\begin{equation}\label{FormNoBinding}
	q^{\infty}_{\alpha,\sigma}[\varphi]:=\int_{\mathbb{R}^2_+}|\nabla \varphi|^2 \ \ud x - \int_{0}^{\infty}
	\sigma(y)|\gamma(\varphi)|^2 \ud y + \int_{0}^{\infty}\alpha(y)|\varphi(y,y)|^2 \ \ud y\ ,
\end{equation}
where $\sigma(y):=-v(0,y)$ is a real-valued boundary potential and $\gamma(\varphi):=\left(\varphi(y,0),
\varphi(0,y)\right)^T$ are the boundary values of $\varphi \in H^1(\mathbb{R}^2_+)$, which are 
well-defined in $L^2(\partial \mathbb{R}^2_+)$ due to the trace theorem for Sobolev 
functions~\cite{Dob05}. In the same way one defines $\varphi|_{x=y}$ as the trace of 
$\varphi \in H^1(\mathbb{R}^2_+)$ along the diagonal $x=y$.
\begin{theorem} For any given $\sigma,\alpha \in L^{\infty}(\mathbb{R}_+)$ the form $\left(q^{\infty}_{\alpha,\sigma},H^1(\mathbb{R}^2_+)  \right)$ is bounded from below and closed. 
\end{theorem}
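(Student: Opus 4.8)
The plan is to establish the two claimed properties—semi-boundedness from below and closedness—by treating the boundary and diagonal terms in $q^{\infty}_{\alpha,\sigma}$ as relatively form-bounded perturbations of the free Dirichlet form $\int_{\mathbb{R}^2_+}|\nabla\varphi|^2\,\ud x$ with form domain $H^1(\mathbb{R}^2_+)$. First I would recall that the free form is nonnegative and closed on $H^1(\mathbb{R}^2_+)$, so everything reduces to controlling the two surface integrals. The essential analytic input is a trace inequality: for $\varphi\in H^1(\mathbb{R}^2_+)$ one has boundary traces $\gamma(\varphi)\in L^2(\partial\mathbb{R}^2_+)$ and a diagonal trace $\varphi|_{x=y}\in L^2(\{x=y\})$, and in each case the trace map satisfies an interpolation-type bound of the form
\begin{equation}
\|T\varphi\|_{L^2(\text{boundary})}^2 \leq \epsilon\,\|\nabla\varphi\|_{L^2(\mathbb{R}^2_+)}^2 + C_\epsilon\,\|\varphi\|_{L^2(\mathbb{R}^2_+)}^2
\end{equation}
for every $\epsilon>0$, where $T$ denotes either boundary trace component or the restriction to the diagonal.

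Granting such a bound, the argument is standard. Since $\sigma,\alpha\in L^\infty(\mathbb{R}_+)$, the two perturbing terms are dominated by $\|\sigma\|_\infty\|\gamma(\varphi)\|^2$ and $\|\alpha\|_\infty\|\varphi|_{x=y}\|^2$ respectively, and feeding in the trace inequality with $\epsilon$ chosen small relative to $1/(\|\sigma\|_\infty+\|\alpha\|_\infty)$ shows that the perturbation is form-bounded with relative bound strictly less than one. The KLMN theorem then yields both semi-boundedness from below and closedness of $q^\infty_{\alpha,\sigma}$ on the unchanged form domain $H^1(\mathbb{R}^2_+)$ at once; alternatively one verifies closedness directly by showing the perturbed form norm is equivalent to the $H^1$-norm, so that a Cauchy sequence in the form norm converges in $H^1$ and the limit lies back in the domain.

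The step I expect to be the main obstacle is establishing the trace inequality with arbitrarily small prefactor on the gradient, particularly for the diagonal trace along $\{x=y\}$, which is a line of infinite length in the unbounded domain $\mathbb{R}^2_+$. On a bounded Lipschitz domain the compactness of the trace embedding $H^1\hookrightarrow L^2(\partial)$ makes the small-$\epsilon$ bound automatic, but on $\mathbb{R}^2_+$ one must argue more carefully since no such compactness is available. I would handle this by a localisation/scaling argument: rotate coordinates so the diagonal becomes an axis, reduce to a one-dimensional trace estimate in the transverse variable of the form $|\varphi(0)|^2\le\epsilon\|\varphi'\|^2+C_\epsilon\|\varphi\|^2$ on a half-line (which follows from the fundamental theorem of calculus together with Young's inequality), and then integrate over the tangential variable. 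The same transverse-slicing technique simultaneously controls the boundary traces on $\{x=0\}$ and $\{y=0\}$, so a single lemma delivers all three trace bounds and the proof concludes via KLMN as above.
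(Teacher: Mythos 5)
Your overall strategy is the right one, and it matches how this theorem is established in the works the paper cites: the free form $\int_{\mathbb{R}^2_+}|\nabla\varphi|^2\,\ud x$ is nonnegative and closed on $H^1(\mathbb{R}^2_+)$, the two surface terms are controlled by trace inequalities with an arbitrarily small coefficient in front of the gradient term, and the KLMN theorem (or the equivalent direct verification that the form norm is equivalent to the $H^1$-norm) delivers semi-boundedness and closedness on the unchanged domain. Your treatment of the boundary traces along $\{x=0\}$ and $\{y=0\}$ is sound: from any boundary point the transverse coordinate ray is a genuine half-line contained in $\mathbb{R}^2_+$, so the one-dimensional estimate $|f(0)|^2\le\epsilon\|f'\|^2_{L^2(\mathbb{R}_+)}+\epsilon^{-1}\|f\|^2_{L^2(\mathbb{R}_+)}$ integrates over the tangential variable with constants independent of that variable.

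However, your derivation of the diagonal trace estimate has a genuine gap. After rotating so that the diagonal $\{x=y\}$ becomes an axis, the quarter-plane becomes the wedge $\{(u,v):\ u>|v|\}$, and the transverse slice through a diagonal point at distance $u$ from the origin is the finite interval $(-u,u)$, not a half-line: in the original coordinates, the ray from $(t,t)$ in the direction $(1,-1)/\sqrt{2}$ (or $(-1,1)/\sqrt{2}$) leaves $\mathbb{R}^2_+$ after length $\sqrt{2}\,t$. On an interval of length $\ell$, any inequality of the form $|f(0)|^2\le\epsilon\|f'\|^2+C\|f\|^2$ forces $C\ge 1/\ell$ (test with constant functions), so the constant $C_\epsilon$ cannot be chosen uniformly in the tangential variable, and the step ``integrate over the tangential variable'' breaks down near the corner of the quarter-plane. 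The repair is simple and leaves the rest of your argument intact: slice parallel to a coordinate axis instead of transversally to the diagonal. From each diagonal point $(t,t)$ the vertical ray $\{(t,y):\ y>t\}$ is a true half-line contained in $\mathbb{R}^2_+$, whence
\begin{equation}
|\varphi(t,t)|^2\ \le\ \epsilon\int_t^\infty|\varphi_y(t,y)|^2\,\ud y\ +\ \epsilon^{-1}\int_t^\infty|\varphi(t,y)|^2\,\ud y\ ,
\end{equation}
and integrating over $t\in(0,\infty)$ gives the diagonal trace bound with constants independent of $t$. (Alternatively, extend $\varphi$ by a bounded extension operator to $H^1(\mathbb{R}^2)$ and use the trace estimate for a full line; the extension constant merely rescales $\epsilon$, which is harmless since $\epsilon$ is arbitrary.) With either repair, your KLMN argument goes through as stated.
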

Hence, according to the representation theorem for quadratic forms~\cite{BEH08} there exists a unique 
self-adjoint operator associated with the form $q^{\infty}_{\alpha,\sigma}$. We denote this operator by 
$-\Delta^{d=\infty}_{\sigma,\alpha}$. Since the only volume term in~\eqref{FormNoBinding} is associated 
with the $\nabla$-operator, this operator acts as the standard two-dimensional Laplacian $-\Delta$ on 
functions $\varphi \in \cD(-\Delta^{d=\infty}_{\sigma,\alpha}) \subset H^1(\mathbb{R}^2_+)$. The boundary
integrals in~\eqref{FormNoBinding}, on the other hand, reflect boundary conditions. More explicitly, one 
has coordinate-dependent Robin conditions along $\partial \mathbb{R}^2_+$, and coordinate-independent
jump conditions along the diagonal $x=y$, see~\cite{BKContact,KM16} for more detail.

In a next step we characterise the spectrum of the self-adjoint operator 
$-\Delta^{d=\infty}_{\sigma,\alpha}$. 
\begin{theorem}
\label{TheoremEssentialSpectrumI} 
For any given $\sigma,\alpha \in L^{\infty}(\mathbb{R}_+)$ one has 
$[0,\infty) \subset \sigma_{ess}(-\Delta^{d=\infty}_{\sigma,\alpha})$. Furthermore, if 
$\sigma(y),\alpha(y) \rightarrow 0$ as $y \rightarrow \infty$ one has 
$\sigma_{ess}(-\Delta^{d=\infty}_{\sigma,\alpha=0})=[0,\infty)$. 
\end{theorem}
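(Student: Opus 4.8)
The plan is to prove the stated identity by establishing the two inclusions separately. The inclusion $[0,\infty)\subset\sigma_{ess}$ holds for arbitrary $\sigma,\alpha\in L^\infty(\mathbb{R}_+)$ and is obtained by exhibiting, for every $\lambda=k_1^2+k_2^2\geq 0$, a singular Weyl sequence supported deep in the interior of $\mathbb{R}^2_+$, where the operator reduces to the free Laplacian. The reverse inclusion $\sigma_{ess}\subset[0,\infty)$ uses the decay of $\sigma$ (and of $\alpha$, or its absence when $\alpha\equiv 0$) and is obtained from a Persson-type characterisation of $\inf\sigma_{ess}$ together with a Sobolev trace inequality.

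For the first inclusion, fix $\lambda=k_1^2+k_2^2\geq 0$ and a bump $\chi\in C_0^\infty(\mathbb{R}^2)$ with $\|\chi\|_{L^2}=1$ and support in the unit ball. First I would set
\begin{equation}
\varphi_n(x,y):=R_n^{-1}\,\chi\!\left(\frac{x-a_n}{R_n},\frac{y-b_n}{R_n}\right)\ue^{\ui(k_1 x+k_2 y)}\ ,
\end{equation}
with centres $(a_n,b_n)\to\infty$ chosen so that the supports move off to infinity while keeping a distance $\to\infty$ from the boundary $\partial\mathbb{R}^2_+$ and from the diagonal $\{x=y\}$ (e.g.\ $a_n=3n$, $b_n=n$), and with widths $R_n\to\infty$ slow enough to maintain this separation. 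For $n$ large each $\varphi_n$ is compactly supported in the open interior, hence lies in $\cD(-\Delta^{d=\infty}_{\sigma,\alpha})$ with $-\Delta^{d=\infty}_{\sigma,\alpha}\varphi_n=-\Delta\varphi_n$, so the boundary and contact terms never act on it. A direct computation shows that $(-\Delta-\lambda)\varphi_n$ consists only of commutator terms carrying derivatives of the cutoff, whence $\|(-\Delta^{d=\infty}_{\sigma,\alpha}-\lambda)\varphi_n\|=O(R_n^{-1})\to 0$ while $\|\varphi_n\|=1$ and $\varphi_n\rightharpoonup 0$. This singular sequence at energy $\lambda$ shows $\lambda\in\sigma_{ess}$.

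For the reverse inclusion I would invoke Persson's theorem, which characterises the bottom of the essential spectrum of the closed, lower-semibounded form $q^\infty_{\alpha,\sigma}$ as
\begin{equation}
\inf\sigma_{ess}\bigl(-\Delta^{d=\infty}_{\sigma,\alpha}\bigr)
=\lim_{R\to\infty}\ \inf\Bigl\{\,q^\infty_{\alpha,\sigma}[\varphi]\ :\ \varphi\in C_0^\infty(\mathbb{R}^2_+\setminus B_R),\ \|\varphi\|=1\,\Bigr\}\ ,
\end{equation}
where $B_R$ is the ball of radius $R$ about the origin. For $\varphi$ supported outside $B_R$, the boundary traces are sampled only at arc-length parameters $>R$, where $|\sigma|\leq\epsilon(R)$, and the diagonal trace only at $y>R/\sqrt{2}$, where $|\alpha|\leq\epsilon(R)$, with $\epsilon(R)\to 0$. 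The Sobolev trace interpolation inequality on $\mathbb{R}^2_+$ gives, for every $\delta>0$, a constant $C_\delta$ with $\|\gamma(\varphi)\|_{L^2(\partial\mathbb{R}^2_+)}^2+\|\varphi|_{x=y}\|_{L^2}^2\leq\delta\|\nabla\varphi\|^2+C_\delta\|\varphi\|^2$. Inserting these bounds into \eqref{FormNoBinding} yields
\begin{equation}
q^\infty_{\alpha,\sigma}[\varphi]\ \geq\ \bigl(1-\epsilon(R)\delta\bigr)\|\nabla\varphi\|^2-\epsilon(R)\,C_\delta\,\|\varphi\|^2\ \geq\ -\epsilon(R)\,C_\delta\ ,
\end{equation}
once $\delta$ is fixed small enough that the gradient coefficient stays nonnegative for large $R$. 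Letting $R\to\infty$ forces the right-hand side to $0$, so $\inf\sigma_{ess}\geq 0$; combined with the first inclusion this gives $\sigma_{ess}=[0,\infty)$.

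The main obstacle I anticipate is the diagonal contact term. Unlike the boundary traces, the restriction $\varphi|_{x=y}$ lives on an unbounded line through the interior, so the corresponding trace estimate must be set up carefully (for instance by rotating to diagonal coordinates $u=x-y$, $v=x+y$ and applying the one-dimensional trace inequality fibrewise); one must also check that $C_\delta$ can be taken uniform in $R$ so that only $\epsilon(R)$ drives the limit. A secondary point is confirming that Persson's characterisation is available in this form-perturbed setting, which follows because $-\Delta^{d=\infty}_{\sigma,\alpha}$ is the self-adjoint operator of a closed, lower-semibounded form whose restriction off $B_R$ differs from the free Neumann form only by the decaying boundary and contact contributions. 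In the case $\alpha\equiv 0$ of the theorem the diagonal term is absent and only the boundary analysis is needed, so this obstacle disappears and the argument simplifies accordingly.
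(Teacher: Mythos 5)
Your proof is correct, and it is self-contained where the paper simply defers to the proof of Theorem~3.1 in \cite{KM16} (for $\alpha=0$) plus an ``inspection'' to cover the contact term. The comparison splits naturally into the two halves. Your first half --- singular Weyl sequences of widening, modulated bumps pushed to infinity while keeping a growing distance from $\partial\mathbb{R}^2_+$ and from the diagonal, so that the operator acts as the free Laplacian on them --- is exactly the mechanism used in the cited work and echoed in the paper's own sketch for Theorem~\ref{EssentialSpectrumII} (there with Dirichlet ground states on rectangles instead of plane-wave bumps). Your second half takes a genuinely different route: the cited proof, like the lemma immediately following this theorem in the paper, obtains $\inf\sigma_{ess}\geq 0$ by Neumann bracketing --- cut $\mathbb{R}^2_+$ along a circle of radius $R$ into $B_R(\mathbb{R}^2_+)$ and its complement, observe that the interior operator has purely discrete spectrum while on the exterior piece the decay of $\sigma$ (and $\alpha$) together with the trace estimate makes the form bounded below by $-\epsilon(R)C$, and conclude by min--max monotonicity. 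Your Persson-based argument is an admissible substitute and is arguably cleaner, but it carries the caveat you yourself flag: one needs a version of Persson's theorem valid for forms with boundary and diagonal trace terms, in which the test functions are supported outside $B_R$ but are \emph{not} required to vanish on $\partial\mathbb{R}^2_+$. This matters: with genuinely Dirichlet-type test functions, which the notation $C_0^\infty(\mathbb{R}^2_+\setminus B_R)$ could be read as denoting, the Persson equality is simply false (such functions never see the $\sigma$-term, and for a uniformly negative boundary potential $\inf\sigma_{ess}$ does lie strictly below zero), so the formula must be stated for the correct test class; your subsequent computation with nonvanishing boundary traces shows this is what you intend. If you want to avoid importing a Robin-compatible Persson theorem, the bracketing argument already present in the paper delivers the same bound $\Sigma_R\geq-\epsilon(R)C_\delta$ with no extra machinery. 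A point in your favour: your argument proves the stronger conclusion $\sigma_{ess}(-\Delta^{d=\infty}_{\sigma,\alpha})=[0,\infty)$ for decaying $\alpha$, not merely for $\alpha=0$, which is precisely what the paper's ``inspection of this proof'' remark is meant to deliver.
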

\begin{proof} 
See the proof of [Theorem~3.1,\cite{KM16}] for the case where $\alpha=0$. An inspection of this 
proof then allows one to conclude the statement above. 
\end{proof}
The discrete part of the spectrum, i.e., isolated eigenvalues with finite multiplicity, is characterised in 
the following statement. 
\begin{theorem}
Assume that $\sigma,\alpha \in L^1(\mathbb{R}_+)$ and that 
$\inf \sigma_{ess}(-\Delta^{d=\infty}_{\sigma,\alpha})=0$. Then, if
\begin{equation}
		\int_{\mathbb{R}_+} \left[2\sigma(y)-\alpha(y)\right] \ \ud y > 0\ ,
\end{equation}
negative eigenvalues will exist. 
\end{theorem}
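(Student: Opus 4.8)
The plan is to exhibit a trial function $\varphi \in H^1(\mathbb{R}^2_+)$ for which the quadratic form $q^{\infty}_{\alpha,\sigma}[\varphi]$ is strictly negative; since $\inf\sigma_{ess} = 0$, any state of negative energy must lie below the essential spectrum and hence, by the variational principle, forces the existence of at least one negative eigenvalue. The natural candidate is a function that is constant (equal to one) on a large region near the diagonal and the boundary, and then cut off smoothly to zero at a large length scale $R$. The key idea is that the \emph{kinetic} term $\int_{\mathbb{R}^2_+}|\nabla\varphi|^2\,\ud x$ can be made arbitrarily small (of order $1/R$ or smaller, by spreading the gradient over an annular region of large area), while the two boundary integrals converge to genuinely negative constants as $R\to\infty$, controlled by the $L^1$-integrability of $\sigma$ and $\alpha$.

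First I would fix a smooth radial cutoff $\chi_R$ with $\chi_R \equiv 1$ on a disk of radius $R$ and $\chi_R \equiv 0$ outside radius $2R$, chosen so that $\|\nabla\chi_R\|_\infty = O(1/R)$, and set $\varphi = \chi_R$ restricted to $\mathbb{R}^2_+$. Then the kinetic term is supported on the annulus $R \le |(x,y)| \le 2R$; because only a bounded portion of this annulus touches the boundary $\partial\mathbb{R}^2_+$ and the diagonal (the relevant integrals are effectively one-dimensional after the trace), I would estimate $\int|\nabla\varphi|^2 = O(R \cdot R^{-2}) = O(1/R) \to 0$. For the boundary integrals, on the set where $\varphi \equiv 1$ one has $|\gamma(\varphi)|^2 = (1,1)^T$, so the boundary-potential term tends to $\int_0^\infty 2\sigma(y)\,\ud y$ and the diagonal term tends to $\int_0^\infty \alpha(y)\,\ud y$, both finite by the $L^1$ hypothesis. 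Hence, in the limit,

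\begin{equation}
\limsup_{R\to\infty} q^{\infty}_{\alpha,\sigma}[\varphi_R] \leq -\int_0^\infty 2\sigma(y)\,\ud y + \int_0^\infty \alpha(y)\,\ud y = -\int_{\mathbb{R}_+}\bigl[2\sigma(y)-\alpha(y)\bigr]\,\ud y\ ,
\end{equation}

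which is strictly negative precisely under the stated hypothesis $\int_{\mathbb{R}_+}[2\sigma(y)-\alpha(y)]\,\ud y > 0$.

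The main obstacle I anticipate is the careful bookkeeping of the two distinct traces. The term $\int_0^\infty \sigma(y)|\gamma(\varphi)|^2\,\ud y$ involves the restriction of $\varphi$ to the two coordinate axes (so $\gamma(\varphi) = (\varphi(y,0),\varphi(0,y))^T$ contributes a factor of $2$), whereas the diagonal term $\int_0^\infty \alpha(y)|\varphi(y,y)|^2\,\ud y$ involves arc length along $x=y$; one must verify that the trial function, after the cutoff, still equals $1$ on the portions of the boundary and diagonal where $\sigma$ and $\alpha$ carry most of their mass, and that the error introduced on the annular transition region vanishes. This is where the $L^1$-assumption is essential: it guarantees that the tails $\int_R^\infty \sigma$ and $\int_R^\infty \alpha$ are negligible, so the sign of the limit is dictated entirely by the full integrals. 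Once this is controlled, the variational principle—combined with $\inf\sigma_{ess}(-\Delta^{d=\infty}_{\sigma,\alpha})=0$—immediately yields a negative eigenvalue below the essential spectrum.
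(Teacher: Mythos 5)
Your overall strategy is the same as the paper's: exhibit a trial function in the form domain that is essentially constant along the boundary and the diagonal, show the kinetic term can be made negligible while the two trace terms converge (by dominated convergence, using $\sigma,\alpha\in L^1$) to $-\int_{\mathbb{R}_+}[2\sigma-\alpha]\,\ud y<0$, and invoke the variational principle together with $\inf\sigma_{ess}=0$. The paper does exactly this, but with the family $\varphi_\varepsilon(r)=\ue^{-r^\varepsilon}$ in polar coordinates and the limit $\varepsilon\to0$.

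However, your specific trial family fails at the decisive step: the kinetic energy estimate. For a cutoff $\chi_R$ with $\chi_R\equiv1$ on the disk of radius $R$, $\chi_R\equiv0$ outside radius $2R$, and $\|\nabla\chi_R\|_\infty=O(1/R)$, the Dirichlet integral is a genuine two-dimensional area integral over the annular transition region intersected with $\mathbb{R}^2_+$, whose area is of order $R^2$. Hence
\begin{equation}
\int_{\mathbb{R}^2_+}|\nabla\chi_R|^2\,\ud x \;=\; O(R^2)\cdot O(R^{-2}) \;=\; O(1)\ ,
\end{equation}
not $O(1/R)$ as you claim. Your parenthetical justification (``only a bounded portion of this annulus touches the boundary and the diagonal'') is irrelevant here: the traces play no role in the kinetic term, which lives in the bulk. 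Dimension two is precisely the critical case where linear-scale cutoffs do \emph{not} have vanishing Dirichlet energy (for the standard profile one gets a fixed constant, roughly $3\pi/4$), so your limit inequality acquires an uncontrolled positive $O(1)$ term and negativity of the form no longer follows from the hypothesis. The gap is fixable while keeping your structure: replace the linear cutoff by a logarithmic one, e.g.\ $\chi(r)=\log(R^2/r)/\log R$ for $R\le r\le R^2$, which gives kinetic energy $O(1/\log R)\to0$; alternatively use the paper's family $\ue^{-r^\varepsilon}$, whose Dirichlet integral is $O(\varepsilon)$ while the trace terms converge to $\ue^{-2}\bigl(-2\int_{\mathbb{R}_+}\sigma+\int_{\mathbb{R}_+}\alpha\bigr)<0$. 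With either repair the rest of your argument (dominated convergence for the trace terms, the factor $2$ from the two boundary segments, and the variational conclusion) goes through as written.
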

\begin{proof} As in the proof of [Theorem~3.3,\cite{KM16}] one picks the test function 
$\varphi_{\varepsilon}(r):=e^{-r^{\varepsilon}}$, $\varepsilon > 0$, defined in polar coordinates. 
Evaluating $q^{\infty}_{\alpha,\sigma}[\varphi_{\varepsilon}]$ one performs the limit 
$\varepsilon \rightarrow 0$ to conclude that $q^{\infty}_{\alpha,\sigma}[\varphi_{\varepsilon}] < 0$ for 
small enough $\varepsilon$. The statement then follows by the variational principle \cite{BEH08}. Note 
that the factor of $2$ is due to the fact the there are two boundary segments of $\mathbb{R}^2_+$. 
\end{proof}
\begin{lemma} 
Assume that $\sigma,\alpha \in L^{\infty}(\mathbb{R}_+)$ have bounded support. Then there exist only 
finitely many negative eigenvalues.
\end{lemma}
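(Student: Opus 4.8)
The plan is to show that under the hypothesis of compact support, the operator $-\Delta^{d=\infty}_{\sigma,\alpha}$ has only finitely many negative eigenvalues by comparing it (via the min-max principle) to an operator that is known to have this property. The essential spectrum is $[0,\infty)$ by Theorem~\ref{TheoremEssentialSpectrumI}, so negative eigenvalues are automatically isolated and of finite multiplicity; the content of the lemma is that there are only finitely many of them, i.e.\ they cannot accumulate at $0$ from below. Since the form $q^{\infty}_{\alpha,\sigma}$ differs from the free Neumann form on $L^2(\mathbb{R}^2_+)$ only through the two boundary integrals, and both $\sigma$ and $\alpha$ are supported in a bounded set $[0,R]$, the negative-energy problem is effectively confined to a bounded region of configuration space.

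First I would reduce the question to a counting estimate on the negative form. For any $\varphi$ in the form domain, $q^{\infty}_{\alpha,\sigma}[\varphi] \geq \|\nabla\varphi\|_2^2 - \int_0^R \sigma(y)|\gamma(\varphi)|^2\,\ud y + \int_0^R \alpha(y)|\varphi(y,y)|^2\,\ud y$, and only the middle term can be negative; bounding it by $\|\sigma\|_\infty$ times the trace of $|\varphi|^2$ over the bounded boundary segment $\{0\le y \le R\}$ shows that a trial function can lower the energy only by exploiting mass near that compact boundary piece. The natural strategy is then a Dirichlet--Neumann bracketing: I would decompose $\mathbb{R}^2_+$ into a bounded box $\Omega_R$ (large enough to contain the supports, e.g.\ $[0,R']^2$) and its complement, imposing additional Neumann conditions along the cut. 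On the complement the boundary potentials vanish, so the form is nonnegative there and contributes no negative eigenvalues; on the bounded piece $\Omega_R$ the operator has compact resolvent, hence a purely discrete spectrum, and therefore only finitely many eigenvalues below zero. By the bracketing inequality the number of negative eigenvalues of $-\Delta^{d=\infty}_{\sigma,\alpha}$ is bounded above by that finite number, which gives the claim.

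The key steps in order are therefore: (i) record that $\sigma_{ess}=[0,\infty)$ so ``negative eigenvalue'' means a genuine isolated eigenvalue below the essential spectrum; (ii) use the compact support of $\sigma,\alpha$ to localise the negative part of the quadratic form to a bounded region; (iii) perform Neumann bracketing to decouple a bounded box from its exterior, checking that the exterior form is nonnegative; and (iv) invoke compactness of the resolvent on the bounded box together with the min-max principle to conclude finiteness. An alternative to the last two steps is a direct Birman--Schwinger / Cwikel--Lieb--Rozenblum type bound for the boundary-supported perturbation, estimating the number of negative eigenvalues by the trace of an explicit Birman--Schwinger operator built from the boundary trace and the free resolvent; since $\sigma,\alpha$ have compact support this operator is Hilbert--Schmidt (indeed trace class), again yielding finiteness.

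The main obstacle I anticipate is handling the singular, lower-dimensional nature of the two perturbing terms: they live on the one-dimensional boundary $\partial\mathbb{R}^2_+$ and on the diagonal $x=y$, so they are not relatively form-bounded in the usual pointwise-potential sense and one must control them through trace (Sobolev restriction) estimates. Verifying that the exterior piece remains nonnegative after the Neumann cut is routine, but one must make sure the cut is placed strictly outside the supports so that no boundary or diagonal contribution leaks into the unbounded region; and if one takes the Birman--Schwinger route, the technical heart is proving the relevant trace-class (or Hilbert--Schmidt) property of the boundary-localised Birman--Schwinger kernel, which is where the compactness of the supports is genuinely used.
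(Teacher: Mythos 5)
Your proof is correct and follows essentially the same route as the paper: a Neumann bracketing that cuts off a bounded Lipschitz region containing the supports of $\sigma$ and $\alpha$ (the paper uses a quarter-disk rather than your box, which is immaterial), positivity of the exterior operator, discreteness of the spectrum on the bounded piece, and the bracketing inequality to bound the number of negative eigenvalues. The Birman--Schwinger alternative you sketch is not needed; the bracketing argument is exactly what the paper does.
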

\begin{proof} 
The statement follows from a bracketing argument, see \cite{BEH08} for a general discussion and 
\cite{KM16,KMBound} for applications of this technique. 

In a first step one writes 
$\mathbb{R}^2_+=B_R(\mathbb{R}^2_+) \cup \ (\mathbb{R}^2_+ \setminus B_R(\mathbb{R}^2_+))$ 
where $B_R(\mathbb{R}^2_+):=\{(x,y)\in \mathbb{R}^2_+: x^2+y^2 < R^2  \}$. The comparison operator 
then is a direct sum of two two-dimensional Laplacians, i.e., 
\begin{equation}
	-\Delta_{B_R(\mathbb{R}^2_+)} \oplus -\Delta_{\mathbb{R}^2_+ \setminus B_R(\mathbb{R}^2_+)} 
\end{equation}
with the same boundary conditions as $-\Delta^{d=\infty}_{\sigma,\alpha}$, except for additional 
Neumann boundary conditions along the dissecting line. We then choose $R$ large enough so that 
$\sigma=\alpha=0$ in $\mathbb{R}^2_+ \setminus B_R(\mathbb{R}^2_+)$. Accordingly, 
$-\Delta_{\mathbb{R}^2_+ \setminus B_R(\mathbb{R}^2_+)} $ is a positive operator. On the other hand, 
$-\Delta_{B_R(\mathbb{R}^2_+)}$ is defined on a bounded Lipschitz domain and hence has purely 
discrete spectrum, i.e., its essential spectrum is empty and there are only finitely many negative 
eigenvalues. Consequently, the operator bracketing
\begin{equation}
	-\Delta_{B_R(\mathbb{R}^2_+)} \oplus -\Delta_{\mathbb{R}^2_+ \setminus B_R(\mathbb{R}^2_+)}
	  \leq -\Delta^{d=\infty}_{\sigma,\alpha}
\end{equation}
implies the statement.
	%
	%
	%
	%
\end{proof}

\subsection{The model with hard-wall binding potential}
The model with non-vanishing binding potential, but vanishing contact interaction, was first studied 
in~\cite{KMBound}. The first important difference to the case where $v_{b}\equiv 0$ is that the 
two-particle configuration space is reduced from $\mathbb{R}^2_+$ to the `pencil-shaped' domain 
\begin{equation}
	\Omega:=\{(x,y)\in \mathbb{R}^2_+:\ |x-y| \leq d \}\ .
\end{equation}
Hence, the underlying Hilbert space is $L^2(\Omega)$ rather than $L^2(\mathbb{R}^2_+)$. As before, 
a rigorous realisation of \eqref{FormalHamiltonian} is obtained via the form
\begin{equation}
	q^{d}_{\alpha,\sigma}[\varphi]:=\int_{\Omega}|\nabla \varphi|^2 \ \ud x - \int_{0}^{d}\sigma(y)|
	\gamma(\varphi)|^2 \ud y + \int_{0}^{\infty}\alpha(y)|\varphi(y,y)|^2 \ \ud y\ ,
\end{equation}
which is defined on $\cD_{q}:=\{\varphi \in H^1(\Omega): \varphi|_{\partial \Omega_D}=0  \}$, where 
$\partial \Omega_D:=\{(x,y)\in \mathbb{R}^2_+:\ |x-y|=d  \}$. Note that the Dirichlet boundary conditions
along $\partial \Omega_D$ are due to the choice of the hard-wall binding potential. 
\begin{theorem} 
For every $\sigma, \alpha \in L^{\infty}(\mathbb{R}_+)$ the form 
$\left(q^{d}_{\alpha,\sigma},\cD_{q}\right)$ is bounded from below and is closed. 
\end{theorem}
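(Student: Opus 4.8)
The plan is to write $q^{d}_{\alpha,\sigma}=q_0+p$, where $q_0[\varphi]:=\int_\Omega|\nabla\varphi|^2\,\ud x$ is the free gradient form on $\cD_q$ and $p[\varphi]:=-\int_0^d\sigma(y)|\gamma(\varphi)|^2\,\ud y+\int_0^\infty\alpha(y)|\varphi(y,y)|^2\,\ud y$ collects the two boundary contributions, and then to show that $q_0$ is a closed, non-negative form while the symmetric perturbation $p$ is relatively form bounded with respect to $q_0$ with relative bound strictly less than one. Semi-boundedness and closedness of the sum then follow from the KLMN theorem~\cite{BEH08}, with unchanged form domain $\cD_q$. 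This mirrors the argument for the $d=\infty$ case treated above, the essential new feature being that the hard-wall condition confines $\Omega$ to a tube of bounded transverse width.

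First I would check that $q_0$ is closed and non-negative on $\cD_q$. Non-negativity is clear, and since the form norm $(q_0[\varphi]+\|\varphi\|^2_{L^2(\Omega)})^{1/2}$ coincides with the $H^1(\Omega)$-norm, closedness of $q_0$ is equivalent to completeness of $\cD_q$ in that norm. As $\Omega$ is a Lipschitz domain, the Dirichlet constraint $\varphi|_{\partial\Omega_D}=0$ is closed by the trace theorem~\cite{Dob05}, so $\cD_q$ is a closed subspace of $H^1(\Omega)$ and hence complete.

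The heart of the proof, and the main obstacle, is to bound the two boundary terms. Since $\sigma,\alpha\in L^\infty(\mathbb{R}_+)$ one has $|p[\varphi]|\le\|\sigma\|_\infty\|\gamma(\varphi)\|^2_{L^2}+\|\alpha\|_\infty\|\varphi|_{x=y}\|^2_{L^2}$, so it suffices to prove, for every $\varepsilon>0$, a trace inequality $\|\gamma(\varphi)\|^2_{L^2}+\|\varphi|_{x=y}\|^2_{L^2}\le\varepsilon\,q_0[\varphi]+C_\varepsilon\|\varphi\|^2_{L^2(\Omega)}$ valid for all $\varphi\in H^1(\Omega)$. The axis traces $\gamma(\varphi)$ live on the two compact boundary segments near the origin and, like the bounded corner region of $\Omega$, are handled by the standard localised $H^1$-trace estimate. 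The delicate term is the diagonal trace, the diagonal $x=y$ being an unbounded interior line; here I would pass to the rotated coordinates $u=(x+y)/\sqrt2$, $v=(x-y)/\sqrt2$, in which the gradient form is invariant and $\Omega$ becomes the tube $\{|v|\le d/\sqrt2,\ u\ge|v|\}$, whose transverse slices have width bounded by $\sqrt2\,d$. On each slice away from the origin the one-dimensional interpolation inequality $|g(0)|^2\le\varepsilon\|g'\|^2_{L^2(I)}+C_\varepsilon\|g\|^2_{L^2(I)}$ holds with $C_\varepsilon$ depending only on $\varepsilon$ and the full slice width $\sqrt2\,d$; integrating over $u$ then yields the diagonal bound with constants \emph{uniform} along the entire diagonal. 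It is precisely the bounded width enforced by the hard-wall potential that makes this uniform control possible, thereby avoiding the separate argument required in the $d=\infty$ case.

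With these estimates in hand I would choose $\varepsilon>0$ so small that $(\|\sigma\|_\infty+\|\alpha\|_\infty)\varepsilon<1$, giving $|p[\varphi]|\le a\,q_0[\varphi]+b\|\varphi\|^2_{L^2(\Omega)}$ with $a<1$. Applying the KLMN theorem~\cite{BEH08} to the non-negative closed form $q_0$ and the symmetric, relatively bounded perturbation $p$ then shows that $q^{d}_{\alpha,\sigma}=q_0+p$ is bounded from below and closed on $\cD_q$, which is the claim.
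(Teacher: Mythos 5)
Your proposal is correct and is essentially the argument behind the paper's statement (which the review gives only by reference to \cite{KMBound,KM16}): split off the boundary terms, prove an $\varepsilon$-trace estimate showing they are form-bounded relative to the closed, non-negative gradient form on the closed subspace $\cD_{q}\subset H^1(\Omega)$ with relative bound less than one, and conclude by the KLMN theorem \cite{BEH08}. The one point requiring slightly more care is the diagonal trace near the corner, where your transverse slices degenerate in width; your appeal to a localised trace estimate on the bounded corner region handles exactly this, so the argument is complete.
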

As before, the representation theorem of forms assures the existence of a unique self-adjoint operator
associated with $q^{d}_{\alpha,\sigma}$ which shall be denoted by $-\Delta^{d}_{\sigma,\alpha}$. Again,
this operator acts as the standard two-dimensional Laplacian with coordinate dependent Robin boundary
conditions along the boundary segments with $x=0$ or $y=0$ as well as a jump condition along the
diagonal $x=y$ as before, see [Remark~1,\cite{KMBound}] for a more detail. 

So far the presence of a binding potential made no difference. However, as soon as we characterise 
the spectrum of $-\Delta^{d}_{\sigma,\alpha}$, the effect of the binding potential becomes obvious.
\begin{theorem}
\label{EssentialSpectrumII} 
Assume that $\sigma,\alpha \in L^{\infty}(\mathbb{R}_+)$ are given. Then 
$[2\pi^2/d^2,\infty) \subset \sigma_{ess}(-\Delta^{d}_{\sigma,\alpha})$. Furthermore, if 
$\alpha(y) \rightarrow 0$ as $y \rightarrow \infty$ one has 
$\sigma_{ess}(-\Delta^{d}_{\sigma,\alpha})=[\pi^2/2d^2,\infty)$.
\end{theorem}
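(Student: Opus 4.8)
The plan is to reduce everything to a transverse one-dimensional problem by exploiting the geometry of $\Omega$ far from the origin. First I would rotate coordinates, setting $u=(x+y)/\sqrt{2}$ and $v=(x-y)/\sqrt{2}$, under which $-\Delta$ becomes $-\partial_u^2-\partial_v^2$ and the domain turns into $\{(u,v):\ |v|\le d/\sqrt{2},\ u\ge|v|\}$. For large $u$ this is simply the straight strip $|v|\le d/\sqrt{2}$ of transverse width $\sqrt{2}\,d$, carrying Dirichlet conditions at $v=\pm d/\sqrt{2}$ (the hard wall $|x-y|=d$) and the $\alpha$-mediated jump condition along the diagonal $v=0$. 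The key structural observation is that the Robin ($\sigma$-)boundary consists only of the two segments $\{x=0,\ 0\le y\le d\}$ and $\{y=0,\ 0\le x\le d\}$, hence is compactly supported near the origin; this is why the $\sigma$-term drops out of any statement about $\sigma_{ess}$, and only the asymptotic behaviour of $\alpha$ matters.

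For the inclusion $[2\pi^2/d^2,\infty)\subset\sigma_{ess}$ I would construct singular Weyl sequences of product form $\varphi_n(u,v)=c_n\,\chi_n(u)\,\ue^{\ui ku}\,\phi(v)$. Here $\phi$ is the transverse Dirichlet eigenfunction on $[-d/\sqrt{2},d/\sqrt{2}]$ that is \emph{odd}, i.e.\ vanishes at $v=0$, with eigenvalue $2\pi^2/d^2$, and $\chi_n$ is a smooth cutoff supported on an interval $[R_n,R_n+\ell_n]$ with $R_n\to\infty$. Because $\phi(0)=0$, the contribution $\int\alpha(y)|\varphi_n(y,y)|^2\,\ud y$ vanishes identically, so the construction works for \emph{every} $\alpha\in L^\infty(\mathbb{R}_+)$; because the supports escape to infinity, the $\sigma$-segments are never met. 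A direct computation gives $(-\Delta^d_{\sigma,\alpha}-\lambda)\varphi_n$ with $\lambda=k^2+2\pi^2/d^2$, the error being controlled by $\chi_n'$ and $\chi_n''$ and made to vanish by letting $\ell_n\to\infty$. Normalising and noting $\varphi_n\rightharpoonup0$ shows $\lambda\in\sigma_{ess}$, and letting $k\in\mathbb{R}$ vary sweeps out all of $[2\pi^2/d^2,\infty)$.

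Under the additional hypothesis $\alpha(y)\to0$ I would upgrade the lower threshold: repeating the Weyl construction with $\phi$ now the \emph{even} ground state on $[-d/\sqrt{2},d/\sqrt{2}]$, of eigenvalue $\pi^2/(2d^2)$, the diagonal term no longer vanishes but equals a multiple of $\int_{R_n}^{R_n+\ell_n}\alpha(u/\sqrt{2})\,|c_n\chi_n(u)|^2\,\ud u$, which tends to zero since $\alpha$ vanishes at infinity while the support sits at $u\sim R_n\to\infty$. This yields $[\pi^2/(2d^2),\infty)\subset\sigma_{ess}$. For the matching upper bound $\sigma_{ess}\subset[\pi^2/(2d^2),\infty)$ I would run a Neumann bracketing argument exactly as in the preceding lemma: split $\Omega=\Omega_R\cup(\Omega\setminus\Omega_R)$ along $\{u=R\}$ with an extra Neumann condition, so that the inner piece lives on a bounded Lipschitz domain and contributes no essential spectrum, while on the outer piece $\sigma\equiv0$ and $\|\alpha\|_{L^\infty}\to0$ as $R\to\infty$. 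On that outer strip the transverse Dirichlet Poincar\'e inequality gives $\int|\partial_v\varphi|^2\ge(\pi^2/2d^2)\int|\varphi|^2$, so the outer operator is bounded below by $\pi^2/(2d^2)-o(1)$; bracketing then forces $\inf\sigma_{ess}\ge\pi^2/(2d^2)$, and letting $R\to\infty$ completes the equality.

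The step I expect to be the main obstacle is this upper bound, specifically absorbing the diagonal $\alpha$-term in the outer region. It is a codimension-one (trace) functional, so it cannot be dominated by the $L^2$-norm alone; I would control it by a quantitative trace inequality of the form $\int|\varphi(y,y)|^2\,\ud y\le\epsilon\|\nabla\varphi\|^2+C_\epsilon\|\varphi\|^2$ and exploit the smallness of $\|\alpha\|_{L^\infty}$ on $\Omega\setminus\Omega_R$ to keep the perturbation below any prescribed fraction of the Poincar\'e gap. Ensuring that all the Weyl sequences genuinely lie in the form domain $\cD_q$ (Dirichlet at $|v|=d/\sqrt{2}$ and the correct jump across the diagonal) and that the corner geometry near the origin is fully quarantined inside the bounded piece $\Omega_R$ are the remaining points that need care.
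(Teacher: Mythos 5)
Your proposal is correct and is essentially the paper's own argument: singular (Weyl) sequences supported far from the origin, built from the transverse Dirichlet mode vanishing on the diagonal (which kills the $\alpha$-term and yields the threshold $2\pi^2/d^2$ for arbitrary $\alpha\in L^{\infty}$), respectively from the transverse ground mode of eigenvalue $\pi^2/(2d^2)$ (whose diagonal contribution is small because $\alpha(y)\to 0$), together with Neumann bracketing for the reverse inclusion $\sigma_{ess}\subset[\pi^2/(2d^2),\infty)$, which the paper defers to its references. The only substantive difference is cosmetic --- the paper uses Dirichlet ground states on rectangles of varying side length $L$ placed along the channel in place of your cutoff plane waves $\chi_n(u)\,\ue^{\ui ku}\phi(v)$ --- and your own caveat is exactly the right one: the even-mode sequence must be handled at the quadratic-form level (as the paper implicitly does by evaluating $q^{d}_{\alpha,\sigma}$), since those product functions violate the operator-domain jump condition whenever $\alpha\neq 0$ on their support.
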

\begin{proof} 
We only add some remarks, see [Theorem~2,\cite{KMBound}] and the proof of 
Theorem~\ref{TheoremEssentialSpectrumI} for more detail. 
	
In order to show the first part, one takes a Weyl sequence which consists of (normalised) ground states 
of the Dirichlet-Laplacian on rectangles $[0,L] \times [0,d/\sqrt{2}]$ that are placed in $\Omega$ such 
that the `L-boundary segments' touch $\partial \Omega_D$ as well as the diagonal $x=y$. The form 
$q^{d}_{\alpha,\sigma}$ evaluated for these states gives $2\pi^2/d^2+\pi^2/L^2$. Hence, letting $L$ tend
to any limit (including infinity) the statement follows. Note that the integral along the diagonal does not 
contribute due to the Dirichlet boundary conditions.
	
Regarding the second part one takes basically the same ground states but on rectangles 
$[0,L] \times [0,\sqrt{2}d]$. Now, the contribution of the integral along the diagonal does not vanish but,
since $\alpha(y) \rightarrow 0$ as $y \rightarrow \infty$, it can be made arbitrarily small. This proves the
statement.
\end{proof}
Theorem~\ref{EssentialSpectrumII} illustrates that, as long as the contact interaction strength converges 
to zero, the binding potential leads to a shift of the essential spectrum by at least $\pi^2/2d^2$. 
As for the effect on the discrete spectrum we first note that whenever $d=\infty$, by 
Theorem~\ref{TheoremEssentialSpectrumI} this is trivial for $\sigma=\alpha=0$. From a physical point 
of view this seems reasonable since there is no attractive potentials that could lead to bound states. 
However, quite surprisingly, for $d < \infty$ and $\sigma=\alpha=0$ we have the following 
result [Theorem~3,\cite{KMBound}].
\begin{theorem}
\label{TheoremDiscreteSpectrumPencil} 
Consider the self-adjoint operator $-\Delta^{d}_{\sigma=0,\alpha=0}$, i.e., we assume that
$\sigma=\alpha=0$. Then
\begin{equation}
		\sigma_{d}(-\Delta^{d}_{\sigma=0,\alpha=0}) \neq \emptyset\ .
\end{equation}
In other words, there exist eigenvalues below $\pi^2/2d^2$. 
\end{theorem}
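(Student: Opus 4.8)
The plan is to prove that $-\Delta^{d}_{\sigma=0,\alpha=0}$ has spectrum below the bottom of its essential spectrum, by producing a single trial function whose Rayleigh quotient lies below that threshold. Since $\sigma=\alpha=0$, the form reduces to the pure Dirichlet energy $q^{d}_{0,0}[\varphi]=\int_\Omega|\nabla\varphi|^2\,\ud x$ on $\cD_{q}=\{\varphi\in H^1(\Omega):\varphi|_{\partial\Omega_D}=0\}$, and by Theorem~\ref{EssentialSpectrumII} (applied with $\sigma=\alpha=0$, so in particular $\alpha(y)\to 0$) one has $\inf\sigma_{ess}=E_0:=\pi^2/2d^2$. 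Hence it suffices to find $\varphi\in\cD_{q}$ with $q^{d}_{0,0}[\varphi]<E_0\|\varphi\|^2$; the min-max principle then yields $\inf\sigma(-\Delta^{d}_{\sigma=0,\alpha=0})<\inf\sigma_{ess}$, so that $\sigma_d\neq\emptyset$.

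First I would pass to the waveguide coordinates $u=(x+y)/\sqrt{2}$, $v=(x-y)/\sqrt{2}$, in which $\Omega$ becomes the region $\{|v|\le d/\sqrt{2},\ u\ge|v|\}$: a half-strip of half-width $d/\sqrt{2}$ carrying Dirichlet conditions on its long sides $v=\pm d/\sqrt{2}$, closed off near the origin by a triangular ``nose'' $0\le u\le d/\sqrt{2}$ whose slanted sides $v=\pm u$ (the images of $x=0$ and $y=0$) carry the \emph{natural} Neumann condition. The threshold $E_0=\pi^2/2d^2$ is precisely the ground-state energy of the transverse Dirichlet problem on an interval of length $d\sqrt{2}$, with eigenfunction $\chi(v)=\cos(\pi v/(d\sqrt{2}))$. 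This reveals the mechanism: the Neumann nose is effectively ``wider'' than the Dirichlet strip and should bind a state.

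Concretely I would use the separated trial function $\varphi(u,v)=g(u)\chi(v)$, with $\chi$ restricted to the cross-section $[-W(u),W(u)]$, $W(u)=\min(u,d/\sqrt{2})$. This $\varphi$ lies in $\cD_q$ automatically: it vanishes on the Dirichlet sides because $\chi(\pm d/\sqrt{2})=0$, while the Neumann conditions on the nose need not be imposed on form-domain functions. Setting $a(u)=\int_{-W}^{W}\chi^2\,\ud v$ and $b(u)=\int_{-W}^{W}(\chi')^2\,\ud v$, the problem collapses to the one-dimensional inequality
\[
q^{d}_{0,0}[\varphi]-E_0\|\varphi\|^2=\int_0^\infty\bigl[(g')^2\,a(u)+g^2\,\bigl(b(u)-E_0\,a(u)\bigr)\bigr]\,\ud u<0 .
\]
For $u\ge d/\sqrt{2}$ one has $b(u)=E_0\,a(u)$, since $\chi$ is the transverse eigenfunction; for $0<u<d/\sqrt{2}$ the Neumann cross-section gives $b(u)-E_0\,a(u)=-E_0\int_{-u}^{u}\cos(\sqrt{2}\pi v/d)\,\ud v<0$, strictly negative throughout the nose. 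Thus $b/a$ is an attractive well of depth $E_0$ and fixed width, and a one-dimensional well always binds.

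To make this quantitative I would take $g\equiv1$ on the nose and $g(u)=\ue^{-\lambda(u-d/\sqrt{2})}$ on the strip. The strip then contributes only through $(g')^2 a$, giving a term of size $\lambda d/(2\sqrt{2})$, whereas the well contributes the $\lambda$-independent amount $-C$ with $C=\int_0^{d/\sqrt{2}}(E_0\,a(u)-b(u))\,\ud u>0$. Hence the right-hand side above equals $\lambda d/(2\sqrt{2})-C$, which is negative once $\lambda$ is small enough, completing the argument. The main obstacle is not this final estimate but securing the sign of the nose contribution: one must verify that replacing Dirichlet by Neumann on the short sides genuinely lowers the transverse energy below $E_0$ (the strict inequality $b(u)<E_0\,a(u)$ on $(0,d/\sqrt{2})$), and simultaneously that the longitudinal transition cost in the strip can be sent to zero by relaxing $\lambda$, so that the fixed well gain wins.
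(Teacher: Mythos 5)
Your proposal is correct, and it is complete: every step you flag as needing verification does in fact go through. The review paper itself offers no proof of this theorem --- it simply cites [Theorem~3, KMBound] --- so there is nothing in-text to compare against; your argument supplies a self-contained proof of the standard ``Goldstone--Jaffe'' (quantum-waveguide) type, which is also the spirit of the argument in the cited reference. Checking your computations: in the rotated variables the domain is indeed $\{|v|\le d/\sqrt2,\ u\ge|v|\}$ with Dirichlet on $v=\pm d/\sqrt2$ and natural conditions on $v=\pm u$; with $\chi(v)=\cos(\omega v)$, $\omega=\pi/(\sqrt2 d)$, $E_0=\omega^2=\pi^2/2d^2$, one has on the nose
\begin{equation*}
b(u)-E_0a(u)=\int_{-u}^{u}\bigl(\chi'(v)^2-E_0\chi(v)^2\bigr)\,\ud v
=-\omega^2\int_{-u}^{u}\cos(2\omega v)\,\ud v=-\omega\sin(2\omega u)<0
\end{equation*}
for $0<u<d/\sqrt2$ (since $2\omega u\in(0,\pi)$), exactly as you claim, while $b=E_0a$ on the full cross-section by integration by parts. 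With your choice of $g$ the strip contributes $\lambda^2\cdot\frac{1}{2\lambda}\cdot\frac{d}{\sqrt2}=\frac{\lambda d}{2\sqrt2}$ and the nose contributes $-C$ with $C=\int_0^{d/\sqrt2}\omega\sin(2\omega u)\,\ud u=\tfrac12\bigl(1-\cos\pi\bigr)=1$, so the Rayleigh quotient drops below $E_0$ for any $\lambda<2\sqrt2/d$. Your trial function lies in the form domain (it is Lipschitz, decays exponentially, and vanishes on $\partial\Omega_D$ because $\chi(\pm d/\sqrt2)=0$; no condition is needed on the Neumann pieces), and the use of Theorem~\ref{EssentialSpectrumII} to identify $\inf\sigma_{ess}=\pi^2/2d^2$ is legitimate since $\alpha=0$. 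The min-max principle then places spectrum strictly below the essential spectrum, and such spectrum consists of isolated eigenvalues of finite multiplicity, so $\sigma_d\neq\emptyset$. The added value of your write-up over the paper's bare citation is that it is quantitative (an explicit binding criterion $\lambda<2\sqrt2/d$, hence an explicit upper bound $E_0-$gap estimate if one optimises over $\lambda$) and makes transparent the geometric mechanism the paper only alludes to in the remark following the theorem: the Neumann ``nose'' acts as an effective one-dimensional potential well of fixed depth attached to a critical Dirichlet strip.
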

Note that the existence of eigenvalues smaller than $\pi^2/2d^2$ for vanishing boundary and contact 
potential is a purely quantum mechanical effect. Furthermore, it is a geometrical effect since no 
non-trivial discrete spectrum would exist if one considered the two-particle system on the full line 
$\mathbb{R}$ instead of the half-line $\mathbb{R}_+$, see [Remark~4,\cite{KMBound}]. 

Of course, if one assumes that $\sigma(y) \geq 0$ for a.e. $y \in [0,d]$, and $\alpha(y)\leq 0$ for a.e. 
$y \in [0,\infty)$, the discrete spectrum will also be non-empty since the boundary integrals in 
$q^{d}_{\alpha,\sigma}$ are negative (note here the minus sign in the definition of the boundary potential 
$\sigma$). However, one may ask what happens when a positive boundary potential $\sigma$  becomes
large. Since this implies a strong repulsive singular two-particle interaction localised at the origin, 
bound states may no longer exist. Indeed, we have the following result.
\begin{theorem}\label{AbsenceDiscreteSpectrum} 
There exists a constant $\gamma < 0$ such that $\sigma_{d}(-\Delta^{d}_{\sigma,\alpha})=\emptyset$ 
whenever $\sigma(y) \leq \gamma$ for a.e. $y \in [0,d]$, $\alpha(y) \geq 0$ for a.e. $y \in [0,\infty)$ and 
$\alpha(y) \rightarrow 0$ as $y \rightarrow \infty$.
\end{theorem}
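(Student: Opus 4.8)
The plan is to show that a sufficiently strong repulsive boundary potential pushes the bottom of the spectrum of $-\Delta^d_{\sigma,\alpha}$ up to the bottom of the essential spectrum, leaving no room for an isolated eigenvalue below it. Since $\alpha(y)\to 0$ as $y\to\infty$ by hypothesis, Theorem~\ref{EssentialSpectrumII} gives $\inf\sigma_{ess}(-\Delta^d_{\sigma,\alpha})=\pi^2/2d^2$; it therefore suffices to prove the form bound $q^d_{\alpha,\sigma}[\varphi]\ge\frac{\pi^2}{2d^2}\,\|\varphi\|^2_{L^2(\Omega)}$ for all $\varphi\in\cD_q$, for then $\inf\mathrm{spec}=\inf\sigma_{ess}$ and $\sigma_d=\emptyset$. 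As a preliminary reduction, $\alpha\ge 0$ lets me discard the non-negative diagonal term $\int\alpha|\varphi(y,y)|^2$, while $\sigma(y)\le\gamma<0$ turns the boundary term into a genuine penalty, $-\int_0^d\sigma|\gamma(\varphi)|^2\ge|\gamma|\int_0^d|\gamma(\varphi)|^2$, so that the problem reduces to a constant Robin penalty of strength $|\gamma|$ on the two finite boundary legs $\{x=0\}$ and $\{y=0\}$.

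The geometric point is that these legs, where the repulsion acts, bound precisely the triangular corner $\Omega_{in}=\{x,y\ge 0,\ x+y\le d\}$, whereas the $\sigma$-term vanishes identically on the complementary semi-infinite strip $\Omega_{out}=\Omega\setminus\Omega_{in}$. I would therefore apply Neumann (Dirichlet–Neumann) bracketing across the interface $\{x+y=d\}$: decoupling the form there enlarges its domain and hence lowers the spectrum, giving the lower bound $-\Delta^d_{\sigma,\alpha}\ge A_{in}\oplus A_{out}$. Here $A_{out}$ is the Laplacian on $\Omega_{out}$ with Dirichlet data on $|x-y|=d$ and a Neumann cut at $x+y=d$, whose spectrum is $[\pi^2/2d^2,\infty)$ by separation of variables (transverse width $d\sqrt2$), and $A_{in}$ is the Laplacian on the bounded triangle $\Omega_{in}$ with Robin penalty $|\gamma|$ on the two legs and Neumann on the hypotenuse, which has purely discrete spectrum. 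It then remains only to show that the lowest eigenvalue of $A_{in}$ is $\ge\pi^2/2d^2$ once $|\gamma|$ is large.

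For this I would pass to the strong-coupling (Dirichlet) limit $|\gamma|\to\infty$ and compute the limiting ground state energy by reflection: reflecting $\Omega_{in}$ evenly across the Neumann hypotenuse $\{x+y=d\}$ fills out the square $[0,d]^2$ and converts the Dirichlet legs into Dirichlet conditions on all four sides, whose lowest eigenfunction $\sin(\pi x/d)\sin(\pi y/d)$ is symmetric under the reflection and has eigenvalue $2\pi^2/d^2$. Thus the Dirichlet-limit ground state energy of $\Omega_{in}$ equals $2\pi^2/d^2$, which strictly exceeds $\pi^2/2d^2$. Since the lowest eigenvalue of the Robin problem on $\Omega_{in}$ is non-decreasing in the penalty and converges to this Dirichlet value as $|\gamma|\to\infty$, there is a finite threshold $C_0=C_0(d)$ with lowest eigenvalue $\ge\pi^2/2d^2$ whenever $|\gamma|\ge C_0$. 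Setting $\gamma:=-C_0$ then yields $\inf\mathrm{spec}(A_{in}\oplus A_{out})\ge\pi^2/2d^2$, hence the form bound and $\sigma_d=\emptyset$.

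The main obstacle is this last step: controlling the Robin ground state of the corner $\Omega_{in}$ and establishing that a single finite threshold $\gamma$, depending only on $d$ and not on the detailed profiles of $\sigma$ and $\alpha$, suffices. The reflection argument pins down the Dirichlet limit cleanly, so the remaining quantitative work is the monotone convergence of the lowest Robin eigenvalue to the Dirichlet eigenvalue as the penalty grows; the strict inequality $2\pi^2/d^2>\pi^2/2d^2$ guarantees that the crossing occurs at finite coupling. This positive gap between the corner's Dirichlet energy and $\inf\sigma_{ess}$ is exactly what makes the absence of bound states possible, in sharp contrast to the effectively Neumann-type corner responsible for the bound state of Theorem~\ref{TheoremDiscreteSpectrumPencil}.
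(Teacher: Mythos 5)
Your proposal is correct, but it reaches the result by a genuinely different route than the paper. The paper's own proof is a two-step reduction: it cites [Theorem~4, \cite{KMBound}] for the case $\alpha=0$, and then combines exactly your preliminary observation --- that the diagonal term $\int_0^\infty\alpha(y)|\varphi(y,y)|^2\,\ud y\geq 0$ only raises the quadratic form --- with Theorem~\ref{EssentialSpectrumII} (giving $\inf\sigma_{ess}=\pi^2/2d^2$) to argue by contradiction: an eigenvalue of $-\Delta^d_{\sigma,\alpha}$ below $\pi^2/2d^2$ would force one for the operator with $\alpha=0$, contradicting the cited result. You share both of these ingredients (the monotonicity in $\alpha$ and the identification of the essential spectrum threshold), but where the paper outsources the core $\alpha=0$ statement to the reference, you prove it directly: Neumann bracketing across $\{x+y=d\}$ decouples $\Omega$ into the half-strip, whose mixed Dirichlet--Neumann spectrum is exactly $[\pi^2/2d^2,\infty)$ by separation of variables, and the corner triangle carrying the entire Robin penalty, whose strong-coupling limit you evaluate as $2\pi^2/d^2$ by even reflection onto the Dirichlet square $[0,d]^2$ (the square's ground state being reflection-symmetric). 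Since $2\pi^2/d^2>\pi^2/2d^2$, a finite penalty threshold depending only on $d$ suffices; the one step you leave to standard theory --- that the lowest Robin eigenvalue of the triangle tends to the mixed Dirichlet value --- is indeed standard on a bounded Lipschitz domain (compact embedding $H^1\hookrightarrow L^2$ and compactness of the trace map give $\liminf_{|\gamma|\to\infty}\lambda_1>\pi^2/2d^2$, which is all you need), so this is not a gap. What each approach buys: the paper's proof is economical, pushing the analytic content into \cite{KMBound}; yours is self-contained, makes the geometric mechanism transparent (the repulsion acts precisely on the boundary of a corner whose Dirichlet energy strictly exceeds the strip threshold, in contrast to the Neumann-type corner behind Theorem~\ref{TheoremDiscreteSpectrumPencil}), yields a threshold $\gamma=\gamma(d)$ that is manifestly uniform in the profiles of $\sigma$ and $\alpha$, and is in principle quantitative.
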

\begin{proof} 
Withouth contact potential $\alpha$ this result has been shown in [Theorem~4,\cite{KMBound}]. 
	
Now, by Theorem~\ref{EssentialSpectrumII} we conclude that 
$\inf \sigma_{ess}(-\Delta^{d}_{\sigma,\alpha})=\pi^2/2d^2$. Furthermore, since $\alpha$ is assumed 
to be strictly positive, the corresponding operator is larger (in the sense of an operator bracketing) than 
the operator with same boundary potential $\sigma$, but without contact potential. Consequently, if there
existed an eigenvalue smaller than $\pi^2/2d^2$ the same would hold for the operator without contact
potential. This, however, is in contradiction with [Theorem~4,\cite{KMBound}].
\end{proof}
Theorem~\ref{AbsenceDiscreteSpectrum} shows that strong singular interactions at the origin (without
contact interaction) destabilise the system in the sense that no discrete spectrum is present anymore
when compared to the free system with $\sigma=\alpha=0$.

\subsection{Random singular pair interactions}
In this subsection we consider a generalisation of the Hamiltonian~\eqref{FormalHamiltonian} in the sense
that the singular, vertex-induced pair interactions are not only present in the origin or the vertex 
of the graph. This seems desirable since, as described previously, localised two-particle interactions can 
be associated with defects in the metal and such defects occur, of course, not only at the origin but 
everywhere in the wire. We note that this model was formulated in~\cite{KernerRandomII} to which we 
also refer for more detail. 

Since the spatial positions of defects in a real metal varies from metal to metal it seems reasonable not 
to work with a specific (deterministic) two-particle Hamiltonian, but with a random one. In other words, in
this section we enter the realm of random Schr\"{o}dinger operators which in recent years have become 
an important research area \cite{stollmann2001caught,KirschInvitation}. Most importantly, using the 
language of random Schr\"{o}dinger operators, one has been able to give a rigorous description of 
various phenomena in physics such as Anderson 
localisation~\cite{AndersonLocalisatioin,SimonBookSchrödinger}.

Turning to our model, we consider a system of two particles on the half-line $\mathbb{R}_+$ whose
random Hamiltonian shall formally be given by 
\begin{equation}\label{RandomHamiltonian}
	H_{\omega}=-\frac{\partial^2}{\partial x^2}-\frac{\partial^2}{\partial y^2}+v_b(|x-y|)+\sum_{i=1}^{\infty}
	v_{i}(x,y)\left[\delta(x-a_i(\omega))+\delta(y-a_i(\omega)) \right]\ ,
\end{equation}
where $(a_i(\omega))_{i \in \mathbb{N}}$ are the random positions of the defects, called atoms in the 
sequel. As before we assume that $(v_i)_{i \in \mathbb{N}}$ are real-valued, bounded and symmetric,
$v_i(x,y)=v_i(y,x)$. Furthermore, we define $l_i(\omega):=a_{i}(\omega)-a_{i-1}(\omega)$ for $i \geq 1$
and set $a_0(\omega):=0$. In other words, $l_i(\omega)$ is the random distance between the $i-1$-st 
and the $i$-th atom. 

Now we consider the lengths $(l_i(\omega))_{i \in \mathbb{N}}$ as a family of independent random
variables over some probability space $(\Pi,\xi,\mathbb{P})$ generated by a Poisson process, see 
\cite{StolzPoisson} for more detail. More explicitly, we assume that the probability for the length 
$l_i(\omega)$ to be in the interval $[a,b]$ is given by
\begin{equation}
	\mathbb{P}\left[l_i \in [a,b] \right]=\nu \int_{a}^{b}\ue^{-\nu l} \ud l \ ,
\end{equation}
with $\nu  > 0$ denoting the Poisson density. 

Again, due to the presence of $\delta$-potentials in~\eqref{RandomHamiltonian} we shall again use a
suitable quadratic form to rigorously construct a self-adjoint operator that is associated with the formal 
expression~\eqref{RandomHamiltonian}. We introduce
\begin{equation}\label{RandomForm}
	q_{\omega}[\varphi]=\int_{\Omega}|\nabla \varphi|^2 \ \ud x+ \sum_{i=1}^{\infty}
	\int_{\Gamma_i(\omega)}\sigma_i(y)|\gamma_{i}(\varphi)|^2 \ \ud y\ ,
\end{equation}
where $\gamma_{i}(\varphi)$ denotes the restriction (in the sense of traces of Sobolev functions) to 
\begin{equation}
	\Gamma_i(\omega):=\{(x,y)\in \Omega: x=a_i(\omega) \ \text{or}\ y=a_i(\omega)  \}\ .
\end{equation}
Furthermore, we set $\sigma_i(y):=-v_{i}(0,y)$. Due to the infinite sum appearing in~\eqref{RandomForm}
it may not be possible to define $q_{\omega}$ on all of $\cD_{q} \subset H^1(\Omega)$. Since we want to
find a closed realisation of the form $q_{\omega}$ we have to guess a suitable sub-domain. Indeed, one
has the following result [Theorem~2.1,\cite{KernerRandomII}]. 
\begin{theorem} 
Let $(\sigma_i(\omega))_{i \in \mathbb{N}} \subset L^{\infty}(\mathbb{R}_+)$ be given. Then the form 
$q_{\omega}$ on the domain 
\begin{equation}
		\cD_{q}(\omega)=\{\varphi \in H^1(\Omega): q_{\omega}[\varphi] < \infty \}
\end{equation}
is positive and closed for almost every $\omega \in \Pi$. 
\end{theorem}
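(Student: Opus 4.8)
The plan is to verify directly the two defining properties of a closable, lower-bounded quadratic form, with the randomness entering only through an almost-sure regularity statement about the atom configuration. First I would record the geometric input that holds for a.e.\ $\omega$: since the gaps $l_i(\omega)$ are i.i.d.\ and exponentially distributed, the strong law of large numbers gives $a_i(\omega)=\sum_{k=1}^i l_k(\omega)\to\infty$ almost surely, so the set $\{a_i(\omega)\}_{i\in\mathbb{N}}$ is locally finite and every bounded subset of $\Omega$ meets only finitely many of the interfaces $\Gamma_i(\omega)$. Fix such a generic $\omega$. On this event each trace $\gamma_i(\varphi)$ is a well-defined element of $L^2(\Gamma_i(\omega))$ for $\varphi\in H^1(\Omega)$ by the Sobolev trace theorem \cite{Dob05}, and the series defining $q_\omega$ is a genuine, locally finite sum of non-negative contributions once the boundary potentials satisfy $\sigma_i\geq 0$, which I take to be the case of interest. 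Under this sign condition positivity is immediate, as $q_\omega[\varphi]=\int_\Omega|\nabla\varphi|^2\,\ud x+\sum_i\int_{\Gamma_i(\omega)}\sigma_i|\gamma_i(\varphi)|^2\,\ud y\geq 0$; for indefinite $\sigma_i$ one instead absorbs the negative part into a fraction of the Dirichlet energy via a trace inequality, obtaining semi-boundedness, after which the closedness argument is unchanged.

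For closedness I would show that $\cD_{q}(\omega)$ is complete in the form norm $\|\varphi\|_\omega^2:=q_\omega[\varphi]+\|\varphi\|_{L^2(\Omega)}^2$. The key structural observation is that the Dirichlet part alone controls the full $H^1$ norm: since the boundary sum is non-negative, $\|\varphi\|_\omega^2\geq\int_\Omega|\nabla\varphi|^2\,\ud x+\|\varphi\|_{L^2(\Omega)}^2=\|\varphi\|_{H^1(\Omega)}^2$. Hence any sequence $(\varphi_n)$ that is Cauchy in the form norm is Cauchy in $H^1(\Omega)$ and therefore converges to some $\varphi\in H^1(\Omega)$. By continuity of each trace map $\gamma_i\colon H^1(\Omega)\to L^2(\Gamma_i(\omega))$ we then get $\gamma_i(\varphi_n)\to\gamma_i(\varphi)$ in $L^2(\Gamma_i(\omega))$ for every fixed $i$.

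It remains to show $\varphi\in\cD_{q}(\omega)$ and $q_\omega[\varphi_n-\varphi]\to 0$, and here I would invoke lower semicontinuity of the non-negative boundary series through Fatou's lemma. Boundedness of the form-Cauchy sequence gives $\sup_n\sum_i\int_{\Gamma_i(\omega)}\sigma_i|\gamma_i(\varphi_n)|^2<\infty$, so Fatou applied termwise in $i$ yields $\sum_i\int_{\Gamma_i(\omega)}\sigma_i|\gamma_i(\varphi)|^2<\infty$, i.e.\ $\varphi\in\cD_{q}(\omega)$. For the convergence, given $\epsilon>0$ choose $N$ with $\sum_i\int_{\Gamma_i(\omega)}\sigma_i|\gamma_i(\varphi_n-\varphi_m)|^2<\epsilon$ for $n,m\geq N$; fixing $n$ and sending $m\to\infty$, Fatou again gives $\sum_i\int_{\Gamma_i(\omega)}\sigma_i|\gamma_i(\varphi_n-\varphi)|^2\leq\epsilon$, while $\int_\Omega|\nabla(\varphi_n-\varphi)|^2\to 0$ by the $H^1$ convergence. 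Hence $q_\omega[\varphi_n-\varphi]\to 0$ and the form is closed; the associated self-adjoint operator then exists by the representation theorem \cite{BEH08}.

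The main obstacle is the simultaneous control of the \emph{infinite} family of boundary traces: the trace constant on $\Gamma_i(\omega)$ degrades as the neighbouring gaps $l_i(\omega)$ shrink, and for a Poisson process arbitrarily small gaps occur almost surely, so no single uniform global trace bound is available. The resolution is to avoid such a global estimate altogether: on each bounded region only finitely many interfaces are present by the almost-sure local finiteness, and the passage to the infinite sum is handled purely by the monotone, Fatou-type structure of the non-negative series rather than by a quantitative trace inequality. This is precisely why the statement holds only for a.e.\ $\omega$, the exceptional null set being the one on which the atoms fail to be locally finite.
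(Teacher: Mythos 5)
The review itself offers no proof of this statement to compare against: it is quoted from [Theorem~2.1,\cite{KernerRandomII}] and the text immediately moves on, so your proposal can only be judged on its own terms, and on those terms it is correct. The three ingredients you use are exactly the right ones: the Dirichlet part alone dominates the $H^1(\Omega)$-norm once the boundary terms are non-negative; each individual trace $\gamma_i\colon H^1(\Omega)\to L^2(\Gamma_i(\omega))$ is continuous (with a constant that may depend on $i$ and $\omega$, which is harmless since $i$ is fixed); and the passage to the infinite sum is handled by termwise convergence plus Fatou's lemma for series, both in showing $\varphi\in\cD_q(\omega)$ and in showing $q_\omega[\varphi_n-\varphi]\to 0$. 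Equivalently packaged: $q_\omega$ is the increasing limit of the truncated forms with $i\leq N$, and an increasing limit of non-negative closed forms is closed on its natural maximal domain; your Fatou computation is precisely the proof of that fact. One structural remark: the almost-sure local finiteness of the atoms is never actually used in your closedness argument (which works for an arbitrary countable configuration); where it matters is in guaranteeing that $\cD_q(\omega)$ is dense in $L^2(\Omega)$ --- e.g.\ it contains all $H^1$-functions of bounded support, since such functions meet only finitely many interfaces --- which is what one needs so that the representation theorem \cite{BEH08} produces the self-adjoint operator $-\Delta_\sigma(\omega)$. So the role you assign to the null set is slightly misplaced, though nothing in your proof breaks.

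Two caveats. First, the sign hypothesis $\sigma_i\geq 0$ that you ``take to be the case of interest'' is not optional: it is forced by the word \emph{positive} in the statement, since already a single interface with $\sigma_i<0$ on a set of positive measure gives an attractive line $\delta$-potential whose form takes negative values; the blanket hypothesis $(\sigma_i)\subset L^\infty(\mathbb{R}_+)$ in the review is a loose transcription of the original assumption, and your reading is the correct repair. Second, your parenthetical claim that for indefinite $\sigma_i$ one can absorb the negative parts into a fraction of the Dirichlet energy via a trace inequality, ``after which the closedness argument is unchanged,'' should be deleted: it fails for exactly the reason you articulate in your own final paragraph. Almost surely the Poisson configuration has gaps with $\inf_i l_i(\omega)=0$ and unboundedly large clusters of atoms in unit intervals, so there is no uniform trace constant, and in fact for $\sigma_i\equiv -c<0$ the form is almost surely unbounded from below (a cluster of $k$ nearly coincident attractive line $\delta$'s drives the energy down like $-k^2$). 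Moreover, the Fatou steps use non-negativity of every term, so the closedness argument would genuinely change. Since that case is excluded by the theorem's own claim of positivity, this is a removable blemish rather than a gap in what you actually prove. A final minor point: the intended form domain carries the Dirichlet condition $\varphi|_{\partial\Omega_D}=0$ coming from the hard-wall binding potential; this is a closed subspace of $H^1(\Omega)$, so your argument applies verbatim there.
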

We denote the unique self-adjoint operator associated with the form $q_{\omega}$ as 
$-\Delta_{\sigma}(\omega)$.

The random Schr\"{o}dinger operators usually considered in the literature have an astonishing property, 
namely that the spectrum is almost surely non-random \cite{PasturFigotin,KirschInvitation}, which is due 
to a certain ergodicity property of the models. For our model, we will see that only the essential part of 
the spectrum is non-random. The discrete part, however, is random. Indeed we have the following 
results [Theorem~3.1, Lemma~3.4, Theorem~3.5,\cite{KernerRandomII}]. 
\begin{theorem}
Let $(\sigma_i(\omega))_{i \in \mathbb{N}} \subset L^{\infty}(\mathbb{R}_+)$ be given. Then
\begin{equation}
		\sigma_{ess}(-\Delta_{\sigma}(\omega))=[\pi^2/2d^2,\infty)
\end{equation}
almost surely. 
\end{theorem}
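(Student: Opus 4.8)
The plan is to establish that the essential spectrum of the random operator $-\Delta_\sigma(\omega)$ almost surely equals $[\pi^2/2d^2,\infty)$ by combining two containments, each proven separately. The lower bound of the essential spectrum comes from the hard-wall binding potential, which forces the configuration space to be the pencil-shaped domain $\Omega$, and we have already seen from Theorem~\ref{EssentialSpectrumII} that the geometry alone pins $\inf\sigma_{ess}$ at $\pi^2/2d^2$ in the deterministic case with $\alpha=0$. Since the present model has no contact interaction ($\alpha\equiv 0$ in \eqref{RandomForm}) and the boundary potentials $\sigma_i$ are bounded and localised on the lines $\Gamma_i(\omega)$, the same Weyl-sequence construction should transfer: I would first show $[\pi^2/2d^2,\infty)\subset\sigma_{ess}(-\Delta_\sigma(\omega))$ almost surely by placing normalised ground states of the Dirichlet-Laplacian on long rectangles $[0,L]\times[0,\sqrt{2}d]$ inside $\Omega$, translated far out along the diagonal so that they avoid the finitely many defect lines that can accumulate near them.

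\emph{First} I would make precise the fact that, almost surely, the interatomic distances $l_i(\omega)$ generated by the Poisson process are unbounded: because the $l_i$ are i.i.d.\ exponential with density $\nu$, the Borel--Cantelli lemma gives $\limsup_i l_i(\omega)=\infty$ almost surely, so arbitrarily long defect-free stretches of the wire exist. On each such stretch one can support a Weyl sequence for any energy $\lambda\geq\pi^2/2d^2$ exactly as in the deterministic proof, with the boundary integrals over $\sigma_i$ simply absent on that stretch. This yields the containment $[\pi^2/2d^2,\infty)\subseteq\sigma_{ess}(-\Delta_\sigma(\omega))$ on a set of full measure, and simultaneously shows it is non-random.

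\emph{For the reverse containment} $\sigma_{ess}(-\Delta_\sigma(\omega))\subseteq[\pi^2/2d^2,\infty)$, I would use the same bracketing philosophy as in the proof of Theorem~\ref{AbsenceDiscreteSpectrum} and the finitely-many-eigenvalues lemma: decompose $\Omega$ into a bounded near-origin piece and a far piece, impose Neumann conditions on the dissecting line to obtain a lower bound (operator bracketing), and observe that on the far piece the only attractive contributions are the localised boundary integrals. The assumption that the $\sigma_i$ are bounded, together with the trace inequality for $H^1(\Omega)$, controls these boundary terms as relatively form-bounded perturbations with small relative bound of the pure pencil Laplacian, so they cannot push essential spectrum below $\pi^2/2d^2$. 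Combining the two containments gives equality almost surely.

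\emph{The hard part will be} the uniform control of the infinitely many random boundary perturbations in the reverse containment: unlike a single deterministic defect, here defects accumulate with density $\nu$ along the whole half-line, so one must verify that the sum $\sum_i\int_{\Gamma_i(\omega)}\sigma_i|\gamma_i(\varphi)|^2\,\ud y$ remains a form-small perturbation uniformly, and that the bracketing lower bound is genuinely $\pi^2/2d^2$ rather than something smaller created by a dense cluster of attractive lines. I expect this to require a trace estimate on $\Omega$ with a constant independent of position along the diagonal, combined with the almost-sure local finiteness of the Poisson configuration, and it is precisely here that the non-randomness of $\sigma_{ess}$ must be argued carefully—most cleanly by invoking the ergodicity of the Poisson point process under the shift along $\mathbb{R}_+$, which guarantees that the essential spectrum is a deterministic set.
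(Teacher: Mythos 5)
The review itself gives no proof of this theorem---it defers entirely to [Theorem~3.1, \cite{KernerRandomII}]---so your proposal has to be measured against the deterministic machinery the paper does develop. Your first containment is correct and is surely the probabilistic core of the original argument: the gaps $l_i(\omega)$ are i.i.d.\ exponential, so $\sum_i \mathbb{P}(l_i>M)=\infty$ for every $M$ and the second Borel--Cantelli lemma gives, almost surely, defect-free stretches of arbitrarily large length; inside such a stretch the Weyl sequences from the proof of Theorem~\ref{EssentialSpectrumII} (transverse ground state of the strip of width $\sqrt{2}d$, modulated and cut off longitudinally) meet no boundary terms, yielding $[\pi^2/2d^2,\infty)\subseteq\sigma_{ess}(-\Delta_{\sigma}(\omega))$ on a set of full measure.

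The genuine gap is in your reverse containment, and it is structural rather than technical. The difficulty you flag---``a dense cluster of attractive lines''---cannot be repaired by a sharper trace estimate: a Poisson process almost surely produces, infinitely often, windows of diameter $\varepsilon$ containing $n$ atoms, for every $n$ and $\varepsilon$. If the $\sigma_i$ were allowed to be attractive and uniformly bounded away from zero, a normalised test function of fixed scale $\sim d$ centred on such a cluster has form value of order $C/d^2 - cn/d \to -\infty$, so $\sigma_{ess}$ would be unbounded below and the stated theorem would be false. The result therefore hinges on the sign restriction that is implicit in the preceding closedness theorem (``the form $q_{\omega}$ \ldots is positive''), i.e.\ the boundary terms in \eqref{RandomForm} contribute non-negatively. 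Once that is granted, your entire apparatus of Neumann bracketing, small relative form bounds and ergodicity is unnecessary: $q_{\omega}\geq q_{\mathrm{free}}$ in the sense of quadratic forms, so by the min--max principle $\inf\sigma_{ess}(-\Delta_{\sigma}(\omega))\geq \inf\sigma_{ess}(-\Delta^{d}_{\sigma=0,\alpha=0})=\pi^2/2d^2$ for \emph{every} $\omega$, deterministically, by Theorem~\ref{EssentialSpectrumII}. (Your decomposition into a bounded piece near the origin and a far piece buys nothing here in any case, since all but finitely many defect lines lie in the far piece.) Finally, the appeal to ergodicity of the Poisson process is not available: the family $-\Delta_{\sigma}(\omega)$ is not covariant under shifts---the domain $\Omega$ has a distinguished corner at the origin and the potentials $\sigma_i$ are attached to indices, not positions---which is precisely why the paper emphasises that, in contrast to ergodic random Schr\"odinger operators, the discrete spectrum of this model remains random. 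The almost-sure non-randomness of $\sigma_{ess}$ is an \emph{output} of the two containments (one deterministic, one almost sure via Borel--Cantelli), not something one may import from Pastur-type theorems.
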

The discrete part of the spectrum, on the other hand, is random. More explicitly, we obtain the following result.
\begin{theorem}
\label{TheoremAbsenceDiscreteSpectrum} 
Let $(\sigma_i(\omega))_{i \in \mathbb{N}} \subset L^{\infty}(\mathbb{R}_+)$ be given. Then 
\begin{equation}
		\mathbb{P}[\sigma_{d}(-\Delta_{\sigma}(\omega))\neq\emptyset] > 0\ .
\end{equation}
Furthermore, there exists a constant $\gamma=\gamma(d) > 0$ such that if $\inf \sigma_k > \gamma$ 
for one $k\in \mathbb{N}$ then
\begin{equation}
		\mathbb{P}[\sigma_{d}(-\Delta_{\sigma}(\omega))=\emptyset] > 0\ .
\end{equation}
\end{theorem}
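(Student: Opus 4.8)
The plan is to prove the two assertions separately; in both cases the probabilistic input is only that, for the Poisson process, the gaps $l_i$ are exponentially distributed, so the position $a_k=\sum_{i\le k}l_i$ of any prescribed defect can be made either large or small with strictly positive probability. The analytic input is the almost-sure identity $\inf\sigma_{ess}(-\Delta_{\sigma}(\omega))=\pi^2/2d^2$ from the preceding theorem, together with Theorem~\ref{TheoremDiscreteSpectrumPencil} and the deterministic absence result Theorem~\ref{AbsenceDiscreteSpectrum}. Since all $\sigma_i\ge 0$, the form $q_\omega$ is monotone in the defects: adding one raises the form, removing one lowers it, and this monotonicity is what makes both positive-probability events easy to engineer.

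For the first claim I would work on the event $A=\{a_1(\omega)>R\}$, on which no defect line meets the square $[0,R]^2$; its probability is $\mathbb{P}[l_1>R]=\ue^{-\nu R}>0$. By Theorem~\ref{TheoremDiscreteSpectrumPencil} the free pencil operator $-\Delta^{d}_{\sigma=0,\alpha=0}$ has an eigenvalue $\lambda_0<\pi^2/2d^2$, and since $\lambda_0$ lies below the essential spectrum the corresponding eigenfunction $\psi_0$ decays exponentially along the strip. Truncating $\psi_0$ to $\{x,y<R\}$ by a smooth cut-off produces a test function $\varphi$ whose support avoids every $\Gamma_i(\omega)$ on $A$; hence all boundary terms vanish and $q_\omega[\varphi]=\int_\Omega|\nabla\varphi|^2$. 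For $R$ large the truncation error is exponentially small, so $q_\omega[\varphi]<(\pi^2/2d^2)\,\|\varphi\|^2$, and the variational principle yields an eigenvalue below $\inf\sigma_{ess}=\pi^2/2d^2$. Thus $\sigma_{d}(-\Delta_{\sigma}(\omega))\neq\emptyset$ on $A$, proving $\mathbb{P}[\sigma_{d}\neq\emptyset]>0$.

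For the second claim, fix the index $k$ with $\inf\sigma_k>\gamma$ (with $\gamma=\gamma(d)$ to be chosen) and work on the event $B=\{a_k(\omega)\le d/2\}$, whose probability is positive because $a_k$ is a sum of $k$ exponential variables. I first discard all defect terms except the $k$-th (each is nonnegative, so this only lowers the form) and then split the remaining repulsion $\sigma_k\int_{\Gamma_k}|\gamma_k\varphi|^2$ in equal halves between the two components of $\Omega\setminus\Gamma_k$, namely the corner square $\Omega_{in}=[0,a_k]^2$ and the outer region $\Omega_{out}=\Omega\setminus\Omega_{in}$. Dropping the matching of $\varphi$ across $\Gamma_k$ gives the form-decoupling lower bound $q_\omega[\varphi]\ge\mu_{in}\|\varphi_{in}\|^2+\mu_{out}\|\varphi_{out}\|^2$, where $\mu_{in},\mu_{out}$ are the bottoms of the two Robin problems with parameter $\sigma_k/2$ on $\Gamma_k$, Neumann on the axes and Dirichlet on the hard walls. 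On $\Omega_{in}$ the Robin condition tends to Dirichlet as $\sigma_k\to\infty$, for which the ground state equals $\pi^2/2a_k^2\ge 2\pi^2/d^2$; since $a_k\le d/2$ leaves a definite gap above $\pi^2/2d^2$, there is a threshold $\gamma=\gamma(d)$, independent of $k$ and $\omega$, forcing $\mu_{in}\ge\pi^2/2d^2$. Granting $\mu_{out}\ge\pi^2/2d^2$ as well, one obtains $q_\omega\ge(\pi^2/2d^2)\,\|\cdot\|^2$, i.e.\ no spectrum below $\inf\sigma_{ess}=\pi^2/2d^2$, so $\sigma_{d}(-\Delta_{\sigma}(\omega))=\emptyset$ on $B$ and $\mathbb{P}[\sigma_{d}=\emptyset]>0$.

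The main obstacle is the estimate $\mu_{out}\ge\pi^2/2d^2$ for the infinite outer region. Here the bottom of the spectrum equals $\pi^2/2d^2$ exactly, coming from the straight strip at infinity, so one must rule out any bound state dipping below it; a mere Neumann condition on $\Gamma_k$ would fail, since the thin regions between $\Gamma_k$ and the Neumann axis segments then carry a zero transverse mode. The Robin repulsion is therefore indispensable: in coordinates along and across the diagonal the bulk cross-sections are Dirichlet--Dirichlet and obey $2\int|\partial_w\varphi|^2\ge(\pi^2/2d^2)\int|\varphi|^2$, while near the two axes the cross-sections are capped by the repulsive segments of $\Gamma_k$, and a one-dimensional Neumann--Robin computation on intervals of length $\le a_k\le d/2$ shows that once $\sigma_k$ exceeds a constant depending only on $d$ the transverse ground state there also exceeds $\pi^2/2d^2$. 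The difficulty is that the cross-shaped $\Gamma_k$ is not aligned with the strip cross-sections, so this transverse argument is genuinely two-dimensional near the corner; this is precisely the point handled, for axis-supported repulsion, by Theorem~\ref{AbsenceDiscreteSpectrum}, and the cleanest route is to adapt that proof to the shifted cross lines, the choice $a_k\le d/2$ providing the room that keeps $\gamma$ dependent on $d$ alone.
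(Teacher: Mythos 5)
The paper itself contains no proof of this theorem: it defers entirely to [Theorem~3.1, Lemma~3.4, Theorem~3.5] of \cite{KernerRandomII}, whose method is indeed of the type you propose (condition on defect positions, use monotonicity of $q_\omega$ in the nonnegative defect terms, and feed in the deterministic Theorems~\ref{TheoremDiscreteSpectrumPencil} and~\ref{AbsenceDiscreteSpectrum}). Your first part is complete and correct: on $\{a_1>R\}$, which has probability $\ue^{-\nu R}>0$, every $\Gamma_i$ misses $[0,R)^2$, so the truncated pencil bound state is an admissible trial function with vanishing defect terms, and its Rayleigh quotient drops below $\pi^2/2d^2$ for large $R$ (plain $H^1$-decay of $\psi_0$ already suffices here; exponential decay is a luxury); intersecting with the almost-sure event $\sigma_{ess}(-\Delta_\sigma(\omega))=[\pi^2/2d^2,\infty)$ yields an eigenvalue below the essential spectrum. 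One remark on hypotheses: your standing assumption $\sigma_i\ge 0$ is not stated in the theorem, but it is genuinely necessary (a strongly attractive defect, wherever it sits, produces a bound state for a.e.\ $\omega$ and falsifies the second claim) and is implicit in the paper's assertion that $q_\omega$ is positive; reading it in is the right call.

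In the second part the skeleton is right, but the decisive estimate $\mu_{out}\ge\pi^2/2d^2$ is never established --- you defer it to ``adapting'' Theorem~\ref{AbsenceDiscreteSpectrum} to a nonstandard domain, and that is a real gap, aggravated by your two-piece split: note first that $\Omega\setminus\Gamma_k$ has \emph{four} components, not two, namely the corner square $Q_1=[0,a_k]^2$, two bounded slivers $Q_3,Q_4$ between the cross and the axes, and the quadrant $Q_2=\{x\ge a_k,\,y\ge a_k\}\cap\Omega$; your $\Omega_{out}$ lumps the last three together and keeps the outer arms of the cross in its interior, which is exactly what makes the estimate awkward. The repair is to decouple across the \emph{whole} cross, assigning weight $\sigma_k/2$ to each side of each segment. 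Then (i) $Q_1$ is your Neumann--Robin square, with a threshold uniform in $a_k\le d/2$; (ii) $Q_2$, whose two Robin segments have length exactly $d$, is an exact translate of the deterministic operator $-\Delta^d_{\sigma,\alpha=0}$ with constant repulsive boundary potential of strength $\sigma_k/2$, so Theorem~\ref{AbsenceDiscreteSpectrum} combined with Theorem~\ref{EssentialSpectrumII} applies verbatim --- no adaptation needed --- and gives $\inf\mathrm{spec}\ge\pi^2/2d^2$ once $\sigma_k$ exceeds twice the deterministic threshold (this piece is also a second, independent reason why pure Neumann decoupling fails: with Neumann, $Q_2$ is congruent to the free pencil and regains the geometric bound state of Theorem~\ref{TheoremDiscreteSpectrumPencil}, alongside the sliver modes you correctly point out); (iii) for the slivers, beware that your cross-sections transverse to the diagonal have length up to $\sqrt{2}\,a_k$, and at $a_k=d/2$ even their Dirichlet limit equals exactly $\pi^2/2d^2$, so the required Robin threshold blows up as $a_k\uparrow d/2$; either shrink the event to $\{a_k\le d/4\}$ (still of positive probability) or use vertical sections --- Robin at $y=a_k$, Dirichlet on the hard wall, length between $d-a_k$ and $d$ --- which give a threshold depending only on $d$ for all $a_k\le d/2$. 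With these repairs the argument closes, and $\gamma(d)$ is the maximum of the three thresholds.
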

Theorem~\ref{TheoremAbsenceDiscreteSpectrum} tells us that the discrete part of the spectrum is
destroyed with finite probability as well as conserved with finite probability. This leads to an interesting
physical implication: In general, disorder is associated with a suppression of transport as in the Anderson
localisation phenomenon. However, assuming that no dense pure point spectrum is created in 
$[\pi^2/2d^2,\infty)$ and that the density of states does not change, 
Theorem~\ref{TheoremAbsenceDiscreteSpectrum} implies that disorder may lead to an improvement of
transport with finite probability according to `Fermi's golden rule'.  

\subsection{The condensation of electron pairs in a quantum wire}\label{SectionCondensationPairs}
In this subsection we want to report on the results that were obtained in
\cite{KernerElectronPairs,KernerSurfaceDefects,KernerInteractingPairs}. Since we are interested in pairs
of particles, we consider the case where $d< \infty$, i.e., we assume that a hard-wall binding potential
$v_{b}$ is present. In the previous sections we worked on the full Hilbert space $L^2(\Omega)$ describing
two distinguishable and spinless particles. However, since we are interested in applying the 
Hamiltonian~\eqref{FormalHamiltonian} to understand superconductivity, which involves electrons, we
need to implement the exchange symmetry of identical particles. 

In this review we restrict ourselves to the case considered in~\cite{KernerElectronPairs} where the two
electrons are assumed to have the same spin. This leads to the requirement that the two-particle wave
function has to be anti-symmetric. The case of opposite spin, which is realised in actual Cooper
pairs, is considered in~\cite{KernerInteractingPairs}. We only mention here that the results regarding 
the condensation there are comparable. 

In order to ensure anti-symmetry of the wave function we work in the anti-symmetric subspace
\begin{equation}
	L^2_{a}(\Omega):=\{\varphi \in L^2(\Omega):\ \varphi(x,y)=-\varphi(y,x)   \}\ .
\end{equation}
We then introduce the quadratic form 
\begin{equation}\label{FormCondensation}
	q^{d}_{\sigma}[\varphi]:=\frac{\hbar^2}{2m_e}\int_{\Omega}|\nabla \varphi|^2 \ \ud x - \int_{0}^{d}\sigma(y)|\gamma(\varphi)|^2 \ud y \ ,
\end{equation}
where $\sigma \in L^{\infty}(\mathbb{R}_+)$, on this subspace. Here we added physical constants with
$m_e$ denoting the electron mass. The domain of the form is given by 
$\cD_{q}:=\{\varphi \in H^1(\Omega)\cap L^2_{a}(\Omega): \ \varphi|_{\partial \Omega_D}=0  \}$. Again, 
this form is closed and bounded from below, and hence there exists a unique self-adjoint operator 
associated with this form. This is the Hamiltonian of our two-particle system. We denote this operator,
which again acts as the standard two-dimensional Laplacian, as $-\Delta^{d}_{\sigma}$. 

\begin{theorem}[\cite{KernerElectronPairs}]\label{TheoremBoundElectrons} 
One has 
\begin{equation}
		\sigma_{ess}(-\Delta^{d}_{\sigma})=[\hbar^2 \pi^2 /m_e d^2,\infty)\ .
\end{equation}
Furthermore, if $\sigma=0$ then
\begin{equation}
		\sigma_{d}(-\Delta^{d}_{\sigma=0})=\{E_0 \}\ ,
\end{equation}
i.e., there is exactly one eigenvalue with multiplicity one below the bottom of the essential spectrum. 
In addition, one has
\begin{equation}
		0.25 \cdot \frac{\hbar^2 \pi^2}{m_ed^2}\leq E_0 \leq 0.93 \cdot \frac{\hbar^2 \pi^2}{m_ed^2}\ .
\end{equation}
\end{theorem}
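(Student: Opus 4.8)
The plan is to exploit the anti-symmetry to reduce the problem to a tractable planar region, and then to treat the essential spectrum, the lower bound, and the counting of bound states separately. Since every $\varphi\in L^2_a(\Omega)$ is odd under $(x,y)\mapsto(y,x)$, it is determined by its restriction to $\Omega^+:=\Omega\cap\{x\ge y\}$ and vanishes on the diagonal $x=y$. Passing to the orthonormal coordinates $u=(x+y)/\sqrt2$, $v=(x-y)/\sqrt2$, the Laplacian becomes $-\partial_u^2-\partial_v^2$ and $\Omega^+$ turns into the bent half-strip $\{(u,v):0\le v\le L,\ u\ge v\}$ with $L:=d/\sqrt2$. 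The hard wall $|x-y|=d$ gives a Dirichlet condition at $v=L$, the diagonal gives a Dirichlet condition at $v=0$ (forced by anti-symmetry), and the base $y=0$ is the slanted segment $u=v$, on which $q^d_\sigma$ imposes a Robin condition with weight $\sigma$ that is supported only on the bounded base $0\le v\le L$; for $\sigma=0$ this becomes Neumann. Throughout I write $\Lambda_0:=\hbar^2\pi^2/(m_e d^2)$, which in these units equals $\tfrac{\hbar^2}{2m_e}(\pi/L)^2$.

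For the essential spectrum I would argue as in Theorem~\ref{EssentialSpectrumII}. At large $u$ the region is the straight strip $0\le v\le L$, whose transverse Dirichlet--Dirichlet ground mode $\sin(\pi v/L)$ has energy $\Lambda_0$; placing this mode against a longitudinal wave packet $e^{iku}\chi_n(u)$ supported far out and spreading produces a Weyl sequence at every energy $\Lambda_0+\tfrac{\hbar^2}{2m_e}k^2$, so $[\Lambda_0,\infty)\subseteq\sigma_{ess}$. For the converse I impose an extra Neumann condition on the line $u=L$; this lowers the operator and decouples it into a bounded corner triangle and the half-strip $u\ge L$. The corner piece has compact resolvent for \emph{any} bounded $\sigma$, while the strip piece has spectrum $[\Lambda_0,\infty)$, so $\inf\sigma_{ess}\ge\Lambda_0$. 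This explains why no decay hypothesis on $\sigma$ is needed here, in contrast with the contact term living on the unbounded diagonal in Theorem~\ref{EssentialSpectrumII}.

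Now take $\sigma=0$. The lower bound $E_0\ge 0.25\,\Lambda_0$ follows from a transverse Poincaré inequality applied slice by slice. At fixed $u$ the admissible $v$-interval is $[0,\min(u,L)]$ with a Dirichlet end at $v=0$; for $u\ge L$ both ends are Dirichlet and the sharp constant is $(\pi/L)^2$, while for $u\le L$ the free (Neumann) end at $v=u$ gives the Dirichlet--Neumann constant $(\pi/2u)^2\ge(\pi/2L)^2$. The worst case is the widest slice with a free end, $u=L$, yielding $\tfrac{\hbar^2}{2m_e}(\pi/2L)^2=\Lambda_0/4$, and since $|\nabla\varphi|^2\ge|\partial_v\varphi|^2$, integrating over $u$ gives $E_0\ge\Lambda_0/4$. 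For the upper bound and existence I would insert a variational trial function — the transverse ground mode modulated by a longitudinal profile adapted to the corner and decaying into the strip — whose Rayleigh quotient optimises to $\le 0.93\,\Lambda_0<\Lambda_0$; being strictly below $\inf\sigma_{ess}$, this guarantees at least one eigenvalue.

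The heart of the statement is that there is \emph{exactly} one such eigenvalue. Here I would again impose Neumann conditions on $u=L$: the decoupled operator $H^N_C\oplus H^N_R$ satisfies $H^N_C\oplus H^N_R\le -\Delta^d_{\sigma=0}$, so by min--max the number of eigenvalues below $\Lambda_0$ of the full operator is at most that of the decoupled one. The strip part $H^N_R$ contributes none, since its spectrum is $[\Lambda_0,\infty)$. The corner operator $H^N_C$ lives on the triangle $\{0\le v\le u\le L\}$ with Dirichlet on $v=0$ and Neumann on both $u=L$ and the hypotenuse $u=v$, and the key trick is that reflecting across the Neumann hypotenuse unfolds it to the square $[0,L]^2$ with Dirichlet on the two sides meeting at the origin, Neumann on the opposite two, and the symmetry constraint $\psi(u,v)=\psi(v,u)$. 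This product problem is explicitly solvable: its eigenvalues are $\tfrac{\hbar^2}{2m_e}\big((2m-1)^2+(2n-1)^2\big)(\pi/2L)^2$, and the only one below $\Lambda_0=\tfrac{\hbar^2}{2m_e}\cdot 4(\pi/2L)^2$ is the symmetric mode $(m,n)=(1,1)$ at $\Lambda_0/2$. Hence $H^N_C$ has a single eigenvalue below $\Lambda_0$, so the full operator has at most one, which together with existence forces exactly one; simplicity is then automatic since the ground state can be taken positive. I expect the main obstacle to be precisely this counting step: one must choose the decoupling line so that \emph{both} pieces are exactly solvable, and it is the reflection reducing the mixed-boundary triangle to a separable problem on the square that makes the sharp count possible — a crude bracketing or a naive transverse-mode truncation would only bound, not determine, the number of bound states.
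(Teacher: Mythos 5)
Your reduction to the bent half-strip and three of your four steps are sound, and they in fact reconstruct the route behind the cited source: this review paper gives no proof at all (the theorem is quoted from \cite{KernerElectronPairs}), and that reference works with exactly your geometry — unfolding the antisymmetric pair problem across the Neumann base $u=v$ turns your region into the L-shaped Dirichlet waveguide of width $L=d/\sqrt{2}$, which is the form in which the result is proved there, leaning on the known analysis of that waveguide by Exner--\v{S}eba--\v{S}\v{t}ov\'{\i}\v{c}ek. Your essential-spectrum argument (Weyl sequences in the arm plus a Neumann cut at $u=L$, with the correct observation that $\sigma$ lives only on the bounded base, so no decay hypothesis is needed), the slicewise Dirichlet--Neumann Poincar\'e inequality giving $E_0\geq 0.25\,\Lambda_0$, and the uniqueness argument (Neumann bracketing at $u=L$, then unfolding the mixed-boundary triangle across its Neumann hypotenuse to the separable square, whose only eigenvalue below $\Lambda_0=\tfrac{\hbar^2}{2m_e}(\pi/L)^2$ is the symmetric mode at $\Lambda_0/2$) are complete and correct.

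The genuine gap is the pair of claims you dispose of in one sentence: existence of an eigenvalue below $\Lambda_0$ and the bound $E_0\leq 0.93\,\Lambda_0$. This is the hard core of the theorem, and your own bracketing shows why it cannot be waved through. Neumann bracketing only bounds eigenvalues from below, so it can never produce existence; and Dirichlet bracketing at the same cut $u=L$ fails, because the all-Dirichlet corner (unfolded: the Dirichlet square) has lowest symmetric eigenvalue $2\Lambda_0$, above threshold — the bound state is not localised in the corner but leaks into the arm, so no decoupling argument sees it. Moreover, the separable trial function you describe, $\psi=\sin(\pi v/L)\,g(u)$, is exactly marginal: computing $q[\psi]-\Lambda_0\|\psi\|^2$ gives a one-dimensional functional $\tfrac{\hbar^2}{2m_e}\bigl[\int |g'|^2 S(u)\,\ud u+(\pi/L)^2\int |g|^2 C(u)\,\ud u\bigr]$ with $C(u)=\int_0^{\min(u,L)}\cos(2\pi v/L)\,\ud v$, and $C$ has \emph{zero} integral (positive for $u<L/2$, negative for $L/2<u<L$, vanishing for $u\geq L$), so the leading-order energy gain of a slowly varying profile vanishes and it is not at all automatic that this ansatz dips below threshold. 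One needs either a Goldstone--Jaffe-type construction (slowly varying profile plus a localised corner bump, where the cross term is linear and negative while its direct cost is quadratic) to get qualitative existence, or an explicit, carefully matched piecewise trial function with a computed Rayleigh quotient to get the quantitative constant $0.93$. As written, neither the existence of the eigenvalue nor the upper bound is established; everything else in your proposal is fine.
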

Theorem~\ref{TheoremBoundElectrons} has an interesting physical consequence: one important
measurable quantity associated with the superconducting phase of a metal is the so-called spectral gap 
$\Delta > 0$, see~\cite{MR04}. This spectral gap is responsible, for example, for the exponential decay of
the specific heat at temperatures lower than the critical one. It is one of the successes of the BCS-theory
that the spectral gap can be interpreted as the binding energy of a single Cooper pair. In other words, the
spectral gap measures the energy necessary to break up one Cooper pair. Due to the choice of the
hard-wall binding potential, in our model the pair cannot be broken up. However, it is possible to excite a 
pair. Since, as we will see later, the pairs condense into the ground state it seems reasonable to identify
the spectral gap as the excitation of a pair from the ground state to the first excited states. In other words,
in our model one obtains the relation 
\begin{equation}\label{SpectralGap}
	\Delta=\Delta(d) \sim \frac{\hbar^2 \pi^2}{m_ed^2}
\end{equation}
for the spectral gap. This relation establishes a direct link between the spatial extension of a pair and the
spectral gap. In particular, since the spectral gap in superconducting metals is of order $10^{-3}$eV 
\cite{MR04}, the relation~\eqref{SpectralGap} implies that $d$ is of the order $10^{-6}$m.  
Interestingly, this agrees with Cooper's estimate as presented in \cite{CooperBoundElectron}.

In order to study the condensation phenomenon (similar to BEC as in Section~\ref{secBEC}) of electron
pairs one has to employ methods from quantum statistical mechanics (see, e.g., \cite{SchwablSM}). 
In particular, one has to perform a thermodynamic limit as in Section~\ref{secBEC}, and this requires 
to restrict the system from the half-line to the interval $[0,L]$. The underlying Hilbert space then is 
$L^2_{a}(\Omega_L)$, with
\begin{equation}
	\Omega_L:=\{(x,y) \in \Omega: 0 \leq x,y \leq L\}\ .
\end{equation}
The natural generalisation of~\eqref{FormCondensation} is defined on the domain 
$\cD_{q_{L}}:=\{\varphi \in H^1(\Omega_L)\cap L^2_{a}(\Omega_L):\ \varphi|_{\partial \Omega_{L,D}}=0\}$
with $\partial \Omega_{L,D}:=\{(x,y) \in \partial \Omega_L:\ |x-y|=d \ \text{or}\ x=L \ \text{or}\  y=L \}$. In 
other words, one introduces additional Dirichlet boundary conditions along the dissecting lines $x=L$ and 
$y=L$. We denote the associated self-adjoint operator by $-\Delta^{d}_{\sigma,L}$. 

Since $\Omega_L$ is a bounded Lipschitz domain, $-\Delta^{d}_{\sigma,L}$ has purely discrete spectrum. 
We denote its corresponding eigenvalues, counted with multiplicity, by 
$\{E^{\sigma}_n(L) \}_{n\in \mathbb{N}_0}$.
\begin{lemma}\label{ConvergenceEigenvalues} 
Assume that $\sigma=0$. Then 
\begin{equation}
		\lim_{L \rightarrow \infty}E^{\sigma=0}_0(L)=E_0\ .
\end{equation}
Furthermore, $E^{\sigma=0}_n(L) \geq \frac{\hbar^2\pi^2}{m_ed^2}$ for all $n \geq 1$ and $L > d$.
\end{lemma}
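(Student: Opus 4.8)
The plan is to deduce both statements from the min-max principle combined with Dirichlet bracketing against the half-line operator $-\Delta^{d}_{\sigma=0}$, whose spectrum is completely described by Theorem~\ref{TheoremBoundElectrons}. Write $\mu_n$, $n\in\mathbb{N}_0$, for the min-max values of $-\Delta^{d}_{\sigma=0}$. The first step is the observation that every $\varphi\in\cD_{q_{L}}$ extends by zero to a function $\tilde\varphi$ on the full pencil $\Omega$: since $\varphi$ vanishes (in the trace sense) on the dissecting segments $x=L$ and $y=L$, the extension lies in $H^1(\Omega)$; it remains antisymmetric and still vanishes on $\partial\Omega_D$, so $\tilde\varphi$ belongs to the form domain of $q^{d}_{\sigma}$ with $\sigma=0$. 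Because the contributions over $\Omega\setminus\Omega_L$ vanish and the boundary integral in \eqref{FormCondensation} is absent when $\sigma=0$, the two forms agree on $\tilde\varphi$. Hence by the variational characterisation of eigenvalues $E^{\sigma=0}_n(L)\ge\mu_n$ for every $n$ and every $L>d$.

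Now I would invoke Theorem~\ref{TheoremBoundElectrons}: it gives $\inf\sigma_{ess}(-\Delta^{d}_{\sigma=0})=\hbar^2\pi^2/(m_e d^2)$ and asserts that there is exactly one eigenvalue, $E_0$, strictly below this threshold, of multiplicity one. By the min-max principle this forces $\mu_0=E_0$ while $\mu_n=\hbar^2\pi^2/(m_e d^2)$ for all $n\ge1$, since the min-max values stabilise at the bottom of the essential spectrum once the single discrete eigenvalue has been exhausted. Substituting $\mu_n=\hbar^2\pi^2/(m_e d^2)$ into the bracketing bound of the first step yields $E^{\sigma=0}_n(L)\ge\hbar^2\pi^2/(m_e d^2)$ for all $n\ge1$ and $L>d$, which is the second assertion. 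The case $n=0$ gives $E^{\sigma=0}_0(L)\ge E_0$ for all $L>d$, and since enlarging $L$ only enlarges the form domain (again via extension by zero), $L\mapsto E^{\sigma=0}_0(L)$ is non-increasing; in particular $\liminf_{L\to\infty}E^{\sigma=0}_0(L)\ge E_0$.

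It then remains to prove the matching upper bound $\limsup_{L\to\infty}E^{\sigma=0}_0(L)\le E_0$. Let $\psi_0$ be the ground-state eigenfunction of $-\Delta^{d}_{\sigma=0}$ for $E_0$. Because $E_0$ lies strictly below the essential spectrum, $\psi_0$ decays exponentially in the centre-of-mass direction, so $\psi_0\in H^1(\Omega)$ with rapidly vanishing tail. I would choose a symmetric cut-off $\chi_L(x,y)=\eta_L(x)\eta_L(y)$ with $\eta_L\equiv 1$ on $[0,L/2]$, $\eta_L\equiv 0$ on $[L,\infty)$ and $\|\eta_L'\|_\infty=O(1/L)$; then $\chi_L\psi_0$ is antisymmetric, lies in $\cD_{q_{L}}$ (it vanishes on $x=L$, $y=L$ and on $\partial\Omega_D$), and is admissible as a trial function. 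Writing $\nabla(\chi_L\psi_0)=\chi_L\nabla\psi_0+\psi_0\nabla\chi_L$, the first term converges to $\nabla\psi_0$ in $L^2$ by dominated convergence, while the second is bounded by $\|\eta_L'\|_\infty$ times the $L^2$-mass of $\psi_0$ on the transition layer, which tends to zero by the decay of $\psi_0$. Thus $\chi_L\psi_0\to\psi_0$ in $H^1(\Omega)$, the Rayleigh quotient of $\chi_L\psi_0$ converges to $E_0$, and the variational principle gives $E^{\sigma=0}_0(L)\le E_0+o(1)$. Combined with the lower bound this yields $\lim_{L\to\infty}E^{\sigma=0}_0(L)=E_0$.

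The only genuinely non-routine ingredient is this upper bound, and its crux is the decay of $\psi_0$: one must control the $H^1$-mass of the eigenfunction escaping to the region $\{x>L/2\}\cup\{y>L/2\}$, and this is precisely where the strict separation $E_0<\hbar^2\pi^2/(m_e d^2)$ supplied by Theorem~\ref{TheoremBoundElectrons} is indispensable. Everything else — the extension-by-zero embedding, the preservation of antisymmetry and of the mixed Dirichlet/Neumann boundary conditions under these manipulations, and the bookkeeping of the min-max values — is standard, so that the lower bound and the second assertion both follow immediately from Dirichlet bracketing.
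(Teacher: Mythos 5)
Your proof is correct. Note that the paper itself gives no proof of this lemma (as a review it defers to the cited work \cite{KernerElectronPairs}), so your argument cannot be compared line by line; but the route you take --- extension by zero of trial functions to get the monotone lower bound $E^{\sigma=0}_n(L)\ge\mu_n$ via min-max, identification of $\mu_0=E_0$ and $\mu_n=\hbar^2\pi^2/(m_ed^2)$ for $n\ge 1$ from Theorem~\ref{TheoremBoundElectrons}, and a symmetric cut-off of the half-line ground state as a trial function for the matching upper bound --- is the standard and natural one, and all the boundary-condition and antisymmetry bookkeeping you carry out is sound. One small remark: the exponential decay of $\psi_0$, which you single out as the indispensable ingredient, is not actually needed. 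Since $\|\eta_L'\|_\infty=O(1/L)$ and $\psi_0\in L^2(\Omega)$, the error term $\|\psi_0\nabla\chi_L\|_{L^2}$ already tends to zero because the $L^2$-mass of $\psi_0$ on the region $\{x>L/2\}\cup\{y>L/2\}$ vanishes as $L\to\infty$ by dominated convergence alone; the strict inequality $E_0<\hbar^2\pi^2/(m_ed^2)$ is what guarantees that $\psi_0$ exists as a genuine eigenfunction, not what controls the tail estimate.
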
 
Lemma~\ref{ConvergenceEigenvalues} implies the existence of a finite spectral gap in the thermodynamic limit which eventually is
responsible for the condensation of the pairs. 

Recalling Definition~\ref{BECdef}, we can now establish the main result of this section. 
\begin{theorem}\label{CondensationPairsTheorem} 
For $\sigma=0$ there exists a critical density $\rho_{crit}(\beta)$ such that the ground state is
macroscopically occupied in the thermodynamic limit for all pair densities $\rho > \rho_{crit}(\beta)$.
Furthermore, there exists a constant $\gamma < 0$ such that, for all pair densities $\rho > 0$, no 
eigenstate is macroscopically occupied if $\|\sigma\|_{\infty} < \gamma$. 
\end{theorem}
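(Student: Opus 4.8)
The plan is to treat the gas of pairs in complete analogy with the free Bose gas of Section~\ref{secBEC}: the single-pair Hamiltonian $-\Delta^{d}_{\sigma,L}$ plays the role of the one-particle Laplacian, its eigenvalues $\{E^{\sigma}_n(L)\}_{n\in\mathbb{N}_0}$ play the role of the one-particle levels $k_n^2(\mathcal{L})$, and the interval length $L$ takes over the role of $\mathcal{L}$. In the grand canonical ensemble the chemical potential $\mu_L\leq E^{\sigma}_0(L)$ is fixed by the density constraint $\rho=\frac{1}{L}\sum_{n=0}^{\infty}(\ue^{\beta(E^{\sigma}_n(L)-\mu_L)}-1)^{-1}$, and BEC into the ground state means, by Definition~\ref{BECdef}, that $\limsup_{L\to\infty}\frac{1}{L}(\ue^{\beta(E^{\sigma}_0(L)-\mu_L)}-1)^{-1}>0$. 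The entire argument hinges on the spectral gap: for $\sigma=0$, Theorem~\ref{TheoremBoundElectrons} and Lemma~\ref{ConvergenceEigenvalues} guarantee that $E^{\sigma=0}_0(L)\to E_0$ with $E_0<E_\ast:=\hbar^2\pi^2/(m_e d^2)$, while every excited level satisfies $E^{\sigma=0}_n(L)\geq E_\ast$ for $n\geq 1$.

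For the first assertion I would bound the density carried by the excited states. Writing $\rho_{\mathrm{exc}}(L):=\frac{1}{L}\sum_{n\geq 1}(\ue^{\beta(E^{\sigma}_n(L)-\mu_L)}-1)^{-1}$ and using $\mu_L\leq E^{\sigma}_0(L)$, the gap produces the uniform bound $\rho_{\mathrm{exc}}(L)\leq\frac{1}{L}\sum_{n\geq 1}(\ue^{\beta(E^{\sigma}_n(L)-E^{\sigma}_0(L))}-1)^{-1}$. The key step is a density-of-states estimate: since $\Omega_L$ is, for large $L$, a quasi-one-dimensional strip of fixed width $\sim d$ and length $\sim L$ (the antisymmetry forces Dirichlet data on the diagonal, and the hard wall forces them on $|x-y|=d$), Dirichlet--Neumann bracketing against a product of a transverse Dirichlet problem on $(0,d/\sqrt{2})$ with threshold $E_\ast$ and a longitudinal one-dimensional Laplacian yields a per-unit-length density of states $g(E)\sim C(E-E_\ast)^{-1/2}$ at the band edge. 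Passing to the thermodynamic limit turns the excited-state sum into

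\begin{equation}
\rho_{crit}(\beta)=\int_{E_\ast}^{\infty}\frac{g(E)}{\ue^{\beta(E-E_0)}-1}\,\ud E\ ,
\end{equation}

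and because the gap $E_\ast-E_0>0$ keeps the Bose factor bounded at the band edge, the integrable $(E-E_\ast)^{-1/2}$ singularity makes this integral finite. Hence for $\rho>\rho_{crit}(\beta)$ the surplus density cannot be absorbed by the excited states, forcing a non-vanishing $\limsup$ of the ground-state occupation.

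For the second assertion I would invoke the half-line result Theorem~\ref{AbsenceDiscreteSpectrum}: once $\sigma$ is sufficiently negative for that hypothesis to apply, there is no discrete spectrum below $E_\ast$, so in the thermodynamic limit the ground-state level $E^{\sigma}_0(L)$ converges to the band edge $E_\ast$ and the gap closes. Repeating the computation above with $E_0$ replaced by $E_\ast$ now gives a band-edge integrand behaving like $(E-E_\ast)^{-3/2}$, which is no longer integrable; thus $\rho_{crit}(\beta)=\infty$, every finite density is accommodated by the continuum states as $\mu_L\uparrow E_\ast$, and no single eigenstate is macroscopically occupied. This is precisely the Hohenberg mechanism forbidding condensation at the bottom of a one-dimensional continuum.

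The main obstacle will be the density-of-states step, that is, making rigorous the convergence of the discrete occupation sum $\frac{1}{L}\sum_n(\ue^{\beta(E^{\sigma}_n(L)-\mu_L)}-1)^{-1}$ to the integral against $g$, uniformly in the chemical potential as $L\to\infty$. This requires two-sided Weyl-type asymptotics for the strip $\Omega_L$ that are uniform near the band edge, together with control of the edge effects introduced by the additional Dirichlet conditions on $x=L$ and $y=L$; the bracketing technique already used for the Weyl law~\eqref{Weylsing} and in the lemma above should supply the matching upper and lower bounds.
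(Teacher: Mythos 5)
Your overall strategy coincides with the paper's: the paper defers the proof to [Theorems~3.3 and 3.6, \cite{KernerElectronPairs}] and [Theorem~4.4, \cite{KernerInteractingPairs}], and those proofs run exactly the gap mechanism you describe, in analogy with Theorem~\ref{freeBEC}. Your treatment of the first assertion is sound: Lemma~\ref{ConvergenceEigenvalues} gives $E_0<E_\ast:=\hbar^2\pi^2/(m_ed^2)\leq E^{\sigma=0}_n(L)$ for $n\geq 1$, so with $\mu_L\leq E^{\sigma=0}_0(L)$ the Bose factor of every excited state stays bounded, and a bracketing (Weyl-type) bound on $\frac{1}{L}\#\{n:\ E^{\sigma=0}_n(L)\leq E\}$, uniform in $L$, makes the excited-state sum converge to a finite band-edge integral; any density in excess of that integral must then sit in the ground state. (For the mere existence of a finite $\rho_{crit}(\beta)$ you do not even need the precise $(E-E_\ast)^{-1/2}$ edge asymptotics; a one-sided Weyl upper bound suffices, so the obstacle you flag is lighter than you fear.)

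The second assertion, however, contains a genuine gap as written. Theorem~\ref{AbsenceDiscreteSpectrum} concerns the operator for \emph{distinguishable} particles on $L^2(\Omega)$, whose essential spectrum starts at $\pi^2/2d^2$ in the units of Theorem~\ref{EssentialSpectrumII}, i.e.\ at $\hbar^2\pi^2/(4m_ed^2)$ --- a factor of four below the antisymmetric threshold $E_\ast$. Its conclusion $\sigma_{d}(-\Delta^{d}_{\sigma,\alpha})=\emptyset$ only excludes eigenvalues below $\hbar^2\pi^2/(4m_ed^2)$; it says nothing about eigenvalues of the restriction to $L^2_a(\Omega)$ lying in $[\hbar^2\pi^2/(4m_ed^2),E_\ast)$, which are precisely what you must rule out to close the gap. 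The problem is visible already at $\sigma=0$: the bound state of Theorem~\ref{TheoremBoundElectrons} satisfies $E_0\geq 0.25\,\hbar^2\pi^2/(m_ed^2)$, so it is embedded in (or at the edge of) the full-space essential spectrum and is invisible to any statement about the full-space discrete spectrum. What you need is the antisymmetric analogue of Theorem~\ref{AbsenceDiscreteSpectrum}: that sufficiently strong boundary repulsion forces $\inf\sigma(-\Delta^{d}_{\sigma})=E_\ast$ on $L^2_a(\Omega)$, e.g.\ by comparison, as the repulsion strengthens, with the all-Dirichlet operator on the half-strip, which has no spectrum below $E_\ast$ by domain monotonicity. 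This is exactly the content of the cited [Theorem~3.6, \cite{KernerElectronPairs}]. Once that is in hand, Dirichlet bracketing gives $E^{\sigma}_0(L)\geq\inf\sigma(-\Delta^{d}_{\sigma})=E_\ast$, and your divergent band-edge integral argument (chemical potential pinned strictly below $E_\ast$, hence all occupations per unit length vanish) goes through. A cosmetic point you handled correctly by reinterpretation: the hypothesis ``$\|\sigma\|_{\infty}<\gamma$'' with $\gamma<0$ is vacuous as printed; the intended condition, consistent with Theorem~\ref{AbsenceDiscreteSpectrum}, is $\sigma(y)\leq\gamma<0$ pointwise, i.e.\ strong repulsion at the origin.
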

\begin{proof} For the proof see the proofs of [Theorem~3.3 and Theorem~3.6,\cite{KernerElectronPairs}]
as well as [Theorem~4.4,\cite{KernerInteractingPairs}]. 
\end{proof}
Theorem~\ref{CondensationPairsTheorem} shows that the pairs condense in the quantum wire given 
that there are no repulsive singular two-particle interactions localised at the origin. However, if the 
singular interactions are strong enough, the condensate in the ground state will be destroyed. Hence, 
if one identifies the superconducting phase with the presence of a condensate of pairs (here in an eigenstate 
for non-interacting pairs), Theorem~\ref{CondensationPairsTheorem} shows that the superconducting
phase in a quantum wire can be destroyed by singular two-particle interactions.

\subsection{The impact of surface defects on the superconducting phase}
In this final section we report on yet another application of the two-particle model introduced above which
was presented in~\cite{KernerSurfaceDefects}. More explicitly, we extend the model characterised by the
form~\eqref{FormCondensation} and the associated Hamiltonian $-\Delta^{d}_{\sigma}$ defined on the
anti-symmetric Hilbert space $L^2_{a}(\Omega)$. However, we will only consider the case where there 
are no singular, vertex-induced two-particle interactions at the origin, i.e., we set $\sigma=0$. 

In Section~\ref{SectionCondensationPairs} we investigated the (Bose-Einstein) condensation of pairs of
electrons. Theorem~\ref{CondensationPairsTheorem} shows that the pairs condense into the ground state
if no singular interactions are present and given the pair density $\rho > 0$ is large enough. Also, the 
presence of condensation is paramount for the existence of the superconducting phase. Real metals
are never perfect and there exist defects that affect the behaviour of electrons in the bulk. However,
besides defects in the bulk, a real metal will also exhibit defects on the surface, i.e., a real surface will not
be arbitrarily smooth. Note that the existence of a surface is, to a first approximation, not taken into
account in most discussions in solid state physics, since the solid is modelled to be infinitely extended 
in order to conserve periodicity. However, it has also long become clear that surface effects cannot be
neglected altogether~\cite{FossheimSuperconducting}. It is aim of this section to introduce a model to
investigate the effect of surface defects on the superconducting phase in the bulk of a quantum wire by 
investigating their effect on the condensation of electron pairs in the bulk. 

In order to take surface defects into account we have to extend our Hilbert space. More explicitly, we set 
\begin{equation}\label{HilbertSpaceSurface}
	\cH:=L^2_{a}(\Omega) \oplus \ell^2(\mathbb{N})\ ,
\end{equation}
where $\ell^2(\mathbb{N})$ is the space of square-summable sequences. Consequently, a given pair of 
electrons is described by a state of the form $\left(\varphi,f\right)^T$, with $\varphi \in L^2_{a}(\Omega)$
and $f \in \ell^2(\mathbb{N})$. This means that we model the surface defects as the vertices of the discrete
graph $\mathbb{N}$, which seems reasonable in a regime where the spatial extension of those defects 
is small compared to the bulk. 

The Hamiltonian of a free pair of electrons on $\cH$ is given by
\begin{equation}
	H_{p}:=-\Delta^{d}_{\sigma=0} \oplus \mathcal{L}(\gamma)\ ,
\end{equation}
where $\mathcal{L}(\gamma)$ is the (weighted) discrete Laplacian acting via
\begin{equation}
	(\mathcal{L}(\gamma)f)(n)=\sum_{m=1}^{\infty}\gamma_{mn}\left(f(m)-f(n)\right)\ ,
\end{equation}
with $(\gamma)_{mn}=:\gamma_{mn}=\delta_{|n-m|,1}e_n$ and $(e_n)_{n \in \mathbb{N}} \subset \mathbb{R}_+$. 

Now, since we are interested in the condensation phenomenon we have to restrict the system to a finite
volume as we have done in the previous section. More explicitly, the finite volume Hilbert space is given by 
\begin{equation}
	\cH_L:=L^2_{a}(\Omega_L) \oplus \mathbb{C}^{n(L)}\ ,
\end{equation} 
where $n(L) \in \mathbb{N}$ denotes the number of surface defects in the interval $[0,L]$. On this Hilbert
space one considers $H^L_p$, i.e., the restriction of $H_p$ to the finite-volume Hilbert space $\cH_L$. 
This operator has purely discrete spectrum and the eigenvalues are the union of those coming from 
$-\Delta^{d}_{\sigma}|_{L^2_{a}(\Omega_L)}$ (where this operator is defined as in the previous section)
and $\mathcal{L}(\gamma)|_{\mathbb{C}^{n(L)}}$. 

In order to formulate the model it is convenient to use the formalism of second quantisation~\cite{MR04}.
This means that one works on the Fock space over $\cH_L$, rather than on $\cH_L$ itself. The second 
quantisation of $H^L_p$ is given by
\begin{equation}\label{SecondQuantisationFreeHamiltonian}
	\Gamma(H^L_p)=\sum_{n=0}^{\infty}E^{\sigma=0}_{n}(L)a^{\ast}_{n}a_{n}+\sum_{k=1}^{n(L)}\lambda_k(L)b^{\ast}_{k}b_{k}\ ,
\end{equation}
where $(E^{\sigma=0}_{n}(L))_{n \in \mathbb{N}_0}$ are the eigenvalues of 
$-\Delta^{d}_{\sigma=0}|_{L^2_{a}(\Omega_L)}$ and $(\lambda_k(L))_{k=1,...,n(L)}$ are the eigenvalues
of $\mathcal{L}(\gamma)|_{\mathbb{C}^{n(L)}}$, counted with multiplicity. Furthermore, $(a^{\ast}_n,a_n)$
are the creation and annihilation operators of the states $\varphi_n \oplus 0$, where
$\varphi_n \in L^2_a(\Omega_L)$ are the corresponding eigenstate of 
$-\Delta^{d}_{\sigma=0}|_{L^2_{a}(\Omega_L)}$. In contrast, $(b^{\ast}_{k},b_{k})$ are the creation 
and annihilation operators of the states $0 \oplus f_n$, where $f_n \in \mathbb{C}^{n(L)}$ are the 
corresponding eigenstates of $\mathcal{L}(\gamma)|_{\mathbb{C}^{n(L)}}$. To obtain the full Hamiltonian
of the model we extend the free Hamiltonian~\eqref{SecondQuantisationFreeHamiltonian} and write
\begin{equation}\label{ModelHamiltonianSurface}
	H_L(\rho_s,\alpha,\lambda)=\Gamma(H^L_p)-\alpha\sum_{k=1}^{n(L)}b^{\ast}_{k}b_{k}+
	\lambda\rho_s(\mu_L,L)\sum_{k=1}^{n(L)}b^{\ast}_{k}b_{k}\ ,
\end{equation}
where $\alpha \geq 0$ describes the surface tension; $\lambda \geq 0$ is an interaction strength
associated with the repulsion of the pairs in the surface defects and $\rho_s(\mu_L,L)$ is the density of 
pairs on $\mathbb{C}^{n(L)}$, see the equation below. Note here that 
$\sum_{k=1}^{n(L)}b^{\ast}_{k}b_{k}$ is the (surface-) number operator whose expectation value equals 
the number of pairs in the surface defects. Also, the third term in~\eqref{ModelHamiltonianSurface} is
added to take into account repulsive interactions between electron pairs accumulating in the surface
defects which are expected since the surface defects are imagined to be relatively small. The explicit form
of this term follows from a simplification of standard mean-field considerations where the interaction term
is generally of the form $\lambda \hat{N}^2/V$, where $\hat{N}$ is the number operator and $V$ is the 
volume of the system. In other words, we have replaced $\hat{N}/V$ by the density $\rho_s(\mu_L,L)$ 
for which
\begin{equation}
	\rho_s(\mu_L,L):=\frac{\omega^{H_L(\rho_s,\alpha,\lambda)}_{\beta,\mu_L}
	\left(\sum_{k=1}^{n(L)}b^{\ast}_{k}b_{k}\right)}{n(L)}
\end{equation}
holds with $\omega^{H_L(\rho_s,\alpha,\lambda)}_{\beta,\mu_L}(\cdot)$ denoting the Gibbs state of the
grand-canonical ensemble at inverse temperature $\beta=1/T$ and chemical potential $\mu_L$. 

The advantage of the Hamiltonian $H(\rho_s,\alpha,\lambda)$ is that it can be rewritten as
\begin{equation}\label{RewrittenHamiltonian}
	H_L(\rho_s,\alpha,\lambda)=\sum_{n=0}^{\infty}E^{\sigma=0}_{n}(L)a^{\ast}_{n}a_{n}+
	\sum_{k=1}^{n(L)}\left(\lambda_k(L)+\lambda\rho_s(\mu_L,L)-\alpha\right) b^{\ast}_{k}b_{k} \ ,
\end{equation}
which yields an effective, non-interacting many-pair model with shifted eigenvalues for the discrete part.
Note that, in particular, \eqref{RewrittenHamiltonian} implies 
$\mu_L < \min\{\lambda\rho_s(\mu_L,L)-\alpha,E_0(L)  \}$, taking into account that $\lambda_1(L)=0$. 

The goal then is to investigate the macroscopic occupation of the ground state $\varphi_0 \oplus 0$ in a suitable
thermodynamic limit (see~\cite{KernerSurfaceDefects} for details) for the 
Hamiltonian~\eqref{RewrittenHamiltonian}. It turns out that a key quantity is the inverse density of surface
defects $\delta > 0$ defined as
\begin{equation}
	\delta:=\lim_{L \rightarrow \infty}\frac{L}{n(L)}\ .
\end{equation}
One obtains the following result.
\begin{theorem}\label{DestructionSurface} 
If 
\begin{equation}\label{ConditionSurfaceI}
		2\lambda \cdot \delta\cdot \rho < E_0+\alpha
\end{equation}
holds, no eigenstate $\varphi_n \oplus 0$ is macroscopically occupied in the thermodynamic limit. This means, in
particular, that the condensate in the bulk is destroyed for arbitrary pair densities whenever $\lambda=0$.
\end{theorem}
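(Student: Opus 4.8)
The plan is to exploit that, after the rewriting \eqref{RewrittenHamiltonian}, the Hamiltonian describes free bosonic pairs, so that in the grand-canonical Gibbs state $\omega_{\beta,\mu_L}$ the occupation of a bulk eigenstate $\varphi_n\oplus 0$ is
\[
\omega_{\beta,\mu_L}(a_n^\ast a_n)=\frac{1}{\ue^{\beta(E^{\sigma=0}_n(L)-\mu_L)}-1}\ .
\]
Macroscopic occupation in the sense of Definition~\ref{BECdef} forces this quantity to be of order $L$, which can happen only if $\mu_L$ approaches $E^{\sigma=0}_n(L)$ at rate $O(1/L)$. Hence it suffices to show that, under \eqref{ConditionSurfaceI}, the chemical potential stays bounded away from the entire bulk spectrum as $L\to\infty$. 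The crucial idea is to control $\mu_L$ \emph{without} solving the self-consistency equation for $\rho_s$.

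First I would record the two constraints already at hand. From \eqref{RewrittenHamiltonian} together with $\lambda_1(L)=0$ one has $\mu_L<\lambda\rho_s(\mu_L,L)-\alpha$, i.e.\ the chemical potential lies below the lowest shifted surface level. Independently, a particle-number bound caps the surface density a priori: since the number of pairs in the surface defects, $\rho_s(\mu_L,L)\,n(L)$, cannot exceed the total pair number $\rho L$, dividing by $n(L)$ and using $n(L)/L\to 1/\delta$ gives $\limsup_{L\to\infty}\rho_s(\mu_L,L)\le \rho\delta$. Combining the two yields $\limsup_{L\to\infty}\mu_L\le \lambda\rho\delta-\alpha$.

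Then I would feed in the hypothesis. By \eqref{ConditionSurfaceI} one has $\lambda\rho\delta<(E_0+\alpha)/2$, so the previous bound gives $E_0+\alpha-\lambda\rho\delta>(E_0+\alpha)/2>0$. Invoking Lemma~\ref{ConvergenceEigenvalues}, namely $E^{\sigma=0}_0(L)\to E_0$ while $E^{\sigma=0}_n(L)\ge \hbar^2\pi^2/(m_ed^2)>E_0$ for $n\ge 1$, one obtains for every fixed $n$ a uniform gap $E^{\sigma=0}_n(L)-\mu_L\ge c>0$ for all large $L$, with $c=(E_0+\alpha)/4$ working for $n=0$ and a larger constant for $n\ge 1$. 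Consequently each occupation is bounded, $\omega_{\beta,\mu_L}(a_n^\ast a_n)\le (\ue^{\beta c}-1)^{-1}$, independently of $L$, so that $\limsup_{L\to\infty}L^{-1}\omega_{\beta,\mu_L}(a_n^\ast a_n)=0$, which is precisely the absence of macroscopic occupation. For the addendum, setting $\lambda=0$ turns \eqref{ConditionSurfaceI} into $0<E_0+\alpha$, which holds for every $\rho$ because $E_0>0$ by Theorem~\ref{TheoremBoundElectrons} and $\alpha\ge 0$; thus the bulk condensate is destroyed at arbitrary density.

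The delicate point is that $\rho_s$ is defined self-consistently through the very Gibbs state it enters, so a fully rigorous argument must justify the thermodynamic-limit bookkeeping: existence of $\mu_L$ realising the prescribed total density, the convergence $n(L)/L\to 1/\delta$, and the uniform validity of the a priori surface-number bound along the limit. The factor $2$ in \eqref{ConditionSurfaceI} is exactly what allows one to bypass solving this self-consistency: the crude estimate $\rho_s\le\rho\delta$ combined with $\mu_L<\lambda\rho_s-\alpha$ already pins $\mu_L$ a fixed positive distance below $E_0$, so that the main obstacle is not to determine the true $\rho_s$ but only to make these uniform-in-$L$ estimates and the limit $n(L)/L\to 1/\delta$ rigorous.
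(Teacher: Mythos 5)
Your argument is correct, and it is worth noting at the outset that the review itself states Theorem~\ref{DestructionSurface} \emph{without} proof (it is quoted from \cite{KernerSurfaceDefects}), so the only ingredients available for comparison are precisely the ones you use: the rewriting \eqref{RewrittenHamiltonian}, the constraint $\mu_L<\min\{\lambda\rho_s(\mu_L,L)-\alpha,E_0(L)\}$ recorded after it, Lemma~\ref{ConvergenceEigenvalues}, and Definition~\ref{BECdef}. Your two key estimates are sound. Since $\sum_{k}b^{\ast}_{k}b_{k}$ is dominated by the total number operator, whose expectation is fixed at $\rho L$ by the choice of $\mu_L$, you get $\rho_s(\mu_L,L)\,n(L)\le\rho L$ and hence $\limsup_{L\to\infty}\rho_s(\mu_L,L)\le\rho\delta$; combined with $\mu_L<\lambda\rho_s(\mu_L,L)-\alpha$ this yields $\limsup_{L\to\infty}\mu_L\le\lambda\rho\delta-\alpha$, which under \eqref{ConditionSurfaceI} stays a fixed distance (at least $(E_0+\alpha)/2$ asymptotically) below $\lim_{L\to\infty}E^{\sigma=0}_0(L)=E_0$, and a fortiori below $E^{\sigma=0}_n(L)\ge\hbar^2\pi^2/(m_ed^2)$ for $n\ge1$. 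Uniformly bounded Bose occupation numbers divided by $L$ then vanish, and the $\lambda=0$ addendum follows since $E_0>0$ by Theorem~\ref{TheoremBoundElectrons} and $\alpha\ge0$.

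Two remarks. First, your closing claim that the factor $2$ in \eqref{ConditionSurfaceI} is ``exactly what allows one to bypass'' the self-consistency is not accurate: your own chain of inequalities only needs $\lambda\delta\rho<E_0+\alpha$, so your argument actually proves a statement stronger than the one quoted; the factor $2$ must originate in the bookkeeping of the original proof in \cite{KernerSurfaceDefects} (for instance a sharper or different treatment of the self-consistent $\rho_s$, or normalisation conventions there), not in the crude a priori bound you employ. Second, as you yourself flag, a fully rigorous version needs the existence of the finite-volume self-consistent pair $(\mu_L,\rho_s(\mu_L,L))$ realising the prescribed density $\rho$, and the existence of the limit $L/n(L)\to\delta$; both are part of the model's definition rather than of the theorem's content, so treating them as given is legitimate, but they should be stated as standing assumptions rather than left implicit.
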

Theorem~\ref{DestructionSurface} has the remarkable consequence that the condensation in the bulk is
destroyed for all pair densities $\rho > 0$ in the following cases: the pairs do not repel each other which
allows them to accumulate in the surface defects or the number of surface defects is very large. 

Finally, we obtain the following result.
\begin{theorem}\label{MainTheorem}
Assume that $\delta,\lambda > 0$. Then there exists a critical pair density $\rho_{crit}=\rho_{crit}(\beta,\alpha,\lambda) > 0$
such that for all pair densities $\rho > \rho_{crit}$ the state $\varphi_0 \oplus 0$ is macroscopically 
occupied in the thermodynamic limit. 
\end{theorem}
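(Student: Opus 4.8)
The plan is to exploit the fact that, after the rewriting \eqref{RewrittenHamiltonian}, the model is an effective \emph{non-interacting} gas of pairs distributed over two sectors: the bulk eigenstates $\varphi_n\oplus 0$ with energies $E^{\sigma=0}_n(L)$, and the surface sector $0\oplus f_k$ with shifted energies $\lambda_k(L)+\lambda\rho_s(\mu_L,L)-\alpha$. Because this is a free Bose gas, the occupation of each mode is governed by the Bose distribution, and condensation into $\varphi_0\oplus 0$ is detected, as in Definition~\ref{BECdef}, by showing that a macroscopic fraction of the total density accumulates in the ground mode. The essential ingredients are already in place: Lemma~\ref{ConvergenceEigenvalues} provides the convergence $E^{\sigma=0}_0(L)\to E_0$ together with a uniform gap $E^{\sigma=0}_n(L)\ge \hbar^2\pi^2/m_e d^2$ for $n\ge 1$, so that the bulk spectrum has an isolated ground state separated from the rest by a gap that survives the thermodynamic limit.

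First I would fix the total density $\rho$ by imposing the constraint that the sum over all modes of the Bose occupation numbers, divided by the volume, equals $\rho$; this determines the chemical potential $\mu_L$ implicitly, subject to the constraint $\mu_L<\min\{\lambda\rho_s(\mu_L,L)-\alpha,\,E_0(L)\}$ noted after \eqref{RewrittenHamiltonian}. Next I would split the density into the ground-mode contribution, the contribution of the excited bulk modes, and the surface contribution. The excited bulk modes are bounded above by a density that is \emph{saturated}: because of the spectral gap and the one-dimensional (interval) density of states for the two-particle operator, the integral $\frac{1}{L}\sum_{n\ge 1}(\ue^{\beta(E_n^{\sigma=0}(L)-\mu_L)}-1)^{-1}$ converges, as $L\to\infty$ and $\mu_L\uparrow E_0$, to a finite critical value $\rho_{\mathrm{crit}}^{\mathrm{bulk}}(\beta,\alpha,\lambda)$. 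The crucial point is that when $\delta,\lambda>0$ the surface sector \emph{cannot} absorb an unbounded density: the self-consistent repulsion term $\lambda\rho_s$ pushes the effective surface energies upward, so the surface occupation density $\frac{n(L)}{L}\rho_s\sim\delta^{-1}\rho_s$ stays bounded as $\mu_L$ approaches $E_0$. Hence the combined saturation density of the non-condensed modes is finite, and any excess density $\rho>\rho_{\mathrm{crit}}$ must be carried by the ground mode, giving $\limsup_{L\to\infty}\rho_0>0$.

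The main obstacle I anticipate is the self-consistency of the surface density $\rho_s(\mu_L,L)$, which appears both in the definition of the effective surface energies and as the quantity being solved for. I would treat this as a fixed-point problem in $\rho_s$ for each $L$: monotonicity of the Bose occupation in the effective energy should guarantee existence and uniqueness of $\rho_s(\mu_L,L)$, and then one must verify that, in the limit $\mu_L\uparrow E_0$, the self-consistent $\rho_s$ converges to a finite value and that the resulting surface density $\delta^{-1}\rho_s$ is bounded uniformly in $L$. This is precisely where the hypotheses $\delta>0$ and $\lambda>0$ enter: a strictly positive inverse defect density keeps the surface sector ``thin'' relative to the bulk, while strictly positive repulsion prevents a runaway surface occupation (contrast Theorem~\ref{DestructionSurface}, where $\lambda=0$ allows the defects to soak up arbitrary density and thereby destroy the bulk condensate). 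Once this uniform bound on the non-condensed density is established, the condensation conclusion follows from the standard free-Bose-gas argument, and the explicit critical density is read off as $\rho_{\mathrm{crit}}=\rho_{\mathrm{crit}}^{\mathrm{bulk}}+\delta^{-1}\rho_s^{\mathrm{sat}}$. For the remaining details I would refer to the proof of [Theorem~3.3,\cite{KernerElectronPairs}] and the corresponding argument in \cite{KernerSurfaceDefects}.
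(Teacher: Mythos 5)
Your proposal is correct and follows essentially the route the paper intends: the review itself states Theorem~\ref{MainTheorem} without proof, deferring to \cite{KernerSurfaceDefects}, and your argument — bulk saturation via the spectral gap of Lemma~\ref{ConvergenceEigenvalues}, plus the self-consistent fixed point for $\rho_s$ showing that for $\lambda>0$ the surface contribution $\delta^{-1}\rho_s$ saturates at a finite value (so that $\mu_L$ can approach $E_0$ and the excess density $\rho-\rho_{crit}$ is forced into $\varphi_0\oplus 0$) — is exactly the mechanism set up by \eqref{RewrittenHamiltonian} and contrasted in Theorem~\ref{DestructionSurface}. No gaps beyond details that are standard given the earlier results of the paper.
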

Theorem~\ref{MainTheorem} shows that the superconducting phase in the bulk can be recovered given the interaction strength $\lambda > 0$ is non-zero and, most importantly, given the number of surface impurities is not too large.

\vspace*{0.5cm}

{\small
\bibliographystyle{amsalpha}
\bibliography{Literature}}

\end{document}